\documentclass[
 prxintelligence,twocolumn,
 amsmath,amssymb,
 aps,nofootinbib,longbibliography,
]{revtex4-2}

\usepackage{graphicx}
\usepackage{booktabs}
\usepackage{mathtools}
\usepackage{amsthm}
\usepackage{bbm}
\usepackage{microtype}

\usepackage{color}
\usepackage{dcolumn}
\usepackage{bm}
\usepackage{mathtools}      
\usepackage{mathrsfs}       
\usepackage{float}
\usepackage[table]{xcolor}

\usepackage[colorlinks=true,
            linkcolor=blue,
            citecolor=blue,
            urlcolor=blue]{hyperref}
\usepackage[capitalize,noabbrev]{cleveref}

\theoremstyle{plain}
\newtheorem{theorem}{Theorem}
\newtheorem{proposition}[theorem]{Proposition}
\newtheorem{lemma}[theorem]{Lemma}
\newtheorem{corollary}[theorem]{Corollary}
\theoremstyle{definition}

\theoremstyle{remark}
\newtheorem{remark}[theorem]{Remark}

\newcommand{\cD}{{\mathcal{D}}}
\newcommand{\cE}{{\mathcal{E}}}

\newcommand{\cH}{{\mathcal{H}}}
\newcommand{\cB}{{\mathcal{B}}}

\newcommand{\cN}{{\mathcal{N}}}

\newcommand{\cL}{{\mathcal{L}}}
\newcommand{\cW}{{\mathcal{W}}}

\newcommand{\cM}{{\mathcal{M}}}
\newcommand{\cX}{{\mathcal{X}}}
\newcommand{\cY}{{\mathcal{Y}}}
\newcommand{\cZ}{{\mathcal{Z}}}
\newcommand{\cS}{{\mathcal{S}}}

\newcommand{\bx}{{\boldsymbol{x}}}

\newcommand{\bz}{{\boldsymbol{z}}}

\newcommand{\bh}{{\boldsymbol{h}}}

\newcommand{\bv}{{\boldsymbol{v}}}

\newcommand{\bth}{{\boldsymbol{\theta}}}
\newcommand{\bzet}{{\boldsymbol{\zeta}}}
\newcommand{\bC}{{\boldsymbol{C}}}

\newcommand{\bW}{{\boldsymbol{W}}}

\newcommand{\bP}{{\boldsymbol{P}}}

\newcommand{\ba}{{\boldsymbol{a}}}
\newcommand{\bb}{{\boldsymbol{b}}}

\newcommand{\bw}{{\boldsymbol{w}}}

\newcommand{\bbR}{{\mathbb{R}}}
\newcommand{\bbE}{{\mathbb{E}}}

\newcommand{\btheta}{{\boldsymbol{\theta}}}

\DeclareMathOperator{\tr}{\textup{Tr}}

\providecommand{\Tr}{}
\renewcommand{\Tr}{\operatorname{Tr}}
\newcommand{\ket}[1]{|#1\rangle}
\newcommand{\bra}[1]{\langle #1|}
\newcommand{\braket}[2]{\langle #1|#2\rangle}
\newcommand{\norm}[1]{\left\lVert #1 \right\rVert}

\newcommand{\wone}{\mathcal{W}_1}

\begin{document}

\title{Latent-Conditioned Parameterized Quantum Circuits as Universal Approximators for Distributions over Quantum States}


\author{Quoc Hoan Tran}
\email{tran.quochoan@fujitsu.com}
\affiliation{Quantum Laboratory, Fujitsu Research, Fujitsu Limited, Kawasaki, Kanagawa 211-8588, Japan}

\author{Koki Chinzei}
\affiliation{Quantum Laboratory, Fujitsu Research, Fujitsu Limited, Kawasaki, Kanagawa 211-8588, Japan}

\author{Yasuhiro Endo}
\affiliation{Quantum Laboratory, Fujitsu Research, Fujitsu Limited, Kawasaki, Kanagawa 211-8588, Japan}

\author{Hirotaka Oshima}
\affiliation{Quantum Laboratory, Fujitsu Research, Fujitsu Limited, Kawasaki, Kanagawa 211-8588, Japan}

\date{\today}


\begin{abstract}
Many applications in quantum simulation, quantum chemistry, and quantum machine learning require not a single quantum state but an ensemble of states characterizing the heterogeneity of a target system. Preparing such ensembles state-by-state is prohibitive in both
variational and fault-tolerant settings, motivating a generative-modeling approach. We introduce latent-conditioned parameterized quantum circuits
(LPQCs), a hybrid quantum-classical framework in which classical neural
networks map a latent variable sampled from a prior distribution to the
parameters of a parameterized quantum circuit. We prove that LPQCs are
universal approximators for probability measures over density
operators in the $1$-Wasserstein distance, extending classical universal
approximation theorems to the quantum-distribution setting. We additionally
introduce a multimodal latent prior and a mixture-of-experts circuit
architecture, and show empirically that the latent-conditioned parameterization alleviates the barren plateau problem during
optimization, a behavior for which we provide rigorous partial guarantees. Numerical experiments validate the framework on a synthetic multi-cluster ensemble
of mixed quantum states and on a QM9-derived ensemble of 3-D molecular structures. 
In these tasks, LPQC outperforms recent quantum generative baselines and
matches the generation quality of a classical neural-network baseline,
while requiring an output dimension that grows only linearly with the
number of qubits rather than exponentially. By leveraging classical expressivity in the
latent space, LPQCs offer a tractable route to quantum generative modeling.
\end{abstract}

\maketitle

\section{Introduction}\label{sec:intro}

Generative models for quantum data play a pivotal role in advancing quantum technologies. They approximate distributions over quantum states that are intractable classically. The central object is typically not a single state but an ensemble of
density operators, and such ensembles arise in several settings. In
quantum simulation, the finite-temperature equilibrium of a many-body
system is a Gibbs ensemble, whose preparation is a long-standing
challenge for quantum algorithms~\cite{terhal:2000:gibbs,poulin:2009:gibbs}. 
In quantum chemistry, the many distinct molecular geometries in datasets
such as QM9~\cite{ramakrishnan:2014:QM9} can each be represented as a
quantum state, so that the dataset induces an ensemble of quantum states.
Such a distribution, once learned, can support virtual screening, property
prediction, and data augmentation for hybrid quantum-classical
pipelines~\cite{rathi:2023:3dqae:encoding,wu:qvae-mole:2024}.
In quantum machine learning (QML), state ensembles serve as training data for tasks such as anomaly
detection~\cite{liu:2018:qad} and for warm-starting variational algorithms~\cite{zou:2025:npj:flow:vqe}.

Preparing such ensembles element by element is prohibitive in both near-term and fault-tolerant settings. Variational methods such as the Variational Quantum Eigensolver (VQE)~\cite{peruzzo:2014:VQE} require a fresh optimization loop with enormous circuit evaluation times for each new target. Fault-tolerant routines based on quantum phase estimation (QPE)~\cite{abrams:1999:qpe}, or block-encoded state preparation~\cite{gilyen:2019:qsvt} instead need deep circuits and substantial ancilla overhead per state, placing them beyond near-term reach. The exponential dimensionality of quantum state spaces compounds this challenge, calling for universal generative frameworks that capture multimodal structure across these settings. A generative model that, once trained, produces samples from the entire ensemble in a single forward pass would dramatically reduce this cost.

Parameterized quantum circuits (PQCs) form a cornerstone of QML, serving as ans\"atze for optimization, classification, and generative modeling. Comprising tunable gates optimized classically, PQCs power hybrid variational quantum algorithms~\cite{cerezo:2021:vqas:natrev} such as the VQE and the Quantum Approximate Optimization Algorithm (QAOA)~\cite{farhi:2014:qaoa}. Recent results establish PQCs as universal approximators of continuous functions~\cite{tran:2021:prl:uap} and multivariate distributions~\cite{barthe:2025:pqc:multivariate} via expectation-value sampling. However, applying PQCs to distributions over quantum states, such as ensembles of density matrices, remains underexplored and inefficient, hindered by barren plateaus~\cite{clean:2018:natcom:barren} where gradients vanish exponentially with system size.

We prove that latent-conditioned PQCs (LPQCs)---where classical neural networks (NNs) generate circuit parameters conditioned on latent variables---universally approximate any probability measure on the space of $n$-qubit density operators in the 1-Wasserstein distance. To enable practical training, we incorporate multimodal prior distributions in the latent space and soft attention mechanisms for multiple experts. We empirically further show that the latent prior mitigates barren plateaus by biasing initializations toward gradient-rich regions. Numerical simulations of multi-cluster mixed states and quantum-chemical datasets demonstrate the model's efficiency.

\begin{figure*}
  \centerline{\includegraphics[width=2\columnwidth]{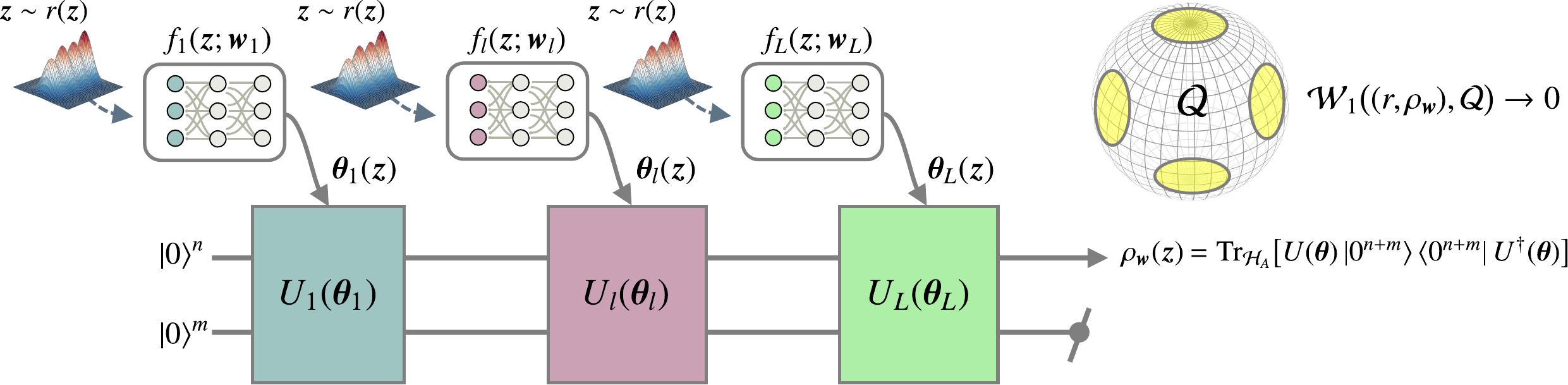}}
  \caption{Schematic of the LPQC framework for approximating the target distribution $Q$ on the density matrices space $\cD(\cH_S)$. Latent variable $\bz \sim r(\bz)$ is mapped via NNs $f_\ell(\bz; \bw_\ell)$ to gate parameters $\btheta_\ell(\bz)$, which define the unitary $U(\btheta(\bz)) = U_L(\btheta_L(\bz)) \cdots U_1(\btheta_1(\bz))$ on $\cH_S \otimes \cH_A$. In our setting, the same $\bz$ is used for all $f_\ell$. Applying $U(\btheta(\bz))$ to $\ket{\boldsymbol{0}}$ and tracing over the ancilla $\cH_A$ yields ensemble $(r, \rho_\bw)$, which can approximate any $Q$ arbitrarily well in the 1-Wasserstein distance $\cW_1$.}
  \label{fig:overview}
\end{figure*}

\section{Preliminaries}\label{sec:prelim}

Let $\cH_S\cong\mathbb{C}^{2^n}$ be the Hilbert space of an $n$-qubit system. Given the set $\cB(\cH_S)$ of bounded linear operators on $\cH_S$, we define the convex set of density operators as
$
    \cD\!(\cH_S):=\bigl\{\rho\in\cB(\cH_S)\mid \rho\ge 0,\;\Tr(\rho)=1\bigr\}.
$
For two ensembles $P,Q$ on $\cD\!(\cH_S)$ we employ the 1-Wasserstein distance:
\begin{equation}
  \wone(P,Q):=\inf_{\pi\in\Pi(P,Q)}\,\mathbb{E}_{(\rho,\sigma)\sim\pi}\Bigl[\tfrac12\,\norm{\rho-\sigma}_1\Bigr],
  \label{eq:W1}
\end{equation}
where $\Pi(P,Q)$ is the set of couplings having $P,Q$ as marginals.

\emph{The learning task.---}
Let $\cE = \{(p_j, \rho_j)\}_j$ denote an ensemble of $n$-qubit quantum states, where $\rho_j \in \cD(\cH_S)$ and $\{p_j\}$ are non-negative weights satisfying $\sum_j p_j = 1$. The ensemble induces a probability measure $Q$ on $\cD(\cH_S)$, namely $Q = \sum_j p_j \,\delta_{\rho_j}$ in the atomic case, or more generally any probability measure on $\cD(\cH_S)$ when the ensemble is continuous. Note that $Q$ is not the same object as the averaged density matrix $\overline{\rho} = \sum_j p_j \rho_j \in \cD(\cH_S)$ as averaging discards the spectral and multimodal structure.
Two ensembles with very different pure-state structures can share the same averaged density matrix.
Our goal is to construct from a finite sample of $\cE$ a generative model that approximates the underlying $Q$ in the
$1$-Wasserstein distance $\wone$ on $\cD(\cH_S)$.

\emph{Latent-conditioned PQCs (LPQC).---}
We fix a latent space $\cZ\subset\mathbb{R}^{d}$ endowed with a prior density $r(\bz)$ that is absolutely continuous with respect to the Lebesgue measure. To handle both pure states and mixed states, we introduce an ancillary Hilbert space $\cH_A \cong \mathbb{C}^{2^m}$ with $m$ qubits. An LPQC is a family of unitaries acting on $\cH_S\otimes\cH_A$ as
\begin{equation}
    U(\btheta (\bz))=U_L\bigl(\theta_L(\bz)\bigr)\cdots  U_1\bigl(\theta_1(\bz)\bigr), \quad \bz\in\cZ,
\end{equation}
where the primitive gates $U_\ell(\cdot)$ come from a fixed universal gate set.
Each angle vector $\theta_\ell(\bz) \in \bbR^{K_\ell}$ is produced by a classical neural network (NN) $f_\ell(\,\cdot\,;\bw_\ell):\cZ\to\bbR^{K_\ell}$ with parameters $\bw_\ell$ (Fig.~\ref{fig:overview}).
Given $\bw = (\bw_1,\ldots,\bw_L)$, sampling $\bz\sim r$ and applying $U(\theta (\bz))$ to $\ket{0^{n+m}}$ followed by partial trace over $\cH_A$ yields the ensemble
\begin{align}\label{eq:ensemble}
    (r,\rho_\bw) &:=\Bigl\{\,r(\bz),\;\rho_\bw(\bz):= \nonumber\\
    &\Tr_{\cH_A}\bigl[U(\theta (\bz))\ket{0^{n+m}}\bra{0^{n+m}}U^{\dagger}(\theta (\bz))\bigr]\,\Bigr\}_{\bz\in\cZ}.
\end{align}
We want to show that $(r,\rho_\bw)$ can approximate any target distribution $Q$ on $\cD\!(\cH_S)$ arbitrarily well in $\wone$.

\section{Universal Approximation Theorem (UAT) for Quantum States Distribution}\label{sec:uat}

We state and prove the main theoretical result.

\begin{theorem}[Approximate universality of LPQCs]\label{thm:main}
Let $Q$ be a probability measure on $\cD\!(\cH_S)$. Consider a compact latent space $\cZ \subset \mathbb{R}^{d}$, and a prior density $r$ that is positive, continuous, and absolutely continuous with respect to Lebesgue measure on $\cZ$. Then, for every $\varepsilon > 0$ there exist an LPQC architecture (a finite circuit depth and gate layout) and NN weights $\bw$ such that the ensemble $(r,\rho_\bw)$ defined in Eq.~\eqref{eq:ensemble} satisfies
\begin{equation}
    \wone\bigl(Q,\,(r,\rho_\bw)\bigr)\le \varepsilon \quad \textup{for some } \bw.
\end{equation}
\end{theorem}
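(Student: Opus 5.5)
Our plan is a three-stage approximation argument. Each stage contributes $\varepsilon/3$, and we combine them with the triangle inequality for $\wone$.

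\emph{Step 1 (discretize the target).} The space $\cD(\cH_S)$ is compact in trace norm, and the cost $\tfrac12\norm{\rho-\sigma}_1$ is bounded by $1$. Hence finitely supported measures are dense in $\wone$. Concretely, cover $\cD(\cH_S)$ by finitely many Borel cells $B_1,\dots,B_K$ of trace-distance diameter at most $\varepsilon/3$, pick $\rho_j\in B_j$, and set $Q_K=\sum_j Q(B_j)\delta_{\rho_j}$. Transporting each cell to its representative gives $\wone(Q,Q_K)\le\varepsilon/3$.

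\emph{Step 2 (realize the discrete target with a piecewise-constant circuit).} Fix $m=n$ ancilla qubits. Every $\rho_j$ then has a purification $\ket{\psi_j}\in\cH_S\otimes\cH_A$, so $\rho_j=\Tr_{\cH_A}[V_j\ket{0^{2n}}\bra{0^{2n}}V_j^\dagger]$ for some unitary $V_j$. Fix a single gate layout from the universal set that is dense in $U(2^{2n})$ as its angles vary. One option is a fixed sequence of parameterized rotations and CNOTs that exactly parameterizes $SU(2^{2n})$ via a Euler/KAK-type decomposition. With such a layout, each $V_j$ is approximated within $\varepsilon/6$ in operator norm by angle vectors $\btheta^{(j)}$. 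Because the partial trace is contractive in trace norm, the resulting reduced state lies within $\varepsilon/3$ of $\rho_j$ in trace distance.

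Next, partition $\cZ$ into measurable sets $A_1,\dots,A_K$ with $\int_{A_j} r=Q(B_j)$. This is possible because $r$ is absolutely continuous: sweep a hyperplane coordinate, using continuity of the cumulative mass. Define the ideal parameter map $\btheta^\star(\bz)=\btheta^{(j)}$ for $\bz\in A_j$. Its pushforward ensemble is within $\varepsilon/3$ of $Q_K$ in $\wone$, via the coupling that pairs $A_j$ with $\rho_j$.

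\emph{Step 3 (replace $\btheta^\star$ by neural networks).} This is the main obstacle: $\btheta^\star$ is discontinuous, so the classical uniform UAT does not apply directly. I would first shrink each $A_j$ slightly to a compact $C_j$, keeping positive separation between the $C_j$, with total $r$-mass of $\cZ\setminus\bigcup_j C_j$ at most $\delta$. Tietze extension gives a continuous $g$ on $\cZ$ equal to $\btheta^{(j)}$ on $C_j$. The classical universal approximation theorem for NNs on the compact set $\cZ$ then gives $\bw$ with $\sup_{\bz}\norm{f(\bz;\bw)-g(\bz)}\le\eta$.

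The map from angles to reduced state is Lipschitz: each gate $e^{-i\theta G}$ changes by at most $|\Delta\theta|\norm{G}$, and errors add across the $L$ gates. So choosing $\eta$ small makes $\rho_\bw(\bz)$ within $\varepsilon/6$ of the ideal state on $\bigcup_j C_j$. On the leftover region we use the trivial cost bound $1$.

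The identity coupling on $\bz$ then bounds the $\wone$ distance between the NN ensemble and the ideal ensemble by $\varepsilon/6+\delta$. Taking $\delta\le\varepsilon/6$ and applying the triangle inequality gives $\wone(Q,(r,\rho_\bw))\le\varepsilon$. The delicate bookkeeping lies here: the $r$-mass of the boundary layers must be controlled, which uses continuity of $r$ and compactness of $\cZ$. Note also that the circuit architecture depends only on $n$ and on the density of the gate layout, while $\varepsilon$ enters only through $K$ and the network size.
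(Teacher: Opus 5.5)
Your proof is correct, but it departs from the paper at the central step: making the piecewise-constant decoder something a network can approximate.

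\textbf{The paper's route.} It smooths at the level of generators. It writes each branch as $V_i=e^{iH_i}$ with $\norm{H_i}\le\pi$ and mollifies the step function $H(\bz)=h(z_1)$ in the slab coordinate, which is why it needs hyperplane boundaries. It then invokes a separate compilation lemma: any continuous family $e^{iH'(\bz)}$ is uniformly approximated by an LPQC. That lemma uses a Pauli expansion with coefficients continuous in $\bz$, first-order Trotterization, exact Clifford compilation of each Pauli rotation, and the classical UAT on the resulting continuous angle functions. Finally, it converts operator-norm error to trace distance via Fuchs--van de Graaf, at a square-root cost that forces $\delta=\varepsilon^2/9$.

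\textbf{Your route.} You interpolate directly in angle space. Only finitely many target unitaries need angle vectors $\btheta^{(j)}$, so you never need a continuous inverse of the parameterization; avoiding that inverse is the whole point of the Trotter construction. Tietze extension across compact inner sets $C_j$, plus the linear Lipschitz bound from angles to states, finishes the job.

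\textbf{What each buys.} Your route gives a circuit depth independent of $\varepsilon$, with all $\varepsilon$-dependence in $K$ and the classical network. It needs no Hermitian logarithms and no Trotter estimates. It also imposes no geometric constraint on the latent partition: inner regularity of the prior measure produces the $C_j$ for any measurable partition. So absolute continuity of $r$ suffices, and the continuity of $r$ you cite at the end is not actually used. The paper's route buys a stronger standalone expressivity statement, namely uniform approximation of an arbitrary continuous unitary-valued latent map, which it reuses for the soft-gating and mixture-of-experts arguments. It also compiles explicitly into $R_y$, $R_z$, and nearest-neighbor CNOTs, whereas you take an exact universal template (e.g., a cosine--sine or quantum Shannon decomposition) as given.

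\textbf{Two small things to tighten.} First, the cells $B_j$ must form a partition, not merely a cover. Second, going from $\norm{U(\btheta^{(j)})-V_j}$ small to the reduced states being close needs $\tfrac12\norm{\ket{\phi}\!\bra{\phi}-\ket{\psi}\!\bra{\psi}}_1\le\norm{\ket{\phi}-\ket{\psi}}$, not only contractivity of the partial trace.
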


\begin{proof}[Proof sketch]
    The proof proceeds in four steps that progressively bridge the gap between the arbitrary target distribution $Q$ on $\cD(\cH_S)$ and the ensemble $(r,\rho_\bw)$ produced by LPQC.
\begin{enumerate}
    \item \emph{Finite catalog approximation} (Sec.~\ref{sec:catalog}): we exploit the compactness of $\cD(\cH_S)$ in the trace-norm topology to replace $Q$ with a finite atomic ensemble $P^\ast=\{(q_i,\rho_i)\}_{i=1}^N$ that is $\varepsilon/3$-close to $Q$ in $\wone$.
    \item \emph{Tiling the latent space}: because the prior $r$ is nonatomic, we partition $\cZ$ into disjoint measurable slabs $\{R_i\}$ with topologically simple (hyperplane) boundaries whose probability masses match the catalog weights $q_i$, and we assign to each region a branch unitary $V_i$ that purifies the catalog state $\rho_i$ on $\cH_S \otimes \cH_A$. This defines a piecewise-constant target decoder $V(\bz)$.
    \item \emph{Smoothing}: since $V(\bz)$ is discontinuous at region boundaries, we mollify the underlying Hamiltonian $H(\bz)$ with a narrow kernel to obtain a continuous approximation $V'(\bz)$ that agrees with $V(\bz)$.
    \item \emph{Uniform PQC approximation} (Sec.~\ref{sec:express}): Lemma~\ref{lem:latent-universal} combines a Trotterized, exactly compiled Pauli decomposition of $V'(\bz)$ with the classical universal approximation theorem for the NNs $f_\ell$ to produce a circuit $U(\btheta(\bz))$ that is uniformly close to $V'(\bz)$ in operator norm. Translating this operator-norm bound into a trace-distance bound on the reduced states via the Fuchs--van de Graaf relation, and assembling a coupling that pairs each $\rho_\bw(\bz)$ with the catalog state of the cell containing $\bz$, yields a Wasserstein bound of $2\varepsilon/3$ between $(r,\rho_\bw)$ and $P^\ast$. The triangle inequality then gives the desired bound $\wone(Q,(r,\rho_\bw)) \le \varepsilon$.
\end{enumerate}
\end{proof}
The detailed steps of the proof are listed below.

\subsection{Finite Catalog Approximation}\label{sec:catalog}

The first step replaces the arbitrary target $Q$ by an atomic approximation.

\begin{lemma}[Finite catalog approximation]\label{lem:finite-support}
For every probability measure $Q$ on $\cD\!(\cH_S)$ and every $\varepsilon>0$ there exists a finite ensemble $P^\ast=\{(q_i,\rho_i)\}_{i=1}^N\subset \cD\!(\cH_S)$ such that $\wone\bigl(Q,P^\ast\bigr)\le \varepsilon$.
\end{lemma}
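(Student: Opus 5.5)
The plan is to quantize $Q$ by a finite partition of the state space and bound the transport cost through the obvious coupling. Equip $\cD(\cH_S)$ with the trace distance $d(\rho,\sigma)=\tfrac12\norm{\rho-\sigma}_1$. Since $\cH_S$ is finite dimensional, $\cD(\cH_S)$ is a closed and bounded subset of the finite-dimensional real vector space of Hermitian operators, hence compact. All norms on that space are equivalent, so it is compact in the trace-norm topology. In addition, $d\le 1$ on $\cD(\cH_S)$, so every probability measure there has finite first moment and $\wone$ is well defined.

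\emph{Step 1 (finite net and partition).} By compactness, choose finitely many states $\rho_1,\dots,\rho_N\in\cD(\cH_S)$ such that the open trace-distance balls $B(\rho_i,\varepsilon)$ cover $\cD(\cH_S)$. Convert this cover into a disjoint Borel partition by setting
\begin{equation*}
A_1=B(\rho_1,\varepsilon)\cap\cD(\cH_S),\qquad A_i=\bigl(B(\rho_i,\varepsilon)\cap\cD(\cH_S)\bigr)\setminus\textstyle\bigcup_{j<i}A_j .
\end{equation*}
Each $A_i$ is Borel and has diameter at most $2\varepsilon$. Every point of $A_i$ lies within distance $\varepsilon$ of $\rho_i$.

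\emph{Step 2 (weights and coupling).} Set $q_i:=Q(A_i)$, so that $\sum_i q_i=1$, and let $P^\ast=\sum_i q_i\delta_{\rho_i}$. Atoms with $q_i=0$ may be discarded. Define the map $T:\cD(\cH_S)\to\cD(\cH_S)$ by $T(\rho)=\rho_i$ for $\rho\in A_i$. It is Borel measurable because it is piecewise constant on a Borel partition. Its push-forward satisfies $T_\#Q=P^\ast$, so $\pi:=(\mathrm{id},T)_\#Q$ lies in $\Pi(Q,P^\ast)$.

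\emph{Step 3 (bound).} Using this coupling,
\begin{equation*}
\wone(Q,P^\ast)\le\int \tfrac12\norm{\rho-T(\rho)}_1\,dQ(\rho)=\sum_i\int_{A_i}d(\rho,\rho_i)\,dQ(\rho)\le\varepsilon .
\end{equation*}
This is exactly the claim of Lemma~\ref{lem:finite-support}.

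No step is genuinely hard. The only points needing care are measurability, which the disjointification handles, and verifying that $\pi$ is a legitimate coupling. Compactness of $\cD(\cH_S)$ is the one structural input. If a quantitative version is wanted, a volumetric covering bound on the $(4^n-1)$-dimensional set $\cD(\cH_S)$ gives $N\le(C/\varepsilon)^{4^n-1}$ for some constant $C$. This is not needed for the existence statement.
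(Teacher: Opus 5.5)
Your proof is correct and follows essentially the same route as the paper. Both use compactness of $\cD(\cH_S)$ to get a finite $\varepsilon$-net, assign each cell of a Borel partition its $Q$-mass, and bound $\wone$ by $\varepsilon$ through the deterministic coupling that sends each point of the $i$-th cell to $\rho_i$. The only difference is cosmetic: you disjointify the covering balls by a first-hit rule, while the paper uses Voronoi cells with index tie-breaking, and your explicit check that $(\mathrm{id},T)_\#Q\in\Pi(Q,P^\ast)$ is somewhat more careful than the paper's.
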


\begin{proof}
Since $\cD(\cH_S)$ can be equipped with the trace-norm topology, it is
compact and metrizable, hence totally bounded. With the metric
$d(\rho, \sigma) = \tfrac{1}{2}\norm{\rho - \sigma}_1$, for any
$\varepsilon > 0$ there exists a finite $\varepsilon$-net
$\{\rho_i\}_{i=1}^N$ such that the closed balls
$B(\rho_i; \varepsilon) := \{\sigma : d(\sigma, \rho_i) \le \varepsilon\}$
cover the space. We take these net points as the catalog states and
define the Voronoi cells that partition $\cD(\cH_S)$:
\begin{align}
C_i := \bigl\{\sigma : d(\sigma, \rho_i) &\le d(\sigma, \rho_j)\ \forall j, \nonumber\\
&\text{ and } d(\sigma,\rho_i) < d(\sigma,\rho_k)\ \forall k<i\bigr\}.
\end{align}
The weight assigned to each catalog state
is the $Q$-mass of its cell, $q_i := Q(C_i)$. Here, indices with $q_i = 0$ are
discarded. This defines the finite ensemble
$P^* = \{(q_i, \rho_i)\}_{i=1}^N$, with $\sum_i q_i = 1$ since the cells
partition the space.

To bound $\wone(Q, P^*)$, consider the coupling $\pi$ that samples
$\sigma \sim Q$ and maps it to $\rho_i$ whenever $\sigma \in C_i$.
Its second marginal places mass $Q(C_i) = q_i$ on each $\rho_i$, so
$\pi \in \Pi(Q, P^*)$. For every pair in the support, $\sigma$ and $\rho_i$
lie in the same cell, so $d(\sigma, \rho_i) \le \varepsilon$ by
construction. Plugging this coupling into Eq.~\eqref{eq:W1} yields
$\wone(Q, P^*) \le \varepsilon$.
\end{proof}

\subsection{Classical-Quantum Expressivity}\label{sec:express}

To convert the finite catalog into the output of a single latent-conditioned circuit, we need two ingredients: the ability to choose distinct latent regions that sample each catalog element with the desired probability, and the ability to implement within each region a unitary that prepares the associated state. The following lemma settles the latter one.

\begin{lemma}[Universal approximation of latent maps]\label{lem:latent-universal}
Let $\cZ\subset\mathbb{R}^d$ be compact and let $V(\bz) = e^{iH(\bz)}$, where $H\colon\cZ\to\cB(\cH_S\otimes\cH_A)$ is a continuous Hermitian-operator-valued map with $\sup_{\bz}\norm{H(\bz)}\le \pi$. Suppose
\begin{enumerate}
  \item Each map $f_\ell(\,\cdot\,;\bw_\ell)$ is a classical NN whose activation function satisfies the hypotheses of the classical universal approximation theorem~\cite{hornik:1991:apprx} for uniform approximation of continuous functions on compact sets (e.g., any nonconstant, bounded, continuous activation such as $\tanh$).
  \item The primitive gate set contains the parameterized single-qubit rotations $R_y(\vartheta)$, $R_z(\vartheta)$ on every qubit and the $\mathrm{CNOT}$ gate on nearest-neighbor pairs (as in the HEA of Eq.~\eqref{eqn:HEA}). In particular, it generates $\mathrm{SU}(2^{n+m})$.
\end{enumerate}
Then for every $\delta>0$ there exist a circuit depth $L = L(\delta)$, a fixed gate layout, and weights $\bw:=\{\bw_\ell\}_\ell$ such that
\begin{equation}
    \sup_{\bz\in\cZ}\,\norm{U(\btheta (\bz)) - V(\bz)} \le \delta,
\end{equation}
where $\theta_\ell(\bz):=f_\ell(\bz;\bw_\ell)$ and $U(\btheta (\bz))$ is the full circuit obtained by composing the primitive gates. In particular, $\mathbb{E}_{\bz\sim r}\norm{U(\btheta(\bz)) - V(\bz)}\le\delta$ for every probability density $r$ on $\cZ$.
\end{lemma}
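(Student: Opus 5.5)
The plan is to factor the target $V(\bz)=e^{iH(\bz)}$ into a fixed-layout product of elementary parameterized gates whose angles are continuous functions of $\bz$. Each angle function is then replaced by a neural network using the classical universal approximation theorem~\cite{hornik:1991:apprx}. Errors are controlled by the fact that the circuit is Lipschitz in its angles.

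\emph{Step 1 (Pauli expansion and Trotterization).} Write $H(\bz)=\sum_{P} c_P(\bz)\,P$ over the $4^{n+m}$ Pauli strings, with $c_P(\bz)=\Tr(PH(\bz))/2^{n+m}$ real and continuous on $\cZ$, and $|c_P|\le\pi$. By the Lie--Trotter product formula with $K$ steps,
\begin{equation}
\Bigl\| e^{iH(\bz)}-\Bigl(\prod_P e^{i c_P(\bz)P/K}\Bigr)^{K}\Bigr\| \le \frac{C}{K},
\end{equation}
where $C$ depends only on $\sum_P|c_P|\le 4^{n+m}\pi$. Because this bound is uniform in $\bz$, we can pick $K$ so that the Trotter error is at most $\delta/2$ for every $\bz$ simultaneously.

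\emph{Step 2 (exact compilation into the HEA gate set).} Each Pauli exponential $e^{i\alpha P}$ compiles exactly and with a fixed layout into the primitive gates. The layout consists of fixed basis-change single-qubit rotations (using $R_y(\pm\pi/2)$ and $R_z$ to map $X$ and $Y$ to $Z$), a CNOT ladder along the nearest-neighbor chain (routing long-range CNOTs through SWAPs made of three CNOTs), a single $R_z(-2\alpha)$ on the last qubit, and the inverse ladder and basis changes. Only one gate angle depends on $\bz$, namely $\vartheta(\bz)=-2c_P(\bz)/K$, and this is continuous. All other angles are constants, which a network realizes trivially, for example by setting the output weights to zero and using a bias. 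This gives a fixed depth $L$ and continuous angle maps $\theta^\star_\ell(\bz)$ with $\|W(\theta^\star(\bz))-V(\bz)\|\le\delta/2$, where $W$ denotes the compiled Trotter circuit.

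\emph{Step 3 (NN approximation and Lipschitz stability).} Each rotation satisfies $\|R(\vartheta)-R(\vartheta')\|\le|\vartheta-\vartheta'|/2$. By telescoping and unitary invariance of the operator norm, a product of $L$ gates gives $\|U(\btheta)-U(\btheta')\|\le\frac12\sum_\ell\|\theta_\ell-\theta'_\ell\|_1$. The universal approximation theorem supplies weights $\bw_\ell$ with $\sup_{\bz}\|f_\ell(\bz;\bw_\ell)-\theta^\star_\ell(\bz)\|_\infty\le \delta/(L\max_\ell K_\ell)$, which bounds this term by $\delta/2$ uniformly. The triangle inequality then yields the supremum bound $\delta$. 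The expectation statement follows because the integrand is pointwise at most $\delta$ and $r$ is a probability density.

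\emph{Main obstacle.} The delicate point is uniformity. The Trotter step count must be chosen independently of $\bz$, which is why the bound $\sup\|H\|\le\pi$ (hence bounded Pauli coefficients) and compactness of $\cZ$ are used. The compilation must also be branch-free, so the gate layout cannot depend on $\bz$. Using a Pauli-exponential decomposition instead of a generic Solovay--Kitaev or Euler decomposition avoids the discontinuous angle choices that a generic decomposition of $\mathrm{SU}(2^{n+m})$ would require. I would verify carefully that the CNOT-ladder compilation is exact on a nearest-neighbor line.
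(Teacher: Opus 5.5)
Your route matches the paper's almost step for step. Both arguments use a Pauli expansion with continuous coefficients and a first-order Trotter product whose step count is fixed uniformly in $\bz$. Each Pauli rotation is compiled exactly as a Clifford-conjugated $R_z$ with a single $\bz$-dependent angle, constant angles come from output biases, Hornik's theorem handles the continuous angle functions, and a telescoping Lipschitz bound closes the argument. The differences are cosmetic. Your estimate $\sum_P|c_P|\le 4^{n+m}\pi$ is cruder than the paper's $2^{n+m}\pi$ (Parseval plus Cauchy--Schwarz), but that only enlarges the Trotter count. Doing the basis changes with $R_y(\pm\pi/2)$, $R_z(\pm\pi/2)$ from the gate set itself is slightly cleaner than the paper's $\mathsf{H},\mathsf{S}$ compiled up to a phase.

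One step fails, however, and the paper's proof shares the failure. Your expansion runs over all $4^{n+m}$ strings, including $P=I$. Applied to $I$ (empty support, no ladder), your compilation recipe produces $R_z(-2\alpha)=e^{i\alpha Z}$, not the $\bz$-dependent scalar $e^{i\alpha}I$. No circuit over the given gate set can supply that scalar.
\begin{itemize}
\item Embedded $R_y$, $R_z$ gates have determinant $1$ and embedded CNOTs have determinant $\pm1$, so every compiled $U$ has $\det U=\pm1$. By contrast, $\det V(\bz)=e^{i\Tr H(\bz)}$.
\item If $\norm{U-V}\le\delta$, then $\pm e^{-i\Tr H(\bz)}=\det(V^\dagger U)$ is a product of $2^{n+m}$ eigenvalues, each within $\delta$ of $1$. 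This forces $\Tr H(\bz)$ to lie within $O(2^{n+m}\delta)$ of $\pi\mathbb{Z}$.
\item Concretely, take $n+m=1$ and $H\equiv\frac{\pi}{2}I$, which is constant with $\norm{H}\le\pi$. Every $U\in\mathrm{SU}(2)$ then satisfies $\norm{U-iI}\ge\sqrt2$, so the operator-norm conclusion is false as stated.
\end{itemize}
The paper's remark that the identity term is absorbed into ``one fixed $R_z$ pair'' does not rescue this. The coefficient $c_I(\bz)$ varies with $\bz$, and products of $R_z$ gates yield no scalars other than $\pm I$.

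The repair is to drop $c_I$ and prove $\sup_{\bz}\norm{U(\btheta(\bz))-e^{-ic_I(\bz)}V(\bz)}\le\delta$. This is the lemma up to a $\bz$-dependent global phase, or equivalently the lemma for traceless $H$. Your Steps 1--3 then go through verbatim. This weaker form is all Theorem~\ref{thm:main} needs. Its step (iv) uses $U$ and $V$ only through $|\braket{\phi}{\psi}|\ge 1-\norm{U-e^{i\alpha}V}$, which is insensitive to the phase.
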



\begin{proof}
\emph{Step 1: Pauli expansion.} Expand $H(\bz)$ in the Pauli basis $\{P_a\}_{a=1}^{G}$ of $\cB(\cH_S\otimes\cH_A)$, $G = 4^{n+m}$ (excluding the identity, which contributes only a global phase that we absorb into one fixed $R_z$ pair):
\begin{equation*}
    H(\bz) = \sum_{a=1}^{G} c_a(\bz)\,P_a, \qquad c_a(\bz) = 2^{-(n+m)}\Tr\!\bigl[H(\bz)P_a\bigr].
\end{equation*}
Each coefficient $c_a\colon\cZ\to\mathbb{R}$ is continuous (as a bounded linear functional of the continuous map $H$).
Since $\Tr(H^2) = \sum_{a,b}c_ac_b\Tr(P_aP_b) = 2^{n+m} \sum_a c^2_a$, it gives the Parseval identity:
\begin{align}
    \sum_a c_a^2 = \dfrac{1}{\sqrt{G}}\Tr(H^2).
\end{align}
Since $\norm{H}\le \pi$, every eigenvalue $\lambda_j$ of $H$ satisfies $|\lambda_j | \leq \pi$, hence $\Tr(H^2) = \sum_j \lambda^2_j \leq \sqrt{G}\pi^2$. Therefore, $\sum_a c_a^2 \leq \pi^2$
We define $\Gamma := \sup_{\bz}\sum_a |c_a(\bz)|$,  by the Cauchy--Schwarz inequality we have a constant bound for $\Gamma$:
\begin{align}
    \Gamma \le \sqrt{G} \left( \sum_a c_a^2 \right)^{1/2} \leq 2^{n+m} \pi.
\end{align}

\emph{Step 2: Trotterized compilation with continuous angle functions.} For $k\in\mathbb{N}$ define the first-order Trotter circuit
\begin{equation}
    T_k(\bz) := \Bigl(\prod_{a=1}^{G} e^{i c_a(\bz) P_a / k}\Bigr)^{\!k}.
\end{equation}
By the standard first-order Trotter error bound~\cite{childs:2021:trotter:prx},
\begin{equation}
    \norm{e^{iH(\bz)} - T_k(\bz)} \le \frac{1}{2k}\Bigl(\sum_a |c_a(\bz)|\Bigr)^{\!2} \le \frac{\Gamma^2}{2k},
\end{equation}
so choosing $k := \lceil \Gamma^2/\delta \rceil$ gives $\sup_\bz\norm{V(\bz) - T_k(\bz)} \le \delta/2$. 

Consider a single factor $e^{i\varphi P_a}$, where $\varphi=c_a(\bz)/k$. We show it is realized exactly by the primitive gate set. Let $\mathsf{H}$ and $\mathsf{S}$ denote the Hadamard and phase gates. For a nonidentity Pauli string
$P_a=\bigotimes_{j}\sigma_j$ with $\sigma_j\in\{I,X,Y,Z\}$, we take
a single-qubit Clifford on each qubit of the support that rotates its tensor factor onto
        $Z$ ($\mathsf{H}$ sends $X\mapsto Z$; $\mathsf{H}\mathsf{S}^\dagger$ sends $Y\mapsto Z$;
        $Z$ is left fixed), followed by
a ladder of $\mathrm{CNOT}$s that fuses the resulting string $\bigotimes_{j\in\mathrm{supp}}Z_j$
        onto a single qubit, via $\mathrm{CNOT}_{ct}\,(Z_c Z_t)\,\mathrm{CNOT}_{ct}=Z_t$.
Their product is a Clifford $C_a$ with
\begin{align}
    C_a\,P_a\,C_a^\dagger = s_a\,Z_1,\qquad s_a\in\{+1,-1\}.
\end{align}
Using the conjugation rule $C^\dagger e^{i\theta Q}C=e^{i\theta\,C^\dagger Q C}$
(for any unitary $C$ and Hermitian $Q$)
with the convention $R_z(\theta):=e^{-i\theta Z/2}$, we have
\begin{align}
    e^{i\varphi P_a}
      = C_a^\dagger\, e^{i\varphi s_a Z_1}\, C_a
      = C_a^\dagger\, R_z(-2 s_a\varphi)_1\, C_a .
\end{align}

Compiling the abstract Clifford $C_a$ into the primitive gate set realizes it only up to a global
phase, $\tilde{C}_a=e^{i\alpha_a}C_a$. The gate-by-gate inverse of the compiled circuit realizes
$\tilde{C}_a^\dagger=e^{-i\alpha_a}C_a^\dagger$, so the conjugating phases commute through the middle factor as scalars and cancel.

Concatenating the $r$ Trotter layers, each built from $G$ exactly-compiled Pauli rotations,
expresses $T_r(\bz)$ entirely in the primitive gate set with depth
\begin{equation}
    L(\delta)=O\!\bigl(r\,G\,\mathrm{poly}(n+m)\bigr),
\end{equation}
where the $\mathrm{poly}(n+m)$ factor counts the Clifford gates and $\mathrm{SWAP}$ networks inside
each $C_a$. Every gate angle in $T_r(\bz)$ is either
a constant --- the fixed Clifford-compilation angles, reproduced exactly by the output biases of the networks $f_\ell$; or
a continuous function of $\bz$ --- the variable single-qubit rotations
        $R_z(g_j(\bz))_1$ with
       $g_j(\bz)=-\frac{2s_a c_{a(j)}(\bz)}{k},$
inheriting continuity from the coefficients $c_a(\bz)$ of Step~1.

This continuity is exactly the hypothesis required to invoke the classical universal approximation.

\emph{Step 3: NN approximation of the angle functions.} By the universal approximation theorem for continuous functions on compact sets~\cite{hornik:1991:apprx}, for each variable angle there exist weights $\bw_\ell$ such that
\begin{align}
    \sup_{\bz\in\cZ}\,\lvert f_\ell(\bz;\bw_\ell)-g_\ell(\bz)\rvert\le\delta/(2L).
\end{align}

We explicate why a perturbation on each gate angle translates into an operator-norm perturbation on the whole circuit.
For clarity, we write $U_\ell(\vartheta)=e^{-i\vartheta H_\ell}$ with $\norm{H_\ell}\le 1$ (for $R_y, R_z$ the generators are $Y/2$, $Z/2$, of norm $1/2$). Given two parameter vectors $\Theta=(\vartheta_1,\dots,\vartheta_L)$ and $\Theta'=(\vartheta_1',\dots,\vartheta_L')$, we define the corresponding circuits
$U(\Theta)=U_L(\vartheta_L)\cdots U_1(\vartheta_1)$ and $U(\Theta')=U_L(\vartheta_L')\cdots U_1(\vartheta_1')$.
A telescoping expansion yields
\begin{align}
    U(\Theta)-U(\Theta')=\sum_{k=1}^L U_L(\vartheta_L)\cdots U_{k+1}(\vartheta_{k+1})\nonumber\\
    [U_k(\vartheta_k)-U_k(\vartheta_k')]U_{k-1}(\vartheta_{k-1}')\cdots U_1(\vartheta_1').
\end{align}
Because every factor except the bracket is unitary (hence norm $1$), we can bound
\begin{align}
\norm{U(\Theta)-U(\Theta')}\le\sum_{k=1}^L\norm{U_k(\vartheta_k)-U_k(\vartheta_k')}\le \sum_{k=1}^L |\vartheta_k-\vartheta_k'|.
\end{align}
The second inequality follows from Lemma~\ref{lem:lipschitz-exp} applied to the primitive-gate exponentials, with the generator norm bound $\norm{H_\ell}\le 1$.

Taking $\Theta=\btheta(\bz) = (f_1(\bz; \bw_1), \ldots,f_L(\bz; \bw_L))$ and $\Theta'=g(\bz)= (g_1(\bz), \ldots,g_L(\bz))$ (with constant angles reproduced exactly by biases, contributing zero error), we have, for every $\bz\in\cZ$, $U(\Theta') = T_k(\bz)$, then
\begin{align}
    \norm{U(\btheta(\bz))-T_k(\bz)} = \norm{U(\Theta)-U(\Theta')}\nonumber\\
    \le  \sum_{\ell=1}^L  | f_\ell(\bz; \bw_\ell) - g_\ell(\bz)|  \le L\cdot\frac{\delta}{2L}=\frac{\delta}{2}.
\end{align}
Combining with Step 2 via the triangle inequality gives $\sup_\bz\norm{U(\btheta(\bz)) - V(\bz)} \le \delta/2 + \delta/2 = \delta$, and the lemma follows.
\end{proof}

A useful corollary is the Lipschitz continuity of the matrix exponential, which we record for completeness.

\begin{lemma}[Lipschitz continuity of $e^{iX}$]\label{lem:lipschitz-exp}
For Hermitian $A,B\in\cB(\cH)$,
$\norm{e^{iA}-e^{iB}}\le \norm{A-B}$.
\end{lemma}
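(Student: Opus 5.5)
The plan is to use the standard Duhamel (variation-of-parameters) interpolation between the two exponentials and then bound the resulting integral by the operator norm. All we need is that $A$ and $B$ are Hermitian, so every exponential $e^{isA}$, $e^{isB}$ is unitary with operator norm $1$. The norm here is the operator norm, as used in Lemma~\ref{lem:latent-universal}.

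First define, for $s\in[0,1]$, the interpolating path
\begin{equation*}
  F(s) := e^{isA}\,e^{i(1-s)B}.
\end{equation*}
Its endpoints are $F(0)=e^{iB}$ and $F(1)=e^{iA}$. Since $F$ is smooth in $s$, the fundamental theorem of calculus gives $e^{iA}-e^{iB}=\int_0^1 F'(s)\,ds$.

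Next compute the derivative by the product rule. The factor $A$ commutes with $e^{isA}$, and $B$ commutes with $e^{i(1-s)B}$, so
\begin{equation*}
  F'(s)= e^{isA}\,(iA)\,e^{i(1-s)B} - e^{isA}\,(iB)\,e^{i(1-s)B} = i\,e^{isA}(A-B)\,e^{i(1-s)B}.
\end{equation*}

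Finally bound the norm. By submultiplicativity and unitary invariance of the operator norm, $\norm{F'(s)}\le\norm{A-B}$ for every $s$. Integrating over $[0,1]$ then yields $\norm{e^{iA}-e^{iB}}\le\norm{A-B}$.

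The only mildly delicate point is justifying the derivative formula for noncommuting $A$ and $B$. The ordering of factors in $F$ is chosen precisely so that each derivative only hits an exponential commuting with its own generator, so noncommutativity never enters. In finite dimensions this is immediate from the power series, so I expect no real obstacle.

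For the application in Lemma~\ref{lem:latent-universal}, take $A=-\vartheta H_\ell$ and $B=-\vartheta' H_\ell$. This gives $\norm{U_\ell(\vartheta)-U_\ell(\vartheta')}\le|\vartheta-\vartheta'|\,\norm{H_\ell}\le|\vartheta-\vartheta'|$, which is the inequality used there.
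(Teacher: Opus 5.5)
Your proposal is correct and follows the paper's argument: the paper also writes $e^{iA}-e^{iB}=i\int_0^1 e^{i(1-t)A}(A-B)e^{itB}\,dt$, which is your integral after the substitution $s=1-t$, and then bounds the integrand using unitarity of the exponentials. Your explicit product path $F(s)=e^{isA}e^{i(1-s)B}$ is in fact the correct object to differentiate. The paper's intermediate step instead differentiates $e^{i((1-t)A+tB)}$, which for noncommuting $A,B$ does not give that integrand and integrates to $e^{iB}-e^{iA}$, so yours is the cleaner justification of the same identity.
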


\begin{proof}
Write
$
  e^{iA}-e^{iB}=\int_0^1\!\frac{d}{dt}\Bigl(e^{i\bigl((1-t)A+tB\bigr)}\Bigr)\,dt
  = i\int_0^1 e^{i(1-t)A}\,(A-B)\,e^{itB}\,dt.
$
Taking the operator (spectral) norm and noting $\norm{e^{iX}}=1$ for Hermitian $X$ yields the desired inequality.
\end{proof}

\subsection{Proof for the UAT}\label{UAT:proof}
We combine the results from subsections~\ref{sec:catalog} and~\ref{sec:express} to present the final proof of the UAT.

\begin{proof}[Proof of Theorem~\ref{thm:main}]
\textbf{(i) Catalog approximation and tiling the latent space.}
We invoke Lemma~\ref{lem:finite-support} to obtain a finite ensemble $P^\ast=\{(q_i,\rho_i)\}_{i=1}^N$ satisfying the following bound:
\begin{equation}\label{eqn:catalog}
\wone(Q,P^\ast)\le \varepsilon/3.
\end{equation}
Next, we construct pairwise disjoint measurable regions $R_i\subset\cZ$ whose probability masses satisfy $\int_{R_i} r(\bz)\,d\bz = q_i$. It is important for step (iii) below that these regions can be chosen with topologically simple boundaries: an arbitrary measurable partition guaranteed by nonatomicity alone~\cite{fremlin2011measure} could have boundaries of positive Lebesgue measure, which would invalidate the mollification argument. We therefore give an explicit construction. Let $\mu(A) = \int_A r(\bz)d\bz$ denote the (nonatomic) probability measure induced by $r$, write $\bz=(z_1,\ldots,z_d)$, and define the cumulative function $F(s) := \mu\bigl(\{\bz\in\cZ : z_1 \le s\}\bigr)$. Since $r$ is a continuous density, $F$ is continuous and nondecreasing with $F(-\infty)=0$ and $F(+\infty)=1$, so by the intermediate value theorem we may choose thresholds $s_0 \le s_1 \le \cdots \le s_N$ with $F(s_0)=0$, $F(s_i)-F(s_{i-1})=q_i$, and define the slabs
\begin{equation*}
    R_i := \{\bz\in\cZ : s_{i-1} < z_1 \le s_i\}, \qquad i=1,\ldots,N,
\end{equation*}
which satisfy $\mu(R_i)=q_i$ and partition $\cZ$ up to a $\mu$-null set. Each boundary $\partial R_i$ is contained in the union of two hyperplane slices $\{z_1 = s_{i-1}\}\cup\{z_1=s_i\}$, each of which is $\mu$-null. We define the bump functions $\chi_i(\bz):=\mathbbm{1}_{R_i}(\bz)$ so that $\sum_i \chi_i(\bz)=1$ almost everywhere.

\textbf{(ii) Branch unitaries.} We choose for each $i$ a purification $\ket{\psi_i}\in\cH_S\otimes\cH_A$ such that $\Tr_{\cH_A}\ket{\psi_i}\bra{\psi_i}=\rho_i$. Consider a unitary $V_i\in U(2^{n+m})$ such that $V_i\ket{0^{n+m}}=\ket{\psi_i}$. Let $H_i$ be any Hermitian logarithm of $V_i$, i.e.~$V_i=e^{iH_i}$ with $\norm{H_i}\le \pi$. We define the target decoder map
\begin{equation}
  V(\bz):=e^{iH(\bz)}, \qquad H(\bz):=\sum_{i=1}^N \chi_i(\bz)\,H_i.
  \label{eq:decoder}
\end{equation}
Because each $\chi_i$ is bounded and $H_i$ is Hermitian, $H(\cdot)$ is measurable and $\norm{H(\bz)}\le\pi$. Moreover, for $\bz\in R_i$ we have $\chi_i(\bz)=1$ and $\chi_{j\neq i}(\bz)=0$, whence $V(\bz)=V_i$.

\textbf{(iii) Smoothing for continuity.} $V(\bz)$ is piecewise constant, hence discontinuous at region boundaries. To enable approximation, we introduce a smoothed continuous approximation $V'(\bz)$. Because of the slab construction in step (i), $H(\bz)$ depends on $\bz$ only through the first coordinate $z_1$: writing $H(\bz)=h(z_1)$ with the step function $h(s)=H_i$ for $s\in(s_{i-1},s_i]$ (extended constantly outside $[s_0,s_N]$), we mollify in one variable, $h'(s) := (h * \phi_\eta)(s)$, where $\phi_\eta\ge 0$ is a mollifier supported on $[-\eta,\eta]$ with $\int\phi_\eta = 1$. Then $H'(\bz):=h'(z_1)$ is continuous (indeed $C^\infty$ in $z_1$), satisfies $\norm{H'(\bz)}\le \pi$ (as a convex average of the $H_i$), and agrees with $H(\bz)$ whenever $\operatorname{dist}(z_1,\{s_0,\ldots,s_N\}) > \eta$. The boundary layer $B_\eta := \{\bz\in\cZ : |z_1 - s_i|\le\eta \text{ for some } i\}$ is a decreasing family of sets with $\bigcap_{\eta>0} B_\eta = \bigcup_i\{z_1=s_i\}\cap\cZ$, which is $\mu$-null. Hence $\mu_\eta := \mu(B_\eta)\to 0$ as $\eta\to 0$, and we may choose $\eta$ small enough that $\mu_\eta \le \varepsilon^2/18$. Setting $V'(\bz) := \exp(i H'(\bz))$ and using $\norm{V'(\bz)-V(\bz)}\le 2$ (both are unitary) on $B_\eta$ and $V'=V$ off $B_\eta$,
\begin{align}
    \mathbb{E}_{\bz\sim r} [\norm{V'(\bz) - V(\bz)}] \le 0 \cdot (1 - \mu_\eta) + 2 \cdot \mu_\eta \le \varepsilon^2/9.
\end{align}

\textbf{(iv) Uniform PQC approximation.} The map $V'(\bz)=e^{iH'(\bz)}$ with $H'$ continuous and $\sup_\bz\norm{H'(\bz)}\le\pi$ satisfies the hypotheses of Lemma~\ref{lem:latent-universal}. We apply the lemma with $\delta=\varepsilon^2/9$ to obtain a circuit architecture and parameters $\btheta(\bz)$ satisfying
\begin{align}
    \mathbb{E}_{\bz\sim r}\norm{U\bigl(\btheta(\bz)\bigr)-V'(\bz)}\le \varepsilon^2/9.
\end{align} 
We define $\rho_V(\bz):=\Tr_{\cH_A}[\ket{\psi(\bz)}\!\bra{\psi(\bz)}]$ and note that $\rho_\bw(\bz) = \Tr_{\cH_A}[\ket{\phi(\bz)}\!\bra{\phi(\bz)}]$, where $\ket{\phi(\bz)} = U(\btheta (\bz))\ket{0^{n+m}}$ and $\ket{\psi(\bz)} = V(\bz)\ket{0^{n+m}}$. We have
\begin{align}\label{eq:pointwise}
  &\mathbb{E} \Bigl[\tfrac{1}{2}\norm{\rho_\bw(\bz)-\rho_V(\bz)}_1\Bigr]  \nonumber\\
  &\le \mathbb{E} \Bigl[\tfrac{1}{2}\norm{\ket{\phi(\bz)}\!\bra{\phi(\bz)}-\ket{\psi(\bz)}\!\bra{\psi(\bz)}}_1\Bigr] \nonumber\\
  &\le \sqrt{2 \mathbb{E}\norm{U(\btheta(\bz))-V(\bz)}} \nonumber\\
  &\le \sqrt{2 \bigl(\mathbb{E}\norm{U(\btheta(\bz))-V^\prime(\bz)}+\mathbb{E}\norm{V^\prime(\bz)-V(\bz)}\bigr)} \nonumber\\
  &\le \sqrt{2(\varepsilon^2/9 + \varepsilon^2/9)} = 2\varepsilon/3.
\end{align}
The first inequality is from the contractivity of the trace norm under partial trace.
The second inequality is a direct application of the Fuchs--van de Graaf relation
$\tfrac12\norm{\ket\phi\!\bra\phi-\ket\psi\!\bra\psi}_1=\sqrt{1-|\braket{\phi}{\psi}|^2}$ and the bound $|\braket{\phi}{\psi}|\ge 1-\norm{U-V}$ when $\ket\psi=V\ket0$, $\ket\phi=U\ket0$, yielding $\tfrac12\norm{\ket\phi\!\bra\phi-\ket\psi\!\bra\psi}_1\le \sqrt{2\norm{U-V}}$.
For a nonnegative random variable $X$, Jensen's inequality gives $\bbE[\sqrt{X}] \leq \sqrt{\bbE[X]}$.
Therefore, $\mathbb{E} \Bigl[ \tfrac12\norm{\ket\phi\!\bra\phi-\ket\psi\!\bra\psi}_1 \Bigr] \le \mathbb{E} \Bigl[ \sqrt{2\norm{U-V}} \Bigr] \le \sqrt{2 \mathbb{E}\norm{U-V} }$.

\textbf{(v) Coupling and Wasserstein bound.} We construct a coupling $\pi$ by sampling $\bz\sim r$ and pairing $\sigma(\bz)=\rho_\bw(\bz)$ with the fixed catalog state $\tau_i=\rho_V(\bz)$ determined by the cell $R_i$ containing $\bz$. The measure form mixes differential $d\bz$ with discrete $di$ as $\pi(d\bz,di)=r(\bz)\,\mathbbm{1}_{R_i}(\bz)\,d\bz$.
Using Eq.~\eqref{eq:pointwise}, we evaluate the expected cost as
\begin{align*}
    \mathbb{E}_{\pi}\Bigl[\tfrac12\norm{\sigma(\bz)-\tau_i}_1\Bigr] = \mathbb{E}_{\pi}\Bigl[\tfrac12\norm{\sigma(\bz)-\rho_V(\bz)}_1\Bigr] \le \tfrac{2\varepsilon}{3}.
\end{align*}

Because $\wone$ is the infimum over all couplings, we have
\begin{equation}\label{eq:wone-mid}
  \wone\bigl((r,\rho_\bw),P^*\bigr) \le 2\varepsilon/3.
\end{equation}
Combining Eqs.~\eqref{eqn:catalog} and~\eqref{eq:wone-mid} via the triangle inequality gives
\begin{align*}
    \wone\bigl(Q,\,(r,\rho_\bw)\bigr)\le \wone(Q,P^\ast)+\wone(P^\ast,(r,\rho_\bw))\le \varepsilon.
\end{align*}
This completes the proof.
\end{proof}

\subsection{Smoothing via Soft Gating}\label{sec:soft}
The theoretical construction in Theorem~\ref{thm:main} relies on a target decoder map $V(\bz)$ to prepare states from the finite catalog. However, the hard transitions at region boundaries render $V(\bz)$ discontinuous, which is incompatible with the continuity hypothesis of Lemma~\ref{lem:latent-universal} invoked in step (iv) of the proof, and would prevent gradient flow through $\bz$ in any practical realization. To address this within the proof, we replace the hard indicator functions $\chi_i(\bz)$ with smooth gating weights
\begin{equation}\label{eq:soft-gating-proof}
    \widetilde\pi_i(\bz) \;=\; \operatorname{softmax}\bigl(f_{\mathrm{attn}}(\bz)\bigr)_i,
\end{equation}
where $f_{\mathrm{attn}}$ is a classical NN. 
We then replace the hard $V(\bz) = \exp\left(i \sum_i \chi_i(\bz) H_i\right)$ with a soft-attention decoder
\begin{align}\label{eq:Vsoft-defn}
    V_{\mathrm{soft}}(\bz) =  \exp\bigl(i\,H_{\mathrm{soft}}(\bz)\bigr),\quad H_{\mathrm{soft}}(\bz) =  \sum_i \widetilde\pi_i(\bz) H_i.
\end{align}

This smoothing connects hard and soft branching. Note that the indicators $\chi_i(\bz)$ are discontinuous, so they cannot be approximated uniformly by the continuous functions $\widetilde\pi_i(\bz)$. 
The correct statement is convergence in $L^1(r)$, which is all the Wasserstein argument requires. Concretely, with the slab regions $R_i$ in step~\ref{UAT:proof}(i), for any $\tau > 0$ one can choose logits $f_{\mathrm{attn}}$ (e.g., approximating steep piecewise-linear functions of $z_1$, which lie within the scope of the classical universal approximation theorem~\cite{hornik:1991:apprx}) such that $\widetilde\pi_j(\bz) \ge 1-\tau$ for all $\bz \in R_j$ at distance greater than $\tau$ from the slab boundaries. 
Since the boundary layers have vanishing $\mu$-mass as $\tau\to 0$, $\mathbb{E}_{\bz\sim r}[\max_i|\widetilde\pi_i(\bz) - \chi_i(\bz)|]$ can be made arbitrarily small along a sequence of attention networks. Moreover, $\widetilde\pi_i$ is $C^\infty$-smooth, allowing gradients $\nabla_\bz V_{\mathrm{soft}}(\bz)$ to flow throughout the proof construction. The resulting operator-norm error satisfies
\begin{align}
    &\|V_{\mathrm{soft}}(\bz) - V(\bz)\| \le \|H_{\mathrm{soft}}(\bz) - H(\bz)\| \nonumber \\
    &\le \sum_i | \widetilde\pi_i(\bz) - \chi_i(\bz) | \,\|H_i \| \le \pi\sum_i | \widetilde\pi_i(\bz) - \chi_i(\bz) |,
\end{align}
using $\|H_i\| \le \pi$. Given $\bz \in R_j$, then $\chi_j(\bz) = 1$ and $\chi_{i\neq j}(\bz)=0$, hence
\begin{align}
    \sum_i &| \widetilde\pi_i(\bz) - \chi_i(\bz) | \nonumber\\
    &= (1 - \widetilde\pi_j(\bz)) + \sum_{i\neq j}\widetilde\pi_i(\bz)\\
    &= 1 + \sum_i \widetilde\pi_i(\bz) - 2\widetilde\pi_j(\bz) \\
    &= 2(1 - \widetilde\pi_j(\bz)) \le 2\max_i |\widetilde\pi_i(\bz) - \chi_i(\bz)|.
\end{align}
Therefore,
\begin{align}
\mathbb{E}_{\bz\sim r} \|V_{\mathrm{soft}}(\bz) &- V(\bz)\| \nonumber\\
&\le 2\pi \,\mathbb{E}_{\bz\sim r} [ \max_i |\widetilde\pi_i(\bz) - \chi_i(\bz)| ] \longrightarrow 0.
\end{align}
Therefore, $V_{\mathrm{soft}}$ inherits the universality of $V$ within the proof of Theorem~\ref{thm:main} while remaining fully differentiable.

\begin{figure*}[t]
  \centering
  \includegraphics[width=2\columnwidth]{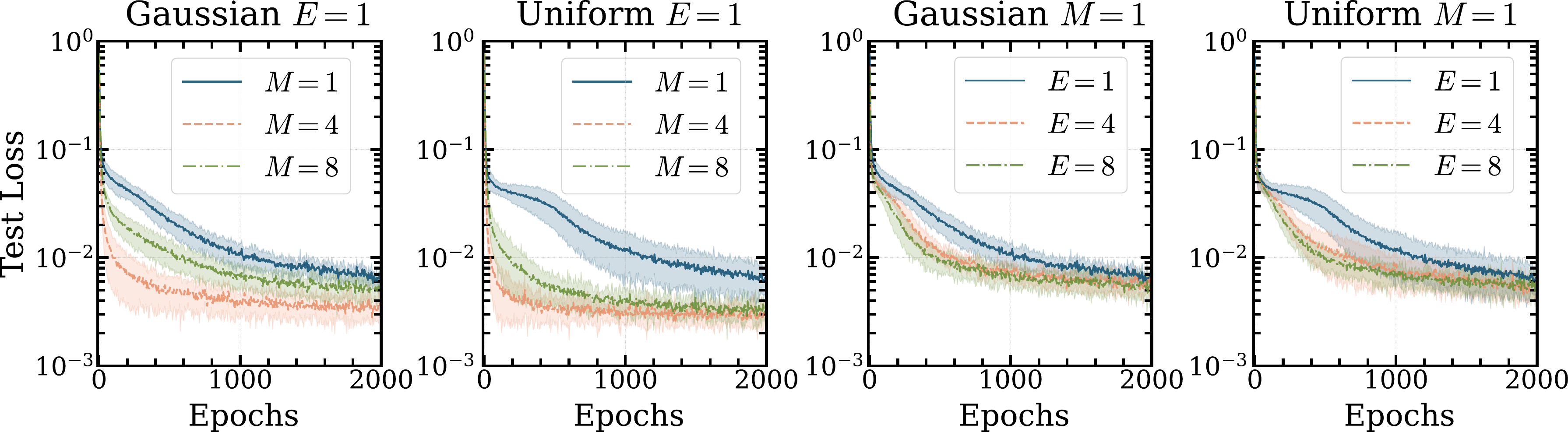}
  \caption{The performance of LPQC in learning the distribution of multi-clustered density matrices ($n=4$ data qubits) across various configurations ($M$ prior modes and $E$ experts)  with $L=10$ PQC layers. Results are averaged across 10 experimental trials per configuration, with solid and dashed lines representing the mean and shaded regions indicating the standard deviation.}
  \label{fig:compare:mod-expert}
\end{figure*}

\section{Practical Implementation}\label{sec:impl}

\subsection{Mixture of Experts}\label{sec:moe}

The mixture-of-experts (MoE) paradigm, originally introduced in classical machine learning~\cite{jacobs:1991:moe,jordan:1994:hierarchical:moe}, has recently regained prominence as a scaling strategy for large NNs~\cite{shazeer:2017:sparse:moe,fedus:2022:switch}. 
Instead of a single monolithic network, the model maintains a collection of $E$ specialized subnetworks (\emph{experts}), and a learned gating function decides---for each input---which experts to consult and with what weights. 
The motivation is twofold. First, when the target function or distribution has heterogeneous structures (e.g., distinct modes, clusters, or regimes), forcing a single network to fit them all leads to interference between regimes and slower training. 
Routing each input to a specialized subset of experts mitigates this. Second, MoE provides a form of conditional computation: model capacity can be added by increasing $E$ without proportionally increasing the cost per input, since only a few experts contribute meaningfully to any given output.

In our setting, both motivations apply. The target $Q$ is a distribution over density matrices that is, in many cases of interest (e.g., the multi-cluster ensembles of Sec.~\ref{sec:demo} or chemically diverse molecular datasets), genuinely multimodal. In this case, different regions of the latent space $\cZ$ should map to qualitatively different families of quantum states. A single MLP $f(\bz;\bw)$ producing the full parameter vector $\btheta(\bz)$ must then encode all of this heterogeneity in one set of weights, which empirically slows convergence and limits final fidelity. The MoE architecture instead allocates $E$ expert NNs $\{f^{(i)}(\bz;\bw^{(i)})\}_{i=1}^E$, each free to specialize on a portion of $\cZ$, and uses a soft gating signal $\pi_i(\bz)$ to decide which experts contribute to producing the circuit parameters at any given $\bz$. This is the structural counterpart to using a multimodal prior $r(\bz)$. The prior partitions the latent space into mode-aligned regions on the input side of the parameter-generation map, while the MoE partitions capacity on the output side.

The soft gating mechanism $\widetilde\pi_i(\bz)$ introduced in Sec.~\ref{sec:soft} has a natural counterpart in the practical training architecture. 
We reuse the same form of softmax gating, but apply it to a different object. Whereas the soft-attention decoder $V_{\mathrm{soft}}$ uses $\widetilde\pi_i(\bz)$ to mix branch Hamiltonians inside a single exponential [Eq.~\eqref{eq:Vsoft-defn}], the MoE architecture uses gating weights to mix circuit parameters produced by separate expert networks. To keep this distinction explicit, we denote the MoE gating by
\begin{equation}\label{eq:soft-gating-moe}
    \pi_i(\bz) = \frac{\exp \eta_i(\bz)}{\sum_{j=1}^{E}\exp \eta_j(\bz)}, \quad \eta(\bz) = f_{\mathrm{attn}}(\bz;\bw_{\mathrm{attn}}) \in \mathbb{R}^E.
\end{equation}

Given $E$ expert NNs producing parameter vectors $\btheta^{(i)}(\bz) \in \mathbb{R}^{\sum_\ell K_\ell}$, the MoE circuit is
\begin{equation}\label{eq:Usoft}
    U_{\mathrm{soft}}(\btheta(\bz)) \;=\; U\!\left(\,\sum_{i=1}^E \pi_i(\bz)\,\btheta^{(i)}(\bz)\,\right),
\end{equation}
i.e., the experts produce parameter vectors, the gating weights $\pi_i(\bz)$ mix them, and the resulting vector $\btheta(\bz) = \sum_i \pi_i(\bz)\btheta^{(i)}(\bz)$ is fed into a single-layered ansatz $U(\cdot)$. The experts can be trained jointly or pre-trained independently and softly combined when subsets of the data are available. The map $\bz \mapsto \btheta(\bz) \mapsto U_{\mathrm{soft}}(\btheta(\bz))$ is continuous and differentiable, so gradients flow through both the experts and the gating network.

We stress that Eq.~\eqref{eq:Usoft} is not operator-equivalent to the soft-attention decoder in Eq.~\eqref{eq:Vsoft-defn}.
The two coincide only in degenerate cases (e.g., commuting generators or a single layer). The MoE architecture is therefore motivated on architectural grounds---smoothness, specialization across modes of $Q$, and gradient flow through $\bz$---rather than as an operator-level realization of $V_{\mathrm{soft}}$.

Even though $U_{\mathrm{soft}} \ne V_{\mathrm{soft}}$, the MoE construction still inherits universality from Lemma~\ref{lem:latent-universal}, by a simple containment argument. Setting the gating logits to a constant ($f_{\mathrm{attn}} \equiv 0$, so $\pi_i = 1/E$) and all experts equal to a common network $f$, the composite map
\begin{equation}\label{eq:moe-composite}
    \bz \;\longmapsto\; \btheta(\bz) \;=\; \sum_{i=1}^E \pi_i(\bz)\,\btheta^{(i)}(\bz)
\end{equation}
reduces to $\btheta(\bz) = f(\bz)$, i.e., the MoE family contains the single-expert family for which Lemma~\ref{lem:latent-universal} was established. The universal approximation guarantee therefore transfers to $U_{\mathrm{soft}}$ without modification. More generally, Eq.~\eqref{eq:moe-composite} is a composition of the expert NNs $f^{(i)}(\bz;\bw^{(i)})$, the gating network $f_{\mathrm{attn}}(\bz;\bw_{\mathrm{attn}})$, and smooth operations (softmax, multiplication, summation), so the full MoE class consists of continuous, almost-everywhere-differentiable maps $\bz \mapsto \btheta(\bz)$, and gradients flow through both the experts and the gating network. 

\subsection{Multimodal Mechanisms}

The MoE structure is sufficient if $Q$ is smooth, the latent map $\bz \mapsto \btheta(\bz)$ does not need discrete branches, and the NNs $f_\ell$ can directly learn smooth variations. However, when $Q$ is multimodal, the basic ansatz could approximate this distribution via NNs learning jumps in $\btheta(\bz)$, but discontinuities hinder gradient flow, leading to poor optimization.
Specifically, we can consider $r(\bz)$ as a multimodal prior (mixture with $M$ modes) $r(\bz) = \sum_{i=1}^Mc_i p_i(\bz)$, where $\{c_i\}$ are non-negative mixing weights ($\sum c_i = 1$), and each $p_i(\bz)$ is a component distribution.
For example, with Gaussian, Uniform, and Beta mixtures, we can assign $p_i(\bz) = \mathcal{N}(\bz \mid \boldsymbol{\mu}_i, \boldsymbol{\Sigma}_i)$, $p_i(\bz) = \mathcal{U}(\bz\mid \ba_i, \bb_i)$, and $p_i(\bz) = \mathcal{B}(\bz\mid \boldsymbol{\alpha}_i, \boldsymbol{\beta}_i)$, respectively.
While we assign uniform $c_i$ and fix each component $p_i(\bz)$ in our experiments, these factors can be trainable in location, scale, and shape, enabling the prior to adapt during optimization.

In our implementation, we use a basic ansatz for PQC $U(\btheta) = U_L(\theta_L) \cdots U_2(\theta_2) U_1(\theta_1)U_0(\theta_0)$ as a hardware-efficient ansatz (HEA) for each layer:
\begin{align}
U_0(\btheta_0) &= \prod_{p=1}^{n+m} R_p(\theta_{0,p}),\label{eqn:HEA}\\
U_\ell(\btheta_\ell) &= \Biggl( \prod_{p=1}^{n+m} R_p(\theta_{\ell,p}) \Biggr) \Biggl( \prod_{p=1}^{n+m-1} \mathrm{CNOT}_{p,p+1} \Biggr)\nonumber,
\end{align}
where $R_p(\theta_{\ell,p}) = \exp(-i Y\theta^{(1)}_{\ell,p}/2)\exp(-i Z\theta^{(2)}_{\ell,p}/2)$ on physical qubit $p$ and
$\mathrm{CNOT}_{p,p+1}$ forms a linear nearest-neighbor coupling.
Here, $Y$ and $Z$ are single-qubit Pauli $Y$ and Pauli $Z$ matrices, respectively.
The number of parameters per layer is $K_\ell=2(n+m)$ and those parameters are produced by NNs $\btheta_\ell(\bz) = f_\ell(\bz; \bw_\ell) \in \mathbb{R}^{K_\ell}$.

\begin{figure*}[t]
  \centering
  \includegraphics[width=2\columnwidth]{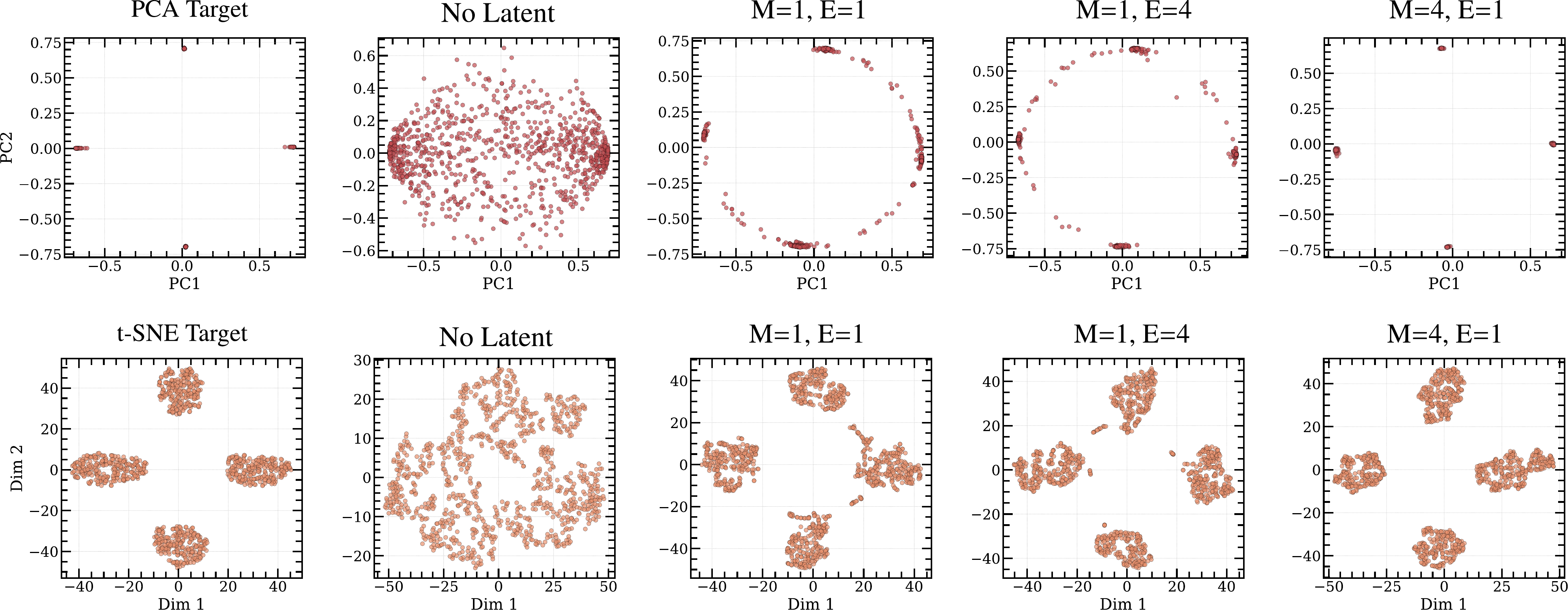}
  \caption{Visualization of target and generated ensembles using PCA and t-SNE for trained LPQC (same conditions in Fig.~\ref{fig:compare:mod-expert}) with $M$ prior modes in the Uniform distribution, and $E$ experts. We perform tomographies on each density matrix, convert them to vector form, and then apply PCA and t-SNE for dimensionality reduction.}
  \label{fig:visual:cluster}
\end{figure*}

\subsection{End-to-End Learning}
Direct minimization of $\wone(Q, (r, \rho_\bw))$ is intractable, so we use a loss $\cD_{\mathrm{Wass}}(\cX, \cY)$ approximating the 1-Wasserstein distance between density matrix ensembles $\cX=\{\rho_i\}$ and $\cY=\{\sigma_j\}$ via
a symmetric, positive definite quadratic kernel $\kappa(\rho_i, \sigma_j)$. With normalized $\kappa$ (i.e., $\kappa(\rho, \rho) = 1\, \forall\rho\in\cD(\cH_S)$), the cost matrix $\bC = (C_{i,j})\in\mathbb{R}^{|\cX|\times|\cY|}$ is $C_{i,j} = 1 - \kappa(\rho_i, \sigma_j)$. The distance solves an optimal transport linear program for plan $\bP = (P_{i,j})\in\mathbb{R}^{|\cX|\times|\cY|}$:
\begin{align}
\cD_{\mathrm{Wass}}(\cX, \cY) &= \min_{\bP} \sum_{i,j} P_{i,j}C_{i,j}, \nonumber \\
\text{s.t. }\bP \mathbf{1}_{|\cY|} &= \ba,\quad \bP^\top \mathbf{1}_{|\cX|} = \bb,\quad \bP\ge 0.
\end{align}
Here, $\mathbf{1}_{|\cX|}$ and $\mathbf{1}_{|\cY|}$ are all-ones vectors with size $|\cX|$ and $|\cY|$, respectively, and $\ba$ and $\bb$ are histograms for $\cX$ and $\cY$ (typically uniform: $\ba=\tfrac{1}{|\cX|}\mathbf{1}_{|\cX|}$, $\bb=\tfrac{1}{|\cY|}\mathbf{1}_{|\cY|}$).

Given target samples $\cY=\{\sigma_j\} \sim Q$, we sample $\bz\sim r(\bz)$ and then compute $\pi(\bz)$, $\btheta(\bz)$, and $\rho_\bw(\bz)$ to form an ensemble $\cX = \{\rho_i\}$. We use the kernel based on the super-fidelity~\cite{jaros:2009:superfid} $\kappa(\rho, \sigma) = \tr(\rho\sigma) + \sqrt{[1 - \tr(\rho^2)][1 - \tr(\sigma^2)]}$, an upper-bound approximation of fidelity due to its efficiency for mixed states, needing fewer copies.
The test loss is the Wasserstein distance $\cD_{\mathrm{Wass}}$, and the train loss combines $\cD_{\mathrm{Wass}}$ with an entropy regularization as
$\cL = \cD_{\mathrm{Wass}} + \lambda \sum_{i=1}^E \pi_i\log \pi_i$,
where $\lambda > 0$.

\section{Demonstration}\label{sec:demo}

\subsection{Learning Multimodal Distributions}
We employ a synthetic dataset of quantum density matrices designed to exhibit multimodal structure, facilitating the evaluation of LPQC as a generative model in capturing diverse quantum states.
The dataset is structured around four fixed pure states serving as cluster centers, chosen to represent a mix of product and highly entangled states: $\ket{0\ldots0}$, $\ket{1\ldots 1}$, $\ket{\mathrm{GHZ}^{+}} = \tfrac{1}{\sqrt{2}}(\ket{0\ldots0} + \ket{1\ldots 1})$, and $\ket{\mathrm{GHZ}^{-}} = \tfrac{1}{\sqrt{2}}(\ket{0\ldots0} - \ket{1\ldots 1})$, with corresponding density matrices $\rho_0$, $\rho_1$, $\rho_2$, and $\rho_3$.
Around each center, we generate a cluster of $N_s/4$  mixed states ($N_s=2048$) by applying one layer of HEA defined in Eq.~\eqref{eqn:HEA}  to the full system ($n+m$ qubits) with rotation angles randomly generated from $\cN(0, 0.05)$, then tracing out $m$ ancilla qubits (same as in LPQC).
The four centers are well separated in Hilbert space, creating distinct modes of distribution.

We explain the structure used in LPQC.
For the PQC component, we employ the HEA defined in Eq.~\eqref{eqn:HEA}, where each parameter vector
$\btheta_\ell(\bz) \in \bbR^{K_\ell}$ is generated from a multi-layer perceptron (MLP) $f_\ell(\bz; \bw_\ell)$ with tanh activation.
In the single-expert scenario, to reduce the number of training parameters, we use the shared-parameter technique, where an MLP $f(\bz; \bw)$ produces the total vector $\btheta(\bz) \in \bbR^K$ ($K=\sum K_\ell$).
This MLP has an input dimension of $d$ (corresponding to the latent dimension of $\bz$), $h$ hidden layers of dimension $D$, and an output dimension of $K$. Thus, the layer dimensions are structured as $(d, D, \ldots, D, K)$. Here, we denote the overall architecture of $f(\bz; \bw)$ compactly as $\mathrm{MLP}(d, D^{(h)}, K)$.

While the prior $r(\bz)$ can be trainable, in our implementation, we assume $r(\bz)$ is Gaussian or Uniform with fixed parameters.
For the single-mode case ($M=1$), the Gaussian is the multivariate normal distribution $\mathcal{N}(\mathbf{0}, \mathbf{I}_d)$, while the uniform is the flat distribution over the hypercube $[-1/e, 1/e]^d$.
For the multimodal case where $r(\bz)$ is a mixture with $M$ modes, we use uniform mixing weights ($c_i = 1/M$) and component distributions $\mathcal{N}(\boldsymbol{\mu}_i, e^{-2} \mathbf{I}_d)$ (for Gaussian) or uniform over the hypercube centered at $\boldsymbol{\mu}_i$ with side length $2e^{-2}$ (i.e., $[\boldsymbol{\mu}_i - e^{-2} \mathbf{1}, \boldsymbol{\mu}_i + e^{-2} \mathbf{1}]$) (for Uniform). The mean vectors $\boldsymbol{\mu}_i$ are sampled randomly from $\mathcal{N}(\mathbf{0}, \mathbf{I}_d)$.

In the MoEs scenario, instead of training $\btheta^{(i)}$ ($i=1,\ldots, E$) independently, we jointly train all parameters in all experts $\mathrm{MLP}(d, D^{(h)}, K)$ $f^{(i)}(\bz; \bw)$ and the attention network $f_{\text{attn}}(\bz; \bw_{\text{attn}})$ to produce the soft-attention weights $\eta(\bz)$ (obtained via softmax on the output logits).
Here, $f_{\text{attn}}(\bz; \bw_{\text{attn}})$ is an MLP with architecture $\mathrm{MLP}(d, 32^{(1)}, E)$ and tanh activation.
We split the train/test dataset 50-50, use the Adam optimizer with a learning rate of $0.001$ and batch size of $128$, train for $2000$ epochs using the entropy regularization with $\lambda=0.01$ (in the MoEs), and evaluate using $\cD_{\mathrm{Wass}}$.
We use TensorCircuit-NG library~\cite{zhang:2023:tensorcircuit} to simulate PQCs, and JAX~\cite{jax:2018:github} to facilitate automatic differentiation.

\begin{figure}[t]
  \centering
  \includegraphics[width=\columnwidth]{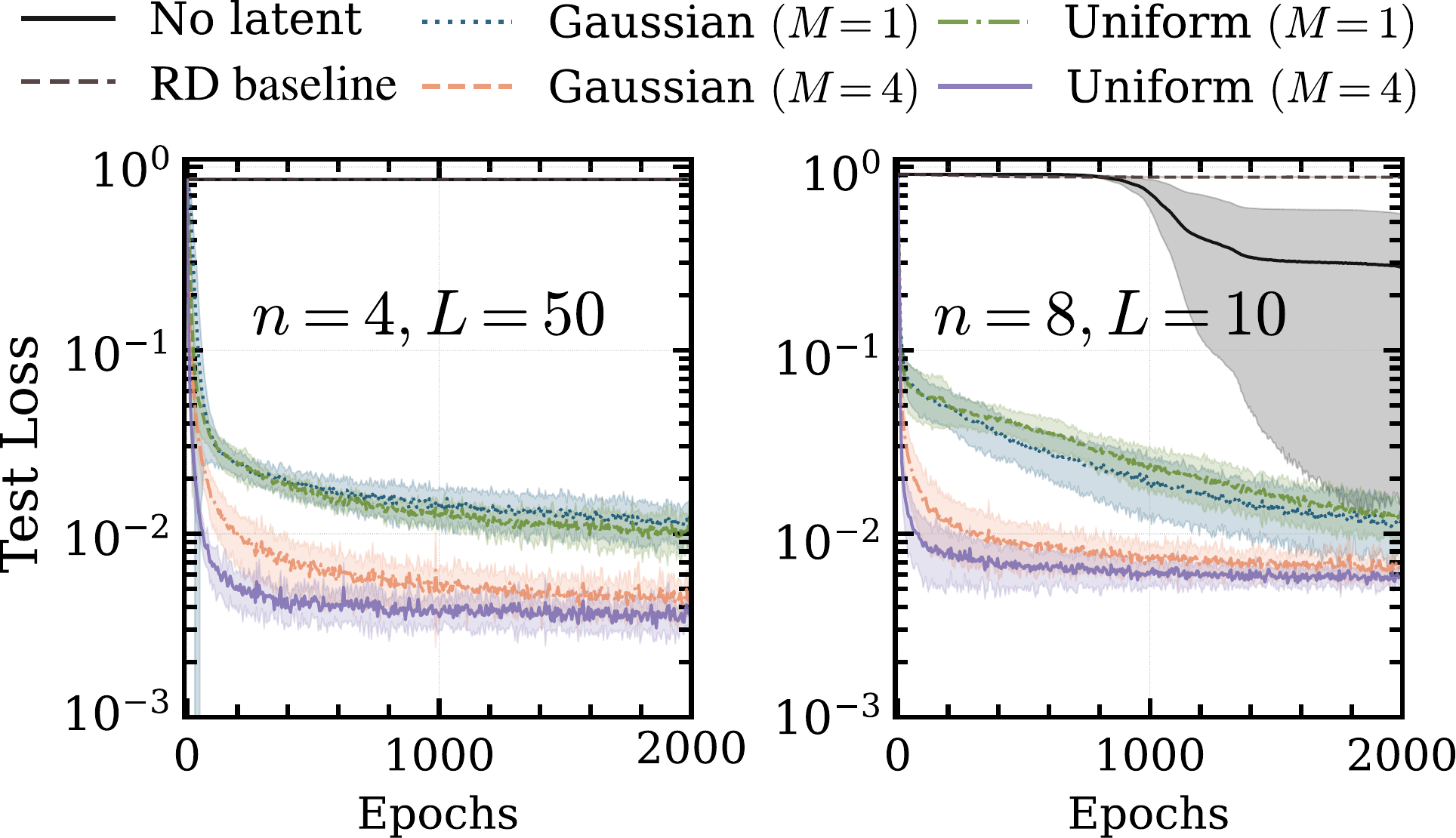}
  \caption{The test loss in training LPQC to learn the distribution of multi-clustered density matrices with $n$ qubits and $L$ PQC layers across configurations of $r(\bz)$. Results are averaged across 10 experimental trials per configuration, with solid lines representing the mean and shaded regions indicating the standard deviation.}
  \label{fig:compare:modes}
\end{figure}

\begin{figure*}[t]
  \centering
  \includegraphics[width=2\columnwidth]{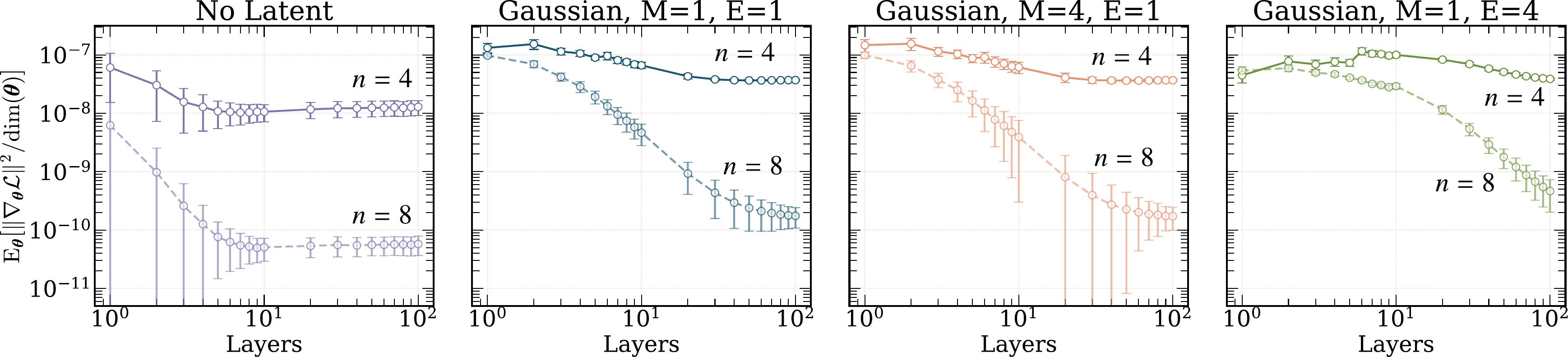}
  \caption{The average squared gradient norm, normalized by the dimension of $\btheta$, is computed by averaging over 1{,}024 independent trials sampled uniformly across the PQC parameter space. Results are presented for varying numbers of layers and qubits across different LPQC configurations (no-latent prior and Gaussian prior), with error bars indicating standard deviations.}
  \label{fig:compare:BP}
\end{figure*}

Figure~\ref{fig:compare:mod-expert} presents the test loss during training of LPQC models with $L=10$ HEA layers, $n=4$ data qubits, and $m=2$ ancilla qubits, under varying configurations of $r(\bz)$ and MoE setups.
Both multimodal priors and MoEs improve over the single-mode and single-expert cases, with multimodal priors yielding faster convergence and lower final losses.
For visual comparison of the generated and target ensembles, Fig.~\ref{fig:visual:cluster} applies dimensionality reduction via PCA and t-SNE to the tomography vectors of density matrices, revealing that $M=4$ modes in Uniform priors produce ensembles with shapes more closely matching the multimodal target (four modes).
These results underscore the value of inductive bias via multimodal priors for efficient learning of distributions over density matrices.
Jointly trained MoEs provide modest improvement, potentially attributable to the expanded NN capacity for $\btheta$ generation.
We also include a no-latent prior baseline, where $\btheta$ is directly sampled from a trainable Gaussian $p(\btheta) = \mathcal{N}(\btheta \mid \boldsymbol{\mu}, \boldsymbol{\Sigma})$.
The baseline completely fails to capture the properties of the target distribution.

In Fig.~\ref{fig:compare:modes}, we further note that the LPQC still works well even in high-depth PQC ($L=50$ layers) and a large number of qubits ($n=8$). Motivated by typical practice in quantum generative modeling~\cite{rudolph:2022:wasserstein,qute:2020:hybrid:gen}, along with the no-latent prior baseline, we include a hybrid random deterministic (RD) baseline for comparison. The first $L/2$ HEA layers use random parameters sampled from a Gaussian mixture prior with $M$ modes, while the remaining $L/2$ layers use trainable parameters. Both baselines fail to achieve effective optimization, with final test losses more than two orders of magnitude larger than those of LPQC under the same experimental setting. Traditional hybrid RD structures are primarily designed to learn distributions over classical data. In contrast, the LPQC approach provides a stronger inductive bias specifically suited to this quantum-to-quantum generative benchmark.

\emph{Barren Plateau Analysis. }
The observation in Fig.~\ref{fig:compare:modes} ties into the barren plateau phenomenon~\cite{clean:2018:natcom:barren}, where flat loss landscapes in PQCs lead to gradients that vanish exponentially, necessitating high measurement resources for accurate gradient estimation. Introducing a latent space to parameterize $\btheta$ empirically alleviates this issue, as shown in Fig.~\ref{fig:compare:BP}. Here, we compute the average squared gradient norm via the expectation $\mathbb{E}_{\btheta}\left[ \| \nabla_{\btheta}\cL \|^2 / \dim(\btheta) \right]$ over 1024 independent trials in the PQC parameter space.
In the no-latent prior case, parameters are sampled uniformly, yielding Haar-random unitaries as the number of layers $L$ increases, which causes $\mathbb{E}_{\btheta}\left[ \| \nabla_{\btheta}\cL \|^2 / \dim(\btheta) \right]$ to decay rapidly with $n, L$ and saturate at extremely low values (e.g., around $10^{-10}$ by $n=8, L=10$). By contrast, the latent space distorts the $\btheta$ distribution away from uniformity, resulting in significantly higher gradient norms across all $L$. For instance, at $n=8, L=10$, LPQCs improve the gradient norm by up to two orders of magnitude over the no-latent baseline. Here, we display only the results from the Gaussian $r(\bz)$, since the Uniform shows the same trend.

\begin{figure}[t]
  \centering
  \includegraphics[width=0.9\columnwidth]{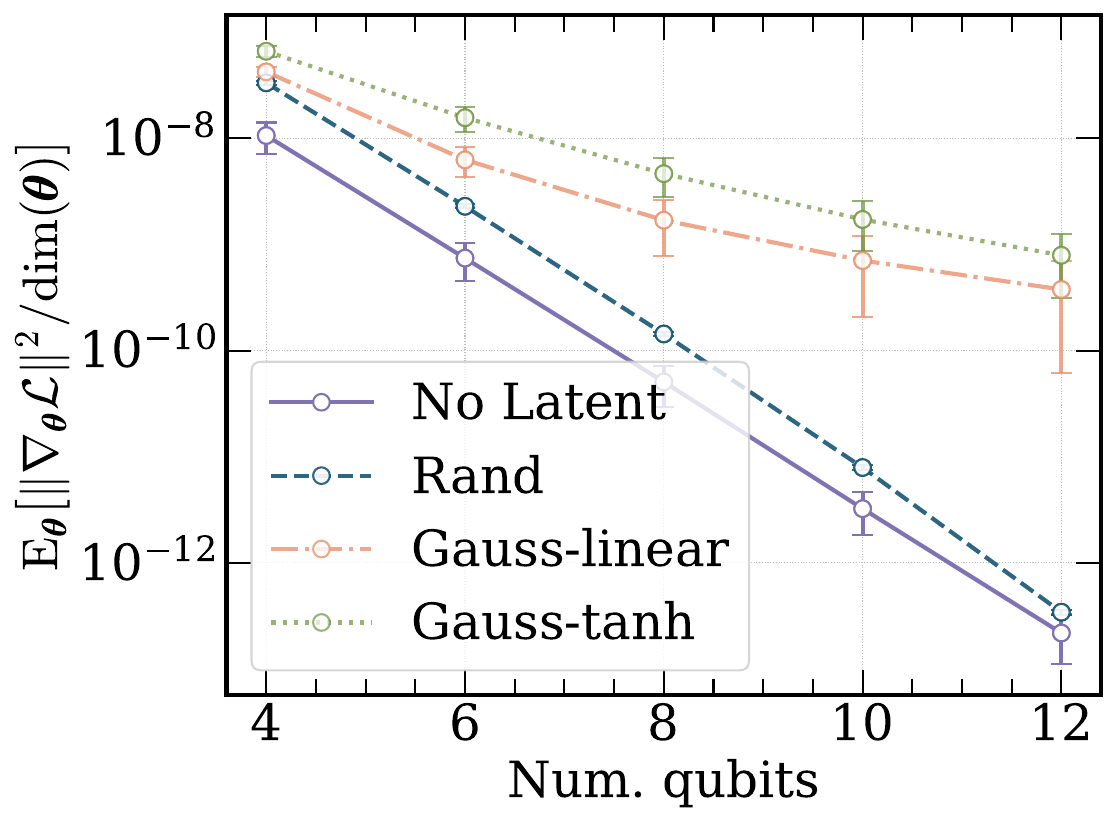}
  \caption{The average squared gradient norm as a function of the number of qubits $n$ at fixed depth $L = 10$, averaged over 1024 independent trials sampled uniformly across the PQC parameter space. Error bars indicate standard deviations. 
  Both LPQC variants (`Gauss-linear' and `Gauss-tanh') maintain gradient norms 1--2 orders of magnitude larger than the no-latent and Rand (RD) baselines across all system sizes.
  }
  \label{fig:gradnorm:qubits}
\end{figure}

To assess how this empirical alleviation scales with system size, we extend the gradient-norm experiments to $n \in \{4, 6, 8, 10, 12\}$ at fixed depth $L = 10$, including the RD baseline for comparison. We consider two LPQC variants: Gauss-linear, whose NN $f_\ell$ uses linear activation (subsuming the affine generator $\btheta_\ell = \bW_\ell \bz + \bb_\ell$ of Ref.~\cite{barthe:2025:pqc:multivariate} as a special case), and our default Gauss-tanh. Figure~\ref{fig:gradnorm:qubits} shows two trends. Both LPQC variants maintain gradient norms one to two orders of magnitude above either baseline at every $n$, with the gap stable as $n$ grows. Meanwhile, the RD baseline decays exponentially with $n$ just like the no-latent case, offering only a small constant-factor improvement. This suggests a qualitatively different inductive bias, though the evidence is empirical and limited to the regimes tested.

Our observation is consistent with empirical results using NNs to mitigate the barren plateau~\cite{lucas:pra:2022:avoidBP}. Recent theoretical analysis~\cite{yi:2025:geom:opt:Lie} adopts a geometric perspective, suggesting that $\bth$ generated by NNs follow smooth, efficient trajectories that avoid the flat regions.
We further expect that multimodal priors or MoEs with attention can sharpen these geodesic trajectories by concentrating $\btheta$ in higher-curvature subspaces, thereby explaining the improvements in gradient norm.
In Appendix~\ref{appx:BP-discussion} we discuss the Lie-algebraic perspective on why the latent prior alleviates barren plateaus, and we prove rigorous partial guarantees. 
Building on the Gaussian-initialization bounds of Ref.~\cite{zhang:2022:gaussian:init}, Proposition~\ref{prop:concentrated} and Corollary~\ref{cor:bp-free} show that the LPQC family contains configurations, realized by polynomial-size classical generators, whose expected squared gradient norm is at most polynomially small.
Proposition~\ref{prop:meta} shows that the meta-optimization over the generator weights does not face a barren plateau at such initializations.

\begin{figure}[t]
  \centering
  \includegraphics[width=\columnwidth]{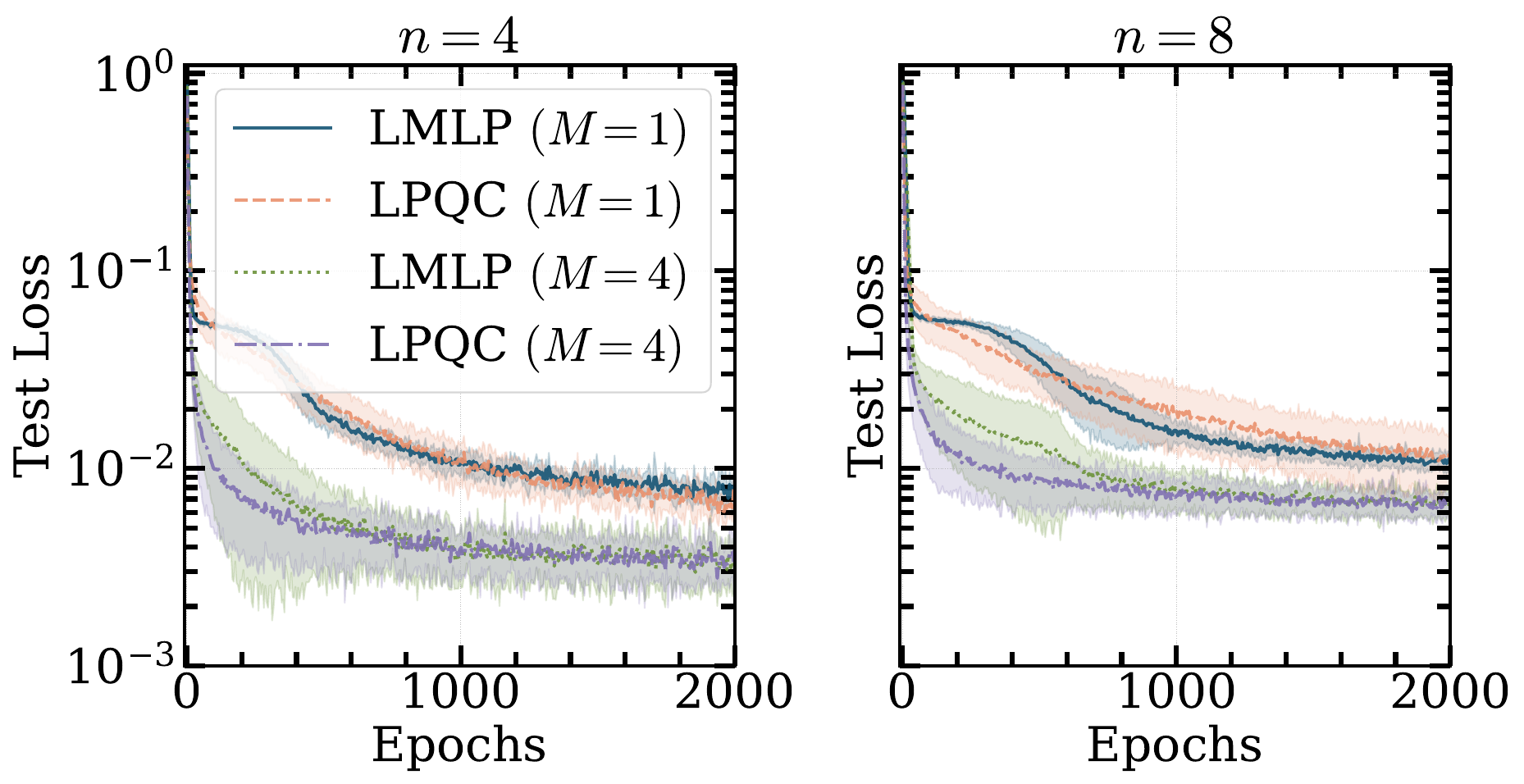}
  \caption{Test Wasserstein loss during training on the multi-cluster task, comparing the classical LMLP baseline against LPQC for two latent-prior configurations ($M=1$ unimodal and $M=4$ multimodal) at $n=4$ and $n=8$ qubits. Solid and dashed lines denote the mean over 10 trials. Shaded regions indicate the standard deviation. Both methods reach comparable final losses, while LPQC converges faster in the early training phase, especially with the multimodal prior ($M=4$).}
  \label{fig:LMLP_cluster}
\end{figure}

\emph{Comparison with a Classical Generative Baseline.---}
We investigate whether the LPQC offers any concrete advantage over a fully classical generative model on the same task. To assess this, we also implement a Latent-MLP (LMLP) baseline that mirrors the LPQC pipeline as closely as possible, excluding the quantum component. A latent vector $\bz$ is sampled from the same prior family $r(\bz)$ used by LPQC and mapped through an MLP $g(\bz;\bw_g)$ to produce a classical representation of a density matrix. We consider two variants: (i) the MLP outputs a complex matrix $A(\bz)$ that is projected to a valid density matrix via $\rho(\bz) = A(\bz)A^{\dagger}(\bz)/\Tr[A(\bz)A^{\dagger}(\bz)]$, used when LPQC has nonzero ancilla qubits ($m>0$); and (ii) when $m=0$, the MLP outputs a normalized complex state vector that defines a pure state directly. Both variants use the same training and evaluation as LPQC.

For the synthetic multi-cluster task, we match the latent dimension $d=4$, hidden dimension $D=32$, and $h=2$ hidden layers between LPQC and LMLP. Figure~\ref{fig:LMLP_cluster} compares the test Wasserstein loss during training for $n\in\{4,8\}$ and two prior configurations. Both methods reach comparable final losses, while LPQC converges faster across most epochs---especially with the multimodal prior ($M=4$). The structural advantage of LPQC lies in scaling: the LMLP generator produces an output of dimension $O = 2^{2n+1}$ to span the vectorized density-matrix space, whereas LPQC requires only $2L(n+m)$ PQC parameters. For $L=10$, $n=8$, $m=2$, this corresponds to roughly a $655\times$ reduction in output dimensionality, and the gap grows exponentially with $n$. LMLP produces classical vectorizations of density matrices rather than actual quantum states, making it unsuitable for downstream tasks that require quantum states.

\subsection{Learning Quantum Chemical Distribution}

We apply our method to the QM9 dataset~\cite{ramakrishnan:2014:QM9}, a well-established benchmark for applications in computational chemistry, including predicting molecular properties and generating 3-D structures. The dataset comprises approximately 134{,}000 compact organic molecules, each with a maximum of 9 heavy atoms (C, N, O, and F) and up to 29 atoms total, along with associated molecular properties and three-dimensional coordinates.

We align with previous work using the Incremental Many-body Projected Ensemble (IMPE) framework~\cite{tran:2025:QUAP} on a curated subset of the QM9 dataset.
Every 3-D molecule in this selection is represented as a 7-qubit quantum state (Appendix~\ref{appx:encoding:3D}), which leads to learning the distribution of pure quantum states linked to the QM9 subset.
Therefore, LPQC has $n=7$ data qubits with $m=0$ ancilla qubits.
We also use the same MLPs described in the previous section.

We first consider molecules with precisely 7 heavy atoms and 2 unique ring systems, yielding a collection of 488 molecules that share consistent structural features. 
In these experiments, we train on 200 states with a batch size of 100 and assess performance by computing the Wasserstein distance between all 488 generated samples and the target samples.
While there is no clear connection between the target distribution and the number of modes in multimodal $r(\bz)$, our results on the validation data, across trials with $M=1, 4, 8, 16$, suggest that $M=4$ is a reliable setting.

\begin{figure}[t]
  \centering
  \includegraphics[width=0.9\columnwidth]{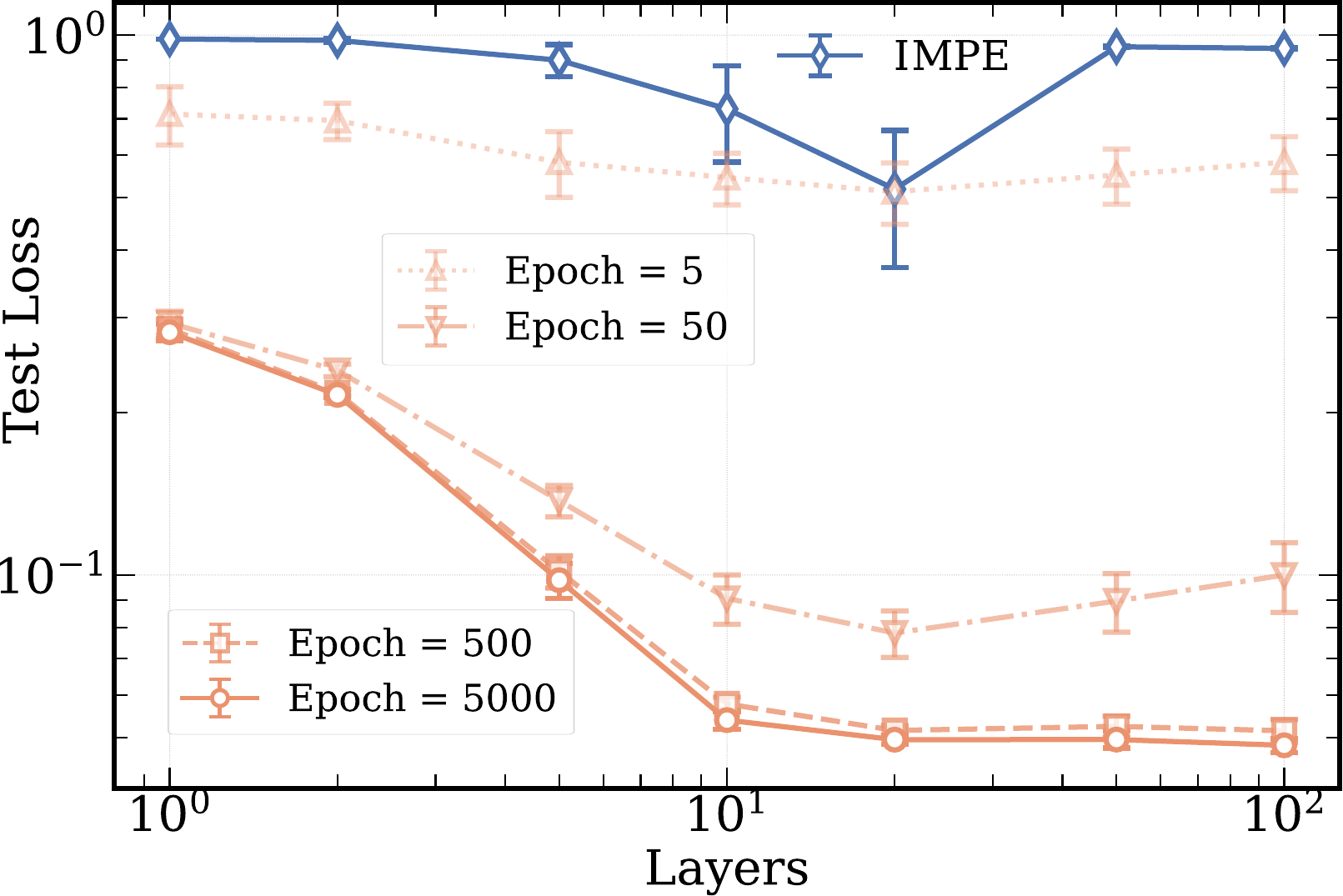}
  \caption{Comparison of LPQC ($M=4$, $E=1$) and IMPE~\cite{tran:2025:QUAP} on QM9-derived distributions, showing Wasserstein distance versus PQC layers $L$ and training epochs. Solid and dashed lines denote averages over 10 (LPQC) and 20 (IMPE) trials, respectively, with error bars indicating standard deviations.}
  \label{fig:compare:QM9}
\end{figure}

\begin{table*}[t]
\centering
\caption{Generation quality on the QM9-derived 3-D molecular dataset (8 heavy atoms, 2 ring systems; 4236 molecules in the dataset, 2000 used for training). ``Num.\ atoms'' is the average number of heavy atoms per generated molecule (target: 8.0). ``Wass'' is the Wasserstein distance to the target distribution of quantum states. The remaining columns are validity, uniqueness-times-validity, and novelty-times-validity, all expressed as percentages. LPQC achieves competitive generation quality compared to both classical and quantum baselines while preserving the correct mean atom count.}
\label{tab:QM9:generation}
\begin{ruledtabular}
\begin{tabular}{l c c c c c}
Method   & Num.\ atoms & Wass  & Valid (\%) & U $\times$ V (\%) & N $\times$ V (\%) \\
\hline
LPQC     & $8.0$ & $0.055$ & $49.86$ & $7.06$  & $21.26$ \\
LMLP     & $8.0$ & $0.039$ & $91.80$ & $4.79$  & $39.45$ \\
QVAE-Mole& $2.9$ & $0.181$ & $87.53$ & $6.71$  & $24.81$ \\
IMPE     & $7.9$ & $0.657$ & $24.74$ & $21.10$ & $21.76$ \\
QuDDPM   & $7.3$ & $0.570$ & $27.15$ & $16.34$ & $21.17$ \\
\end{tabular}
\end{ruledtabular}
\end{table*}

\begin{figure*}[t]
  \centering
  \includegraphics[width=2\columnwidth]{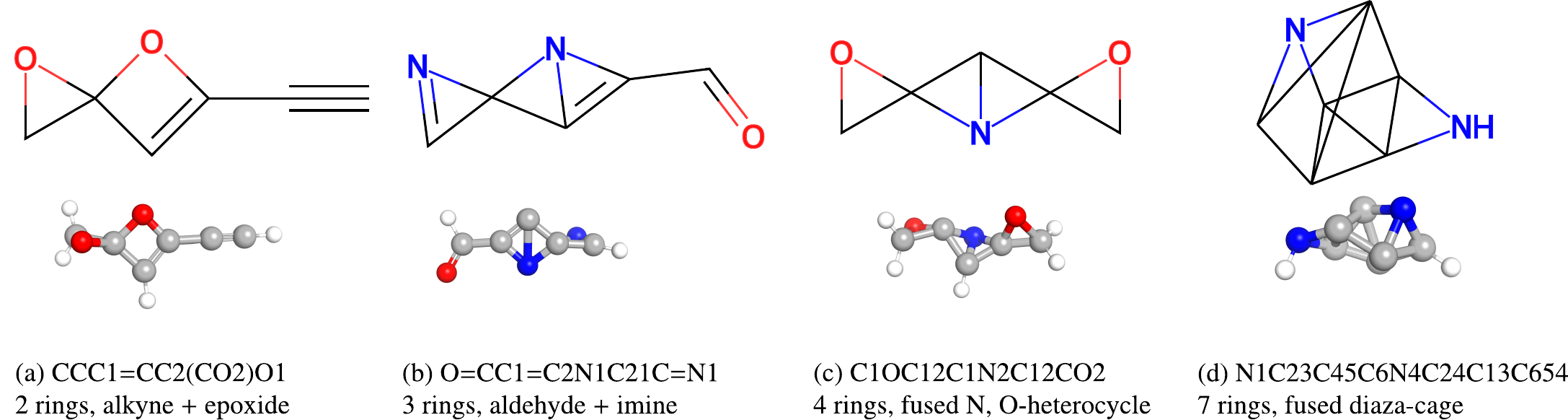}
  \caption{Four representative novel molecules generated by LPQC trained on the QM9-derived subset. All molecules contain exactly $8$ heavy atoms, matching the training constraint, and were absent from the training set. (a) Two-ring scaffold with an alkyne and an epoxide. (b) Three-ring structure with an aldehyde, an imine, and three heteroatoms (two N, one O). (c) Four-ring fused cage incorporating a carbonyl and an alkyne. (d) Seven-ring heavily fused diaza-cage representing the high-complexity end of the generated distribution. The structural variety across panels illustrates that LPQC explores diverse ring topologies and functional-group placements rather than collapsing onto a few canonical scaffolds. 2-D structures rendered via RDKit~\cite{rdkit} from the generated SMILES strings.}
  \label{fig:qm9:showcase}
\end{figure*}
Figure~\ref{fig:compare:QM9} compares the performance of LPQC (with the latent Gaussian prior of $M=4$ and $E=1$) against the IMPE framework (see Appendix~\ref{appx:IMPE} for details) in terms of test loss as varying PQC layers $L$ and training epochs. Solid and dashed lines denote averages over 10 trials for LPQC and 20 trials for IMPE (reproduced from Ref.~\cite{tran:2025:QUAP}), respectively. For fairness, we maintain identical settings for training data size and batch size.
LPQC consistently outperforms IMPE, achieving lower Wasserstein distances even with a few training epochs. This aligns with our observations: the latent space facilitates smoother navigation of the loss landscape, enabling efficient transformation of classical distributions to quantum ensembles.
Results in Appendix~\ref{appx:results} also demonstrate that MoEs can significantly improve the single-expert case.

To evaluate LPQC against a broader range of baselines on the concrete downstream task of 3-D molecule generation, we follow the protocol of Ref.~\cite{wu:qvae-mole:2024} and extend our QM9 evaluation to a larger subset of 4236 molecules with 8 heavy atoms and 2 ring systems, using 2000 molecules for training. After sampling quantum states from each generative model, we decode them into 3-D molecular structures via amplitude readout (Appendix~\ref{appx:encoding:3D}). We report four metrics: the average number of decoded heavy atoms per molecule (target: 8.0), the Wasserstein distance to the target state distribution, the validity rate (the fraction of generated molecules that satisfy basic chemical validity constraints under RDKit~\cite{rdkit}), and the composite uniqueness-times-validity and novelty-times-validity metrics standard in molecular generative modeling. We compare LPQC ($M=4$, $E=1$) against the classical LMLP baseline, the quantum variational autoencoder QVAE-Mole~\cite{wu:qvae-mole:2024}, the IMPE framework~\cite{tran:2025:QUAP}, and the recent quantum diffusion model QuDDPM~\cite{zhang:qddpm:prl:2024}. Results are summarized in Table~\ref{tab:QM9:generation}.

LPQC achieves competitive composite metrics, maintains a low Wasserstein distance to the target distribution, and preserves the correct mean number of heavy atoms (8.0). Its results are comparable to the classical LMLP baseline, while substantially outperforming the quantum baselines (QVAE-Mole, IMPE, QuDDPM) in Wasserstein distance and in matching the target atom count. We note that LMLP, although strong on raw generation metrics, is not a viable model for quantum data generation. It requires full tomography of every generated quantum state, which is only feasible at small scales and does not yield usable quantum states. LPQC, therefore, occupies a distinct regime in this quantum-to-quantum task.

\emph{Qualitative showcase of generated molecules.---}
Beyond the aggregate metrics in Table~\ref{tab:QM9:generation}, we inspect individual molecules that LPQC produces as novel and valid samples on this dataset. Figure~\ref{fig:qm9:showcase} displays four representative novel molecules generated by the trained LPQC, selected to span the range of structural complexity observed across the unique novel structures in the generated set. The chosen molecules vary from two-ring scaffolds with simple functional groups to heavily fused polycyclic cages with multiple embedded heteroatoms, while all preserve the target heavy-atom count of $8$. The presence of distinct functional groups (alkynes, carbonyls, imines, ethers) indicates that LPQC explores a structurally varied region of chemical space rather than collapsing onto a few prototypical scaffolds. 

A natural question is why the generated molecules exhibit a wide range of
ring counts (2--7), when the training subset was filtered to contain only
molecules with exactly 2 rings. The generated molecules lie outside the strict ring-count distribution of the
training data, indicating that the model explores topologies beyond simple
memorization or interpolation of training samples. The generative model learns over the full continuous
distribution of quantum states associated with the training molecules, not
over discrete graph structures directly, and the decoding step from quantum
state back to a molecular SMILES (Appendix~\ref{appx:decoding:3D}) admits
some flexibility in ring connectivity. The 7-ring cage in panel (d) represents the high-extrapolation
end of this behavior. Such heavily fused polycyclic scaffolds are absent
from the training subset and are chemically strained. We
view this extrapolation as a feature when the goal is exploring chemical
space beyond a curated training set. However, it is also a limitation when the
goal is faithful reproduction of a specific structural distribution.


\section{Related Works and Discussion}\label{sec:discussion}
UATs are originally established for classical NNs to demonstrate their capacity to approximate arbitrary continuous functions on compact subsets of Euclidean spaces~\cite{cybenko:1989:sigmoid,hornik:1991:apprx}. In quantum settings, these theorems reveal that PQCs can serve as powerful approximators for a broad spectrum of functions over classical data, and even achieve the optimal resource compared with classical NNs~\cite{tran:2021:prl:uap,yu:2024:NeurIPS:UAP}. This extends to generative modeling, where PQCs have been shown to approximate continuous multivariate probability distributions via expectation value sampling from measurements~\cite{barthe:2025:pqc:multivariate}, leveraging the Holevo bound to derive tight resource estimates. While these theorems affirm the roles of PQCs in classical data tasks such as regression and density estimation, they underscore the need for further generalization to purely quantum domains, such as distributions over density matrices.

Our paper introduces a generative framework where a low-dimensional latent variable $\bz \in \mathbb{R}^d$ (with $d$ typically small) is sampled from a prior distribution $r(\bz)$, mapped to PQC parameters $\btheta(\bz)$ via classical NNs $f(\bz;\bw)$, and then used to produce quantum states $\rho_{\bw}(\bz)$ via the PQC $U(\btheta(\bz))$. This pushforward approximates an arbitrary target distribution $Q$ over density operators in the 1-Wasserstein distance. While PQCs with a universal gate set can trivially approximate any single quantum state, modeling distributions over states requires learning a parameter distribution $p(\btheta)$ and converting its randomness into quantum outputs. Directly modeling a complex $p(\boldsymbol{\theta})$, especially with high-dimensional $\boldsymbol{\theta} \in \mathbb{R}^K$, is prone to the curse of dimensionality where sampling, density estimation, and optimization become inefficient.

In contrast, the latent $\bz$ is low-dimensional ($d \ll K$), making it easier to model sophisticated priors $r(\bz)$. The deterministic map $\bz \mapsto \btheta(\bz)$ (via NNs) then lifts the complexity to the high-dimensional $\btheta$, akin to classical normalizing flows: start with a simple low-dimensional base distribution and use a powerful mapping to generate complex outputs. Our framework leverages classical expressivity to mitigate barren plateaus by biasing initializations toward gradient-rich regions. Appendix~\ref{appx:BP-discussion} makes this precise, proving that the LPQC class admits barren-plateau-free initializations with polynomial-size generators, while a guarantee along the full training trajectory remains open. The latent $\bz$ decouples distributional complexity (modeled in low dimensions) from the nonlinear mapping to states, making the model more tractable for quantum scales.

Another prominent idea is to consider the randomness of the measurement process, which can be controlled to model quantum distributions within the MPE framework~\cite{choi2023preparing-d8c,cotler2023emergent-4eb}. Recent research uses this to demonstrate the universality of approximating pure-state distributions with a single many-body wave function, with practical layer-wise training to enhance optimization~\cite{tran:2025:QUAP}. MPE provides a compact, projection-based method optimized for pure-state ensembles with efficient incremental training. At the same time, LPQCs offer a more versatile, parameter-generation paradigm for mixed-state distributions, emphasizing classical expressivity to enhance universality and mitigate barren plateaus. MPE, while simpler in its core projection, may require more quantum resources for large ensembles due to repeated projections. Another key difference is the verifiability of generative models through repeated sampling. In LPQCs, it is easy to control the random seed to generate the latent variable $\bz$. At the same time, it requires overhead in quantum resources for post-selection to control random measurements in the MPE framework.

The LPQC construction used in our empirical experiments is straightforward and reflects the procedure used in the UAT proof. However, it can be made more flexible. For instance, replacing the deterministic NN mapping $\bz \mapsto \btheta(\bz)$ with normalizing flows~\cite{zou:2025:npj:flow:vqe} would allow learning more complex transformations from the latent space to the parameter space. It can potentially improve the approximation of $p(\btheta)$ while preserving efficient sampling and density estimation, and also avoid the need to set the number $M$ of prior modes heuristically. This could further mitigate barren plateaus by generating parameter distributions conditioned on quantum-system contexts (e.g., molecular descriptors), thereby enabling warm-starting and accelerating convergence.

\section*{Code and Data Availability}
Source code to reproduce all experiments is available in the GitHub
repository~\cite{tran:2026:github}. The QM9 dataset is publicly available
from Ref.~\cite{ramakrishnan:2014:QM9}.


\bibliographystyle{apsrev4-2}
\bibliography{main}

@article{peruzzo:2014:VQE,
  title   = {A variational eigenvalue solver on a photonic quantum processor},
  author  = {Peruzzo, Alberto and McClean, Jarrod and Shadbolt, Peter and Yung, Man-Hong and Zhou, Xiao-Qi and Love, Peter J. and Aspuru-Guzik, Al{\'a}n and O'Brien, Jeremy L.},
  journal = {Nat. Commun.},
  volume  = {5},
  pages   = {4213},
  year    = {2014},
  month   = jul,
  doi     = {10.1038/ncomms5213},
  url     = {https://doi.org/10.1038/ncomms5213}
}

@article{cerezo:2021:vqas:natrev,
  title   = {Variational quantum algorithms},
  author  = {Cerezo, M. and Arrasmith, Andrew and Babbush, Ryan and Benjamin, Simon C. and Endo, Suguru and Fujii, Keisuke and McClean, Jarrod R. and Mitarai, Kosuke and Yuan, Xiao and Cincio, Lukasz and Coles, Patrick J.},
  journal = {Nat. Rev. Phys.},
  volume  = {3},
  number  = {9},
  pages   = {625--644},
  year    = {2021},
  month   = aug,
  doi     = {10.1038/s42254-021-00348-9},
  url     = {https://doi.org/10.1038/s42254-021-00348-9}
}

@misc{farhi:2014:qaoa,
  title         = {A Quantum Approximate Optimization Algorithm},
  author        = {Farhi, Edward and Goldstone, Jeffrey and Gutmann, Sam},
  year          = {2014},
  eprint        = {1411.4028},
  archivePrefix = {arXiv},
  primaryClass  = {quant-ph},
  url           = {https://arxiv.org/abs/1411.4028}
}

@misc{yi:2025:geom:opt:Lie,
  title         = {Geometric Optimization on {L}ie Groups: A {L}ie-Theoretic Explanation of Barren Plateau Mitigation for Variational Quantum Algorithms},
  author        = {Yi, Zhehao and Bhadani, Rahul},
  year          = {2025},
  eprint        = {2512.02078},
  archivePrefix = {arXiv},
  primaryClass  = {quant-ph},
  url           = {https://arxiv.org/abs/2512.02078}
}

@article{lucas:pra:2022:avoidBP,
  title   = {Avoiding barren plateaus with classical deep neural networks},
  author  = {Friedrich, Lucas and Maziero, Jonas},
  journal = {Phys. Rev. A},
  volume  = {106},
  number  = {4},
  pages   = {042433},
  year    = {2022},
  month   = oct,
  doi     = {10.1103/PhysRevA.106.042433},
  url     = {https://doi.org/10.1103/PhysRevA.106.042433}
}

@article{zou:2025:npj:flow:vqe,
  title   = {Generative flow-based warm start of the variational quantum eigensolver},
  author  = {Zou, Hang and Rahm, Martin and Kockum, Anton Frisk and Olsson, Simon},
  journal = {npj Quantum Inf.},
  volume  = {12},
  number  = {1},
  pages   = {5},
  year    = {2026},
  doi     = {10.1038/s41534-025-01159-x},
  url     = {https://doi.org/10.1038/s41534-025-01159-x}
}

@misc{tran:2025:QUAP,
  title         = {Universality of Many-body Projected Ensemble for Learning Quantum Data Distribution},
  author        = {Tran, Quoc Hoan and Chinzei, Koki and Endo, Yasuhiro and Oshima, Hirotaka},
  year          = {2026},
  eprint        = {2601.18637},
  archivePrefix = {arXiv},
  primaryClass  = {quant-ph},
  url           = {https://arxiv.org/abs/2601.18637}
}

@article{tran:2021:prl:uap,
  title   = {Universal Approximation Property of Quantum Machine Learning Models in Quantum-Enhanced Feature Spaces},
  author  = {Goto, Takahiro and Tran, Quoc Hoan and Nakajima, Kohei},
  journal = {Phys. Rev. Lett.},
  volume  = {127},
  number  = {9},
  pages   = {090506},
  year    = {2021},
  month   = aug,
  doi     = {10.1103/PhysRevLett.127.090506},
  url     = {https://doi.org/10.1103/PhysRevLett.127.090506}
}

@inproceedings{yu:2024:NeurIPS:UAP,
  title         = {Non-asymptotic approximation error bounds of parameterized quantum circuits},
  author        = {Yu, Zhan and Chen, Qiuhao and Jiao, Yuling and Li, Yinan and Lu, Xiliang and Wang, Xin and Yang, Jerry Zhijian},
  booktitle     = {Adv. Neural Inf. Process. Syst. (NeurIPS)},
  volume        = {37},
  year          = {2024},
  eprint        = {2310.07528},
  archivePrefix = {arXiv},
  primaryClass  = {quant-ph},
  url           = {https://arxiv.org/abs/2310.07528}
}

@article{barthe:2025:pqc:multivariate,
  title   = {Parameterized quantum circuits as universal generative models for continuous multivariate distributions},
  author  = {Barthe, Alice and Grossi, Michele and Vallecorsa, Sofia and Tura, Jordi and Dunjko, Vedran},
  journal = {npj Quantum Inf.},
  volume  = {11},
  number  = {1},
  pages   = {121},
  year    = {2025},
  month   = jul,
  doi     = {10.1038/s41534-025-01064-3},
  url     = {https://doi.org/10.1038/s41534-025-01064-3}
}

@article{cybenko:1989:sigmoid,
  title   = {Approximation by superpositions of a sigmoidal function},
  author  = {Cybenko, George},
  journal = {Math. Control Signals Syst.},
  volume  = {2},
  number  = {4},
  pages   = {303--314},
  year    = {1989},
  month   = dec,
  doi     = {10.1007/BF02551274},
  url     = {https://doi.org/10.1007/BF02551274}
}

@article{hornik:1991:apprx,
  title   = {Approximation capabilities of multilayer feedforward networks},
  author  = {Hornik, Kurt},
  journal = {Neural Netw.},
  volume  = {4},
  number  = {2},
  pages   = {251--257},
  year    = {1991},
  doi     = {10.1016/0893-6080(91)90009-T},
  url     = {https://doi.org/10.1016/0893-6080(91)90009-T}
}

@article{clean:2018:natcom:barren,
  title   = {Barren plateaus in quantum neural network training landscapes},
  author  = {McClean, Jarrod R. and Boixo, Sergio and Smelyanskiy, Vadim N. and Babbush, Ryan and Neven, Hartmut},
  journal = {Nat. Commun.},
  volume  = {9},
  pages   = {4812},
  year    = {2018},
  doi     = {10.1038/s41467-018-07090-4},
  url     = {https://doi.org/10.1038/s41467-018-07090-4}
}

@misc{rdkit,
  title        = {{RDKit}: Open-source cheminformatics},
  howpublished = {\url{https://www.rdkit.org}},
  note         = {Version 2025.03.6, DOI: 10.5281/zenodo.591637}
}

@article{zhang:2023:tensorcircuit,
  title     = {{TensorCircuit}: a Quantum Software Framework for the {NISQ} Era},
  author    = {Zhang, Shi-Xin and Allcock, Jonathan and Wan, Zhou-Quan and Liu, Shuo and Sun, Jiace and Yu, Hao and Yang, Xing-Han and Qiu, Jiezhong and Ye, Zhaofeng and Chen, Yu-Qin and Lee, Chee-Kong and Zheng, Yi-Cong and Jian, Shao-Kai and Yao, Hong and Hsieh, Chang-Yu and Zhang, Shengyu},
  journal   = {Quantum},
  volume    = {7},
  pages     = {912},
  year      = {2023},
  month     = feb,
  doi       = {10.22331/q-2023-02-02-912},
  url       = {https://doi.org/10.22331/q-2023-02-02-912}
}

@software{jax:2018:github,
  title   = {{JAX}: composable transformations of {Python}+{NumPy} programs},
  author  = {Bradbury, James and Frostig, Roy and Hawkins, Peter and Johnson, Matthew James and Leary, Chris and Maclaurin, Dougal and Necula, George and Paszke, Adam and Vander{P}las, Jake and Wanderman-{M}ilne, Skye and Zhang, Qiao},
  version = {0.3.13},
  year    = {2018},
  url     = {http://github.com/jax-ml/jax}
}

@book{fremlin2011measure,
  title     = {Measure Theory: Topological Measure Spaces. Volume 4},
  author    = {Fremlin, D. H.},
  year      = {2011},
  publisher = {Torres Fremlin},
  url       = {https://www1.essex.ac.uk/maths/people/fremlin/mt.htm}
}

@article{jaros:2009:superfid,
  title   = {Sub- and super-fidelity as bounds for quantum fidelity},
  author  = {Miszczak, Jaros{\l}aw Adam and Pucha{\l}a, Zbigniew and Horodecki, Pawe{\l} and Uhlmann, Armin and {\.Z}yczkowski, Karol},
  journal = {Quantum Inf. Comput.},
  volume  = {9},
  number  = {1},
  pages   = {103--130},
  year    = {2009},
  month   = jan,
  issn    = {1533-7146},
  doi     = {10.26421/QIC9.1-2-7},
  url     = {https://doi.org/10.26421/QIC9.1-2-7}
}

@article{choi2023preparing-d8c,
  title   = {Preparing random states and benchmarking with many-body quantum chaos},
  author  = {Choi, Joonhee and Shaw, Adam L. and Madjarov, Ivaylo S. and Xie, Xin and Finkelstein, Ran and Covey, Jacob P. and Cotler, Jordan S. and Mark, Daniel K. and Huang, Hsin-Yuan and Kale, Anant and Pichler, Hannes and Brand{\~a}o, Fernando G. S. L. and Choi, Soonwon and Endres, Manuel},
  journal = {Nature},
  volume  = {613},
  number  = {7944},
  pages   = {468--473},
  year    = {2023},
  doi     = {10.1038/s41586-022-05442-1},
  url     = {https://doi.org/10.1038/s41586-022-05442-1}
}

@article{cotler2023emergent-4eb,
  title   = {Emergent Quantum State Designs from Individual Many-Body Wave Functions},
  author  = {Cotler, Jordan S. and Mark, Daniel K. and Huang, Hsin-Yuan and Hern{\'a}ndez, Felipe and Choi, Joonhee and Shaw, Adam L. and Endres, Manuel and Choi, Soonwon},
  journal = {PRX Quantum},
  volume  = {4},
  number  = {1},
  pages   = {010311},
  year    = {2023},
  doi     = {10.1103/PRXQuantum.4.010311},
  url     = {https://doi.org/10.1103/PRXQuantum.4.010311}
}

@article{ramakrishnan:2014:QM9,
  title   = {Quantum chemistry structures and properties of 134 kilo molecules},
  author  = {Ramakrishnan, Raghunathan and Dral, Pavlo O. and Rupp, Matthias and von Lilienfeld, O. Anatole},
  journal = {Sci. Data},
  volume  = {1},
  pages   = {140022},
  year    = {2014},
  month   = aug,
  doi     = {10.1038/sdata.2014.22},
  url     = {https://doi.org/10.1038/sdata.2014.22}
}

@misc{rathi:2023:3dqae:encoding,
  title         = {{3D-QAE}: Fully Quantum Auto-Encoding of {3D} Point Clouds},
  author        = {Rathi, Lakshika and Tretschk, Edith and Theobalt, Christian and Dabral, Rishabh and Golyanik, Vladislav},
  year          = {2023},
  eprint        = {2311.05604},
  archivePrefix = {arXiv},
  primaryClass  = {quant-ph},
  url           = {https://arxiv.org/abs/2311.05604}
}

@inproceedings{wu:qvae-mole:2024,
  title     = {{QVAE-Mole}: The Quantum {VAE} with Spherical Latent Variable Learning for {3-D} Molecule Generation},
  author    = {Wu, Huaijin and Ye, Xinyu and Yan, Junchi},
  booktitle = {Adv. Neural Inf. Process. Syst. (NeurIPS)},
  volume    = {37},
  year      = {2024},
  url       = {https://openreview.net/forum?id=RqvesBxqDo}
}

@article{jacobs:1991:moe,
  title   = {Adaptive Mixtures of Local Experts},
  author  = {Jacobs, Robert A. and Jordan, Michael I. and Nowlan, Steven J. and Hinton, Geoffrey E.},
  journal = {Neural Comput.},
  volume  = {3},
  number  = {1},
  pages   = {79--87},
  year    = {1991},
  doi     = {10.1162/neco.1991.3.1.79},
  url     = {https://doi.org/10.1162/neco.1991.3.1.79}
}

@article{jordan:1994:hierarchical:moe,
  title   = {Hierarchical Mixtures of Experts and the {EM} Algorithm},
  author  = {Jordan, Michael I. and Jacobs, Robert A.},
  journal = {Neural Comput.},
  volume  = {6},
  number  = {2},
  pages   = {181--214},
  year    = {1994},
  doi     = {10.1162/neco.1994.6.2.181},
  url     = {https://doi.org/10.1162/neco.1994.6.2.181}
}

@inproceedings{shazeer:2017:sparse:moe,
  title     = {Outrageously Large Neural Networks: The Sparsely-Gated Mixture-of-Experts Layer},
  author    = {Shazeer, Noam and Mirhoseini, Azalia and Maziarz, Krzysztof and Davis, Andy and Le, Quoc and Hinton, Geoffrey and Dean, Jeff},
  booktitle = {Int. Conf. Learn. Represent. (ICLR)},
  year      = {2017},
  url       = {https://openreview.net/forum?id=B1ckMDqlg}
}

@article{fedus:2022:switch,
  title   = {Switch Transformers: Scaling to Trillion Parameter Models with Simple and Efficient Sparsity},
  author  = {Fedus, William and Zoph, Barret and Shazeer, Noam},
  journal = {J. Mach. Learn. Res.},
  volume  = {23},
  number  = {120},
  pages   = {1--39},
  year    = {2022},
  url     = {http://jmlr.org/papers/v23/21-0998.html}
}

@article{rudolph:2022:wasserstein,
  title   = {Variational Quantum Generators: Generative Adversarial Quantum Machine Learning for Continuous Distributions},
  author  = {Romero, Jonathan and Aspuru-Guzik, Al{\'a}n},
  journal = {Adv. Quantum Technol.},
  volume  = {4},
  number  = {1},
  pages   = {2000003},
  year    = {2021},
  doi     = {10.1002/qute.202000003},
  url     = {https://doi.org/10.1002/qute.202000003}
}

@misc{qute:2020:hybrid:gen,
  title         = {Latent Style-based Quantum {GAN} for high-quality Image Generation},
  author        = {Chang, Su Yeon and Thanasilp, Supanut and Le Saux, Bertrand and Vallecorsa, Sofia and Grossi, Michele},
  year          = {2024},
  eprint        = {2406.02668},
  archivePrefix = {arXiv},
  primaryClass  = {quant-ph},
  url           = {https://arxiv.org/abs/2406.02668}
}

@article{zhang:qddpm:prl:2024,
  title   = {Generative Quantum Machine Learning via Denoising Diffusion Probabilistic Models},
  author  = {Zhang, Bingzhi and Xu, Peng and Chen, Xiaohui and Zhuang, Quntao},
  journal = {Phys. Rev. Lett.},
  volume  = {132},
  number  = {10},
  pages   = {100602},
  year    = {2024},
  month   = mar,
  doi     = {10.1103/PhysRevLett.132.100602},
  url     = {https://doi.org/10.1103/PhysRevLett.132.100602}
}

@article{abrams:1999:qpe,
  title   = {Quantum Algorithm Providing Exponential Speed Increase for Finding Eigenvalues and Eigenvectors},
  author  = {Abrams, Daniel S. and Lloyd, Seth},
  journal = {Phys. Rev. Lett.},
  volume  = {83},
  number  = {24},
  pages   = {5162--5165},
  year    = {1999},
  doi     = {10.1103/PhysRevLett.83.5162},
  url     = {https://doi.org/10.1103/PhysRevLett.83.5162}
}

@inproceedings{gilyen:2019:qsvt,
  title     = {Quantum Singular Value Transformation and Beyond: Exponential Improvements for Quantum Matrix Arithmetics},
  author    = {Gily{\'e}n, Andr{\'a}s and Su, Yuan and Low, Guang Hao and Wiebe, Nathan},
  booktitle = {Proc. ACM Symp. Theory Comput. (STOC)},
  pages     = {193--204},
  year      = {2019},
  doi       = {10.1145/3313276.3316366},
  url       = {https://doi.org/10.1145/3313276.3316366}
}

@article{ragone:2024:lie:bp,
  title   = {A {Lie} Algebraic Theory of Barren Plateaus for Deep Parameterized Quantum Circuits},
  author  = {Ragone, Michael and Bakalov, Bojko N. and Sauvage, Fr{\'e}d{\'e}ric and Kemper, Alexander F. and Ortiz Marrero, Carlos and Larocca, Mart{\'i}n and Cerezo, M.},
  journal = {Nat. Commun.},
  volume  = {15},
  number  = {1},
  pages   = {7172},
  year    = {2024},
  doi     = {10.1038/s41467-024-49909-3},
  url     = {https://doi.org/10.1038/s41467-024-49909-3}
}

@inproceedings{li:2018:measuring:intrinsic:dim,
  title         = {Measuring the Intrinsic Dimension of Objective Landscapes},
  author        = {Li, Chunyuan and Farkhoor, Heerad and Liu, Rosanne and Yosinski, Jason},
  booktitle     = {Int. Conf. Learn. Represent. (ICLR)},
  year          = {2018},
  eprint        = {1804.08838},
  archivePrefix = {arXiv},
  url           = {https://openreview.net/forum?id=ryup8-WCW}
}

@article{grant:2019:identity:block,
  title   = {An Initialization Strategy for Addressing Barren Plateaus in Parametrized Quantum Circuits},
  author  = {Grant, Edward and Wossnig, Leonard and Ostaszewski, Mateusz and Benedetti, Marcello},
  journal = {Quantum},
  volume  = {3},
  pages   = {214},
  year    = {2019},
  doi     = {10.22331/q-2019-12-09-214},
  url     = {https://doi.org/10.22331/q-2019-12-09-214}
}

@article{sack:2022:warm:bp,
  title   = {Avoiding Barren Plateaus Using Classical Shadows},
  author  = {Sack, Stefan H. and Medina, Raimel A. and Michailidis, Alexios A. and Kueng, Richard and Serbyn, Maksym},
  journal = {PRX Quantum},
  volume  = {3},
  number  = {2},
  pages   = {020365},
  year    = {2022},
  doi     = {10.1103/PRXQuantum.3.020365},
  url     = {https://doi.org/10.1103/PRXQuantum.3.020365}
}

@software{tran:2026:github,
  title   = {Source code of the paper ``{Latent}-Conditioned Parameterized Quantum Circuits as Universal Approximators for Distributions over Quantum States''},
  author  = {Tran, Quoc Hoan and Chinzei, Koki and Endo, Yasuhiro and Oshima, Hirotaka},
  year    = {2026},
  url     = {https://github.com/FujitsuResearch/latent-uap}
}

@article{terhal:2000:gibbs,
  title   = {Problem of equilibration and the computation of correlation functions on a quantum computer},
  author  = {Terhal, Barbara M. and DiVincenzo, David P.},
  journal = {Phys. Rev. A},
  volume  = {61},
  number  = {2},
  pages   = {022301},
  year    = {2000},
  month   = jan,
  doi     = {10.1103/PhysRevA.61.022301},
  url     = {https://doi.org/10.1103/PhysRevA.61.022301}
}

@article{poulin:2009:gibbs,
  title   = {Sampling from the Thermal Quantum Gibbs State and Evaluating Partition Functions with a Quantum Computer},
  author  = {Poulin, David and Wocjan, Pawel},
  journal = {Phys. Rev. Lett.},
  volume  = {103},
  number  = {22},
  pages   = {220502},
  year    = {2009},
  month   = nov,
  doi     = {10.1103/PhysRevLett.103.220502},
  url     = {https://doi.org/10.1103/PhysRevLett.103.220502}
}

@article{liu:2018:qad,
  title   = {Quantum machine learning for quantum anomaly detection},
  author  = {Liu, Nana and Rebentrost, Patrick},
  journal = {Phys. Rev. A},
  volume  = {97},
  number  = {4},
  pages   = {042315},
  year    = {2018},
  month   = apr,
  doi     = {10.1103/PhysRevA.97.042315},
  url     = {https://doi.org/10.1103/PhysRevA.97.042315}
}

@article{childs:2021:trotter:prx,
  title   = {Theory of {T}rotter Error with Commutator Scaling},
  author  = {Childs, Andrew M. and Su, Yuan and Tran, Minh C. and Wiebe, Nathan and Zhu, Shuchen},
  journal = {Phys. Rev. X},
  volume  = {11},
  number  = {1},
  pages   = {011020},
  year    = {2021},
  doi     = {10.1103/PhysRevX.11.011020},
  url     = {https://doi.org/10.1103/PhysRevX.11.011020}
}

@inproceedings{zhang:2022:gaussian:init,
  title     = {Escaping from the Barren Plateau via {G}aussian Initializations in Deep Variational Quantum Circuits},
  author    = {Zhang, Kaining and Liu, Liu and Hsieh, Min-Hsiu and Tao, Dacheng},
  booktitle = {Adv. Neural Inf. Process. Syst. (NeurIPS)},
  volume    = {35},
  pages     = {18612--18627},
  year      = {2022},
  eprint    = {2203.09376},
  archivePrefix = {arXiv},
  url       = {https://arxiv.org/abs/2203.09376}
}


\appendix

\section{Encoding Molecules to Quantum States}\label{appx:encoding:3D}

The QM9 dataset features small organic molecules, each incorporating up to
9 heavy atoms (C, N, O, F) supplemented by hydrogens, for a maximum of 29
atoms overall per molecule. For any given molecule (indexed as
$j = 0, 1, \ldots, N-1$ within the dataset), it is expressed as
$\{(\bv^j_i, \ba^j_i)\}_{i=0}^{m_j-1}$, with $m_j$ denoting the heavy-atom
count, $\bv^j_i \in \mathbb{R}^3$ representing the 3-D positions, and
$\ba^j_i \in \{0,1\}^k$ the one-hot encoded atomic species for the $j$th
molecule. We focus on heavy atoms only, giving $k=4$ distinct atom types
(C, N, O, F).

The encoding must address arbitrary shifts, orientations, atom permutations,
and quantum normalization, while ensuring that the resulting amplitudes are
real and non-negative for ease of both encoding and decoding. We use the
strategy of Refs.~\cite{wu:qvae-mole:2024,rathi:2023:3dqae:encoding}, with the following
key stages.

\begin{enumerate}
\item \textbf{Canonical atom ordering.} Atoms are reordered using canonical
    SMILES strings generated via the RDKit toolkit~\cite{rdkit}, producing
    an ordering independent of the original input.
\item \textbf{Centering and rotation.} The molecule is translated so its centroid coincides with the origin. It is then rotated such that the first atom from the canonical SMILES ordering aligns with the $z$-axis.
\item \textbf{Global bounding-box normalization.} Across the entire training
    subset, we compute the per-axis minimum
    $\bv_{\min} = (v_{\min,x}, v_{\min,y}, v_{\min,z})$ and the scalar
    bounding-box side
    $\Delta = \max_{a \in \{x,y,z\}}(v_{\max,a} - v_{\min,a})$.
    Each atom's coordinates are then normalized as
    $\tilde{\bv}_i = (\bv_i - \bv_{\min})/\Delta$, placing all normalized
    coordinates in $[0,1]$ along the longest axis and within $[0,1]$ along
    the other two.
\item \textbf{Per-atom feature concatenation.} For atom $i$ with normalized
    position $\tilde{\bv}_i = (\tilde x_i, \tilde y_i, \tilde z_i)$ and
    one-hot atom-type vector $\ba_i = (p_i^C, p_i^N, p_i^O, p_i^F)$,
    the per-atom feature vector is
    $(\tilde x_i, \tilde y_i, \tilde z_i, p_i^C, p_i^N, p_i^O, p_i^F)$,
    of length $3 + k = 7$.
\item \textbf{Auxiliary vector.} A length-$m$ auxiliary vector $\boldsymbol{\alpha}$
    is appended, with $i$-th entry
    $\alpha_i = \sqrt{\max(0,\; 3 - \tilde x_i^2 - \tilde y_i^2 - \tilde z_i^2)}$.
    This auxiliary serves to make the encoded vector reach unit norm after
    global rescaling and is not used in decoding.
\item \textbf{Global normalization and amplitude encoding.} The per-atom
    feature vectors are concatenated, followed by the auxiliary vector and
    zero-padding to length $2^n - 1$, where $n = 7$ is the qubit count.
    The final amplitude (index $2^n - 1$) stores the atom count $m$ as
    bookkeeping. The resulting vector is divided by $2\sqrt{m}$ to obtain
    unit Euclidean norm, and the amplitudes $\{\alpha_j\}_{j=0}^{2^n-1}$
    define the encoded state $\ket{\psi} = \sum_j \alpha_j \ket{j}$ on $n=7$
    qubits.
\end{enumerate}

\section{Decoding 3-D Molecules from Generated Quantum States and SMILES Generation}\label{appx:decoding:3D}

The encoding procedure in Appendix~\ref{appx:encoding:3D} maps each QM9
molecule to a 7-qubit pure quantum state whose amplitudes carry the
zero-padded, normalized concatenation of per-atom (position, type)
vectors. To evaluate generation quality, we apply the inverse procedure
to each quantum state produced by LPQC. The pipeline follows the
graph-theoretic protocol of
Refs.~\cite{wu:qvae-mole:2024,rathi:2023:3dqae:encoding}, and proceeds
through five stages.

\paragraph{Stage 1: Amplitude readout.}
LPQC generates density matrices on the 7-qubit Hilbert space. For the
QM9 task, the generated states are pure
($\rho_\bw(\bz) = \ket{\psi(\bz)}\!\bra{\psi(\bz)}$, with $m_{\text{anc}}=0$
ancilla qubits). The decoder operates on the amplitude vector
$\bx = |\psi(\bz)|$ obtained by taking the elementwise modulus of the
wavefunction amplitudes. Since the encoder produces non-negative real
amplitudes by construction, modulus is the natural inverse of the
decoding side and discards only the global phase, which has no physical
meaning.

\paragraph{Stage 2: From amplitudes to atomic positions and types.}
The amplitude vector $\bx \in \mathbb{R}^{2^7}$ is partitioned into a
sequence of $7$-entry per-atom blocks followed by an auxiliary block.
The $i$-th atom block occupies indices $[7i, 7i+7)$ in the layout
$[\tilde x_i, \tilde y_i, \tilde z_i, p_i^C, p_i^N, p_i^O, p_i^F]$. We
determine the number of occupied atom slots $m$ as follows: starting
from $i=0$, we check whether $\sum_{t \in \{C,N,O,F\}} p_i^t > \tau$
for an occupancy threshold $\tau$ (we use $\tau = 0.4$). The first $i$
for which this sum falls below $\tau$ is taken as $m$. Alternatively,
when the target atom count is known in advance, $m$ may be set directly
without thresholding.

To invert the encoder's global normalization, each amplitude is first
rescaled by a factor of $2\sqrt{m}$: $\tilde{\bx} = 2\sqrt{m}\,\bx$. The
atom type of the $i$-th occupied slot is assigned by
$t_i = \arg\max_{t \in \{C,N,O,F\}}\tilde p_i^t$,
corresponding to atomic numbers $\{6, 7, 8, 9\}$ respectively, and the
unnormalized coordinates are read out as
$\tilde{\bv}_i = (\tilde x_i, \tilde y_i, \tilde z_i)$. Physical
coordinates are then recovered via the affine map
$
    \bv_i \;=\; 2\Delta\,\tilde{\bv}_i \;+\; \bv_{\min},
$
where $\Delta$ is the scalar bounding-box side and $\bv_{\min}$ is the
per-axis minimum, both saved from the training set. For our QM9 subset,
$\Delta \approx 10.36$\,\AA{} and
$\bv_{\min} \approx (-3.93, -4.54, -5.11)$\,\AA{}. The factor of $2$
relative to the strict inverse of the encoder's normalization
($\bv_i = \Delta\tilde{\bv}_i + \bv_{\min}$) follows the convention of
Ref.~\cite{wu:qvae-mole:2024} and effectively rescales the coordinate
spread relative to the fixed covalent-bond thresholds used in Stage~3.
Empirically, this choice improves the validity rate of decoded molecules
at the cost of a modest reduction in uniqueness, reflecting that the
stretched coordinates produce sparser, more chemically conservative
adjacency matrices.

\paragraph{Stage 3: Bond inference via covalent-radius distance thresholding.}
Given the recovered Cartesian coordinates $\{\bv_i\}_{i=1}^m$ and atomic
numbers $\{a_i\}_{i=1}^m$ (with $a_i \in \{6,7,8,9\}$ for C, N, O, F), we
construct the adjacency-connectivity matrix
$\mathrm{AC} \in \{0,1\}^{m \times m}$ by thresholding pairwise interatomic
distances against tabulated covalent-bond cutoffs. Specifically, for each
atom pair $(i,j)$,
\[
    \mathrm{AC}_{ij} \;=\; \begin{cases}
        1 & \text{if } d_{ij} \le \tau_{a_i a_j} \text{ and } \deg(j) < \nu_{\max}(a_j), \\
        0 & \text{otherwise},
    \end{cases}
\]
where $d_{ij} = \|\bv_i - \bv_j\|$ and $\tau_{a_i a_j}$ is a covalent-bond
threshold (computed as the sum of the covalent radii of atoms $i$ and $j$
scaled by a factor of $2.2$ to accommodate slightly extended generated
geometries), and $\nu_{\max}(a_j)$ is the maximum valence of atom $j$
($4$ for C, $3$ for N, $2$ for O, $1$ for F). The valence constraint during
adjacency construction prevents over-bonding in regions of unusually high
atomic density. Connectivity is then checked to determine whether the graph is connected. 
A molecular graph that fragments into disconnected components is rejected as invalid.

\paragraph{Stage 4: Bond-order assignment via valence completion.}
The adjacency matrix $\mathrm{AC}$ only encodes \emph{whether} two atoms
are bonded, not the bond order (single, double, triple). To assign bond
orders, we solve the valence-completion problem: find a bond-order matrix
$\mathrm{BO} \in \mathbb{Z}_{\ge 0}^{m \times m}$ such that
$\mathrm{BO}_{ij} \ge \mathrm{AC}_{ij}$ for all $(i,j)$ and the row sums
$\sum_j \mathrm{BO}_{ij}$ match the standard valence of each atom type.
Algorithmically, we identify the set of unsaturated atoms
$\mathrm{UA} = \{i : \deg_{\mathrm{AC}}(i) < \nu(a_i)\}$ and greedily
promote single bonds within $\mathrm{UA}$ to double and triple bonds via
a maximum-weight matching on the unsaturated-atom subgraph, iterating
until all valences are satisfied or the procedure terminates without a
valid assignment. This step is implemented using the graph-theoretic
procedure of Refs.~\cite{wu:qvae-mole:2024,rathi:2023:3dqae:encoding}. 
If no valid bond-order assignment exists,
the molecule is flagged invalid.

\paragraph{Stage 5: Construction of RDKit molecule and SMILES generation.}
With atomic positions, types, and bond orders in hand, we construct an
RDKit \texttt{RWMol} object by sequentially adding atoms with their atomic
numbers and bonds with their inferred orders. The resulting molecule is
sanitized using RDKit's standard sanitization (\texttt{Chem.SanitizeMol}),
which checks aromaticity perception, valence consistency, kekulization, and stereochemistry assignment; molecules that fail sanitization are flagged invalid. The canonical SMILES string is then obtained via
\texttt{Chem.MolToSmiles(mol, isomericSmiles=True)}. We additionally verify
chemical validity by round-tripping the SMILES (parse $\to$ regenerate)
and re-sanitizing, which catches edge cases such as stereochemical
inconsistencies that pass the first sanitization but fail the round trip.

\paragraph{Generation-quality metrics.}
The validity, uniqueness, and novelty metrics reported in
Table~\ref{tab:QM9:generation} are computed as follows. \emph{Validity}
is the fraction of generated quantum states whose decoded SMILES passes
both the sanitization and round-trip checks. \emph{Uniqueness} is the
fraction of valid SMILES that are distinct as canonical strings.
\emph{Novelty} is the fraction of valid SMILES that do not appear in the
QM9 training set. The composite metrics $\mathrm{Unique} \times \mathrm{Valid}$ and $\mathrm{Novel} \times \mathrm{Valid}$ report the fraction of all generated
samples that are both valid and unique, or both valid and novel,
respectively, providing single measures that penalize models that produce
many invalid samples to inflate diversity. We use the implementation
provided by Ref.~\cite{wu:qvae-mole:2024} for all metric computations to
ensure consistency with prior work.

\section{Incremental Many-Body Projected Ensemble (IMPE)}\label{appx:IMPE}
We follow the explanation and experimental settings in \cite{tran:2025:QUAP} to explain the IMPE method.
The goal is to approximate a target distribution $Q_t$ over pure $n$-qubit states by constructing a parameterized ensemble $Q_{\bzet}$ through $T$ iterative cycles of unitary transformations and measurements. First, sample a training dataset $\mathcal{S} = \{ |\psi_0\rangle, \ldots, |\psi_{N_s-1}\rangle \}$ of $N_s$ states from $Q_t$. Start with an initial ensemble $\tilde{\mathcal{S}}_0 = \{ |\tilde{\psi}^{(0)}_j\rangle \}_j$ sampled from a simple random distribution, such as Haar product states.
At each cycle $t = 0, \ldots, T-1$, apply a PQC $V_t = V(\bzet_t)$ to the composite system comprising the primary data subsystem $\cH_S$ ($n_d$ qubits) and auxiliary subsystem $\cH_A$ ($n_a$ qubits), with states initialized as $|\tilde{\Psi}^{(t)}_j\rangle = |\tilde{\psi}^{(t)}_j\rangle_P \otimes |0\rangle_A$. Perform a projective measurement on $A$ in the computational basis, yielding outcome $\bz^{(t)}_j \in \{0,1\}^{n_a}$ and updating the data state via:
\begin{align}
\Phi^{(t)}_j(|\tilde{\psi}^{(t)}_j\rangle) &= \frac{(I_S \otimes \Pi_A) V_t |\tilde{\Psi}^{(t)}_j\rangle}{\sqrt{\langle \tilde{\Psi}^{(t)}_j | V_t^\dagger (I_S \otimes \Pi_A) V_t | \tilde{\Psi}^{(t)}_j \rangle}} \\
&= |\tilde{\psi}^{(t+1)}_j\rangle_S \otimes |\bz^{(t)}_j\rangle_A,
\end{align}
where $\Pi_A = |\bz^{(t)}_j\rangle\langle \bz^{(t)}_j|_A$. The resulting ensemble $\tilde{\mathcal{S}}_{t+1} = \{ |\tilde{\psi}^{(t+1)}_j\rangle \}_j$ iteratively refines the approximation, reducing resource demands compared to standard MPE.
The parameters $\bzet_t$ of each $V_t$ are optimized layer-wise to minimize a distance metric $\cD_{\mathrm{Wass}}(\mathcal{S}, \tilde{\mathcal{S}}_{t+1})$ between the training and generated ensembles; once optimized, $\bzet_t$ is fixed before proceeding to the next cycle. This decomposes the problem into $T$ sub-tasks, each with fewer parameters, thereby aiding convergence.
In numerical experiments, each $V_t$ employs a Hardware-Efficient Ansatz on $n_q=n_d + n_a$ qubits with $L$ layers:
\begin{equation*}
V_t(\bzet_t) = \prod_{l=1}^L \tilde{\Omega}_t \tilde{W}_t(\bzet_t),
\end{equation*}
where $\tilde{W}_t(\bzet_t) = \prod_{j=1}^{n_q} e^{-i\zeta_{t,2j-1}\frac{X_j}{2}} e^{-i\zeta_{t,2j-2}\frac{Y_j}{2}}$ implements single-qubit rotations (Pauli-$X$ and -$Y$ on qubit $j$, parameterized by $\zeta_{t,2j-1}$ and $\zeta_{t,2j-2}$), and $\tilde{\Omega}_t = \prod_{j=1}^{n_q-1} \mathrm{CZ}_{j,j+1}$ applies controlled-Z gates for nearest-neighbor entanglement.

In the experiments with the QM9-derived quantum dataset, we use the same experimental setting as IMPE and the dataset described in \cite{tran:2025:QUAP}. Here, $n_d=7$, $n_a=3$, $L$ varies in $\{1, 2, 5, 10, 20, 50, 100\}$ with $T=\frac{200}{L}$ cycles. Mini-batch training is employed with a batch size of $100$ for each $V_t$, using $200$ training samples over $100 \times L$ epochs.
Then the total of training epochs over $T$ cycles is $100\times L \times T = 2\times 10^4$ epochs.

\section{Lie-algebraic perspective on why the latent prior alleviates barren plateaus}\label{appx:BP-discussion}

The empirical observation in Sec.~\ref{sec:demo} that the latent-conditioned parameterization maintains gradient norms one to two orders of magnitude larger than the no-latent and RD baselines (Fig.~\ref{fig:gradnorm:qubits}) is consistent with a growing body of theoretical and empirical work on the origins of barren plateaus. We discuss the conjectural mechanism and leave a rigorous analysis as an open problem.

The recent unified analysis in Ref.~\cite{ragone:2024:lie:bp} expresses the gradient variance of a parameterized unitary $U(\btheta) = \prod_k e^{-i\theta_k G_k}$ as a sum of projections onto irreducible representations of the dynamical Lie algebra (DLA) $\mathfrak{g}$ generated by $\{iG_k\}$. When $\dim\mathfrak{g}$ scales as $\Theta(4^n)$ and the initial state and observable have nontrivial overlap with many irreducible blocks, the projections cancel via representation-theoretic orthogonality and the gradient variance decays as $O(e^{-cn})$. The standard no-latent baseline in which $\btheta$ is drawn from a broad distribution (Haar-like or independent uniform) on $\mathbb{R}^K$ is in precisely this regime. Here, a high-dimensional average over a large DLA produces the exponential suppression we observe in the leftmost column of Fig.~\ref{fig:gradnorm:qubits}.

The LPQC parameterization replaces this broad average with the pushforward measure $p_\bw = (f_\bw)_\#\, r$ on $\mathbb{R}^K$, induced by the latent prior $r(\bz)$ on the low-dimensional space $\cZ \subset \mathbb{R}^d$ ($d \ll K$). The expected squared gradient norm becomes
\begin{equation}\label{eq:pushforward}
    \mathbb{E}_{\bz \sim r}\bigl[\|\nabla_\btheta \cL(f_\bw(\bz))\|^2\bigr] \;=\; \int_\cZ \|\nabla_\btheta \cL(f_\bw(\bz))\|^2\, r(\bz)\,d\bz,
\end{equation}
which is a $d$-dimensional integral over the image manifold $\cM_\bw = \{f_\bw(\bz) : \bz \in \cZ\}$ rather than a $K$-dimensional integral over the full parameter space. This is the same kind of dimensionality reduction at work in classical settings where overparameterized neural networks are effectively trained on a low-dimensional manifold of parameter space~\cite{li:2018:measuring:intrinsic:dim}. The hypothesis underlying the LPQC approach is that the high-dimensional averaging used in the exponential suppression analysis does not apply once the support of the parameter distribution is constrained to a manifold. The cancellation across irreducible blocks of the DLA depends on the parameter distribution being spread sufficiently broadly to average over them.

Several lines of work support this intuition. It is empirically shown in Ref.~\cite{lucas:pra:2022:avoidBP} that NN-generated parameters exhibit larger gradient norms than uniformly sampled parameters across a range of ans\"atze and observables. In Ref.~\cite{yi:2025:geom:opt:Lie}, a complementary geometric perspective is provided. Parameters produced by smooth NN maps follow trajectories on a low-dimensional submanifold of the parameter space whose geometry is determined by the network's Jacobian, and these trajectories tend to avoid the flat regions characterized by the DLA analysis. Prior work on initialization strategies~\cite{grant:2019:identity:block} and classical-shadow-based warm-starting~\cite{sack:2022:warm:bp} demonstrates that careful choice of the initial parameter distribution can substantially reduce the severity of barren plateaus in practice.

The pushforward viewpoint above is a conjectural mechanism for the behavior of trained LPQCs. To upgrade it to a fully rigorous theory, one needs three ingredients: 
\begin{enumerate}
\item Identify low-dimensional submanifolds $\cM_\bw \subset \mathbb{R}^K$ that admit a uniform polynomial lower bound on $\|\nabla_\btheta \cL\|^2$
\item Show that such submanifolds are reachable by the image of a classical NN with polynomial parameter count
\item Verify that the meta-optimization over the generator weights $\bw$ itself does not introduce new barren plateaus
\end{enumerate}
We now show that (i) and (ii) admit affirmative, quantitative answers, and that (iii) admits an affirmative answer at initialization; what remains open is a guarantee along the entire training trajectory, which we discuss at the end of this appendix.

\subsection{Smoothness of PQC loss landscapes}

We first record a standard smoothness estimate. 

\begin{lemma}[Lipschitz gradients of expectation-value losses]\label{lem:hessian}
Let $U(\btheta) = \prod_{k=1}^{K} e^{-i\theta_k G_k} W_k$ with fixed unitaries $W_k$ and Hermitian generators $\norm{G_k}\le 1$, let $\rho(\btheta)$ denote the (possibly reduced) output state on input $\ket{0^{n+m}}$, and let $\cL_O(\btheta) = \Tr[O\,\rho(\btheta)]$ with $\norm{O}\le 1$. Then $|\partial_j \cL_O| \le 2$, $|\partial_j\partial_k \cL_O| \le 4$ for all $j,k$, and $\nabla_\btheta\cL_O$ is $\Lambda$-Lipschitz with $\Lambda \le 4K$.
\end{lemma}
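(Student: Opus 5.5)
The plan is to differentiate the circuit factor by factor, write each derivative as a commutator with a conjugated generator, and bound everything with $\norm{A B}_1\le \norm{A}\,\norm{B}_1$.

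\emph{Setting up the derivatives.} Write $U(\btheta)=U_K\cdots U_1$ with $U_k = e^{-i\theta_k G_k}W_k$. Then $\partial_j U = U_K\cdots U_{j+1}(-iG_j)U_j\cdots U_1$. Define the global state $\Psi(\btheta)=U\ket{0}\!\bra{0}U^\dagger$ and the lifted observable $\tilde O = O\otimes I_A$. Since $\norm{\tilde O}=\norm{O}\le1$ and $\cL_O=\Tr[\tilde O\,\Psi]$, the partial trace plays no further role.

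\emph{First derivative.} Put $\Psi_j := U_j\cdots U_1\ket0\!\bra0 U_1^\dagger\cdots U_j^\dagger$ and $A_j := (U_K\cdots U_{j+1})^\dagger\tilde O\,(U_K\cdots U_{j+1})$, so that $\norm{A_j}\le1$. This gives
\begin{equation*}
\partial_j\cL_O=\Tr\bigl(A_j(-i)[G_j,\Psi_j]\bigr)=-i\Tr\bigl([A_j,G_j]\Psi_j\bigr).
\end{equation*}
Since $\norm{[A_j,G_j]}\le 2\norm{A_j}\norm{G_j}\le2$ and $\norm{\Psi_j}_1=1$, Hölder's inequality yields $|\partial_j\cL_O|\le2$.

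\emph{Second derivative.} Assume $j\le k$ without loss of generality. Differentiating once more in $\theta_k$ gives a nested commutator: for $k>j$ the $\theta_k$-dependence sits in $A_j$; for $k=j$ it comes from the $e^{-i\theta_jG_j}$ factor inside $\Psi_j$. In both cases $\partial_k\partial_j\cL_O$ equals $\pm\Tr\bigl([[\,B,G'],G'']\,\sigma\bigr)$ for some $\norm{B}\le1$, conjugated generators $G',G''$ of norm at most $1$, and a pure state $\sigma$. Concretely, for $k>j$ take $B=A_k$, $G'=G_k$, $G''=\tilde G_j$ (with $\tilde G_j$ the generator $G_j$ conjugated by the intermediate unitaries $U_{k-1}\cdots U_{j+1}$), and $\sigma=\Psi_j$. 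Bounding each commutator by twice the product of norms gives $|\partial_j\partial_k\cL_O|\le4$. This bookkeeping, in particular handling the $k=j$ case, where the derivative acts on both $U_j$ and $U_j^\dagger$, is the one place that requires care. It is cleanest to write $\partial_j\Psi_j=-i[G_j,\Psi_j]$ and differentiate again, obtaining $-[G_j,[G_j,\Psi_j]]$, whose trace norm is at most $4$.

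\emph{Lipschitz constant.} The Hessian $\mathsf H$ has every entry of modulus at most $4$. Hence $\norm{\mathsf H}_{\mathrm{op}}\le\norm{\mathsf H}_F\le\sqrt{K^2\cdot16}=4K$. By the mean value theorem along the segment joining $\btheta$ and $\btheta'$,
\begin{equation*}
\norm{\nabla\cL_O(\btheta)-\nabla\cL_O(\btheta')}\le\sup\norm{\mathsf H}_{\mathrm{op}}\,\norm{\btheta-\btheta'}\le 4K\norm{\btheta-\btheta'},
\end{equation*}
so $\nabla_\btheta\cL_O$ is $\Lambda$-Lipschitz with $\Lambda\le4K$.
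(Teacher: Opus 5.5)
Your proposal is correct and is essentially the paper's argument. Both lift $O$ to $O\otimes I_A$, write each derivative as a (nested) commutator with unitarily conjugated generators of norm at most $1$, and bound commutators by twice the product of norms to get entrywise bounds $2$ and $4$. Both then convert this to a Hessian operator-norm bound of $4K$ and pass to the Lipschitz estimate along the segment. Your deviations are cosmetic:
\begin{itemize}
\item you use H\"older with $\lVert\Psi_j\rVert_1=1$ where the paper takes expectations in a unit vector;
\item you use symmetry of mixed partials where the paper treats $k>j$ and $k<j$ separately;
\item you use the Frobenius bound $\sqrt{16K^2}=4K$ where the paper uses Gershgorin's row-sum bound.
\end{itemize}
There is one small bookkeeping slip in your concrete $k>j$ assignment: it mixes frames. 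With $B=A_k$ and $G'=G_k$, you must conjugate $G_j$ by $U_kU_{k-1}\cdots U_{j+1}$ (including $U_k$) and take $\sigma=\Psi_k$, or else pull $A_k$ and $G_k$ back to the frame of $\Psi_j$. Since only unitarily invariant norms enter the estimate, the bound $4$ is unaffected.
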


\begin{proof}
We bound the first two derivatives of $\cL_O$ uniformly and then pass from the resulting
Hessian bound to Lipschitz continuity of the gradient. The single mechanism behind the
derivative bounds is that each parameter enters $U(\btheta)$ through one Hermitian generator,
so each $\partial_\theta$ inserts a single commutator with a (unitarily conjugated, hence
norm $\le 1$) generator, and a commutator at most doubles the operator norm:
\begin{equation}\label{eq:commutator-norm}
    \norm{[A,B]}=\norm{AB-BA}\le \norm{AB}+\norm{BA}\le 2\norm{A}\,\norm{B}.
\end{equation}

Tracing out the ancilla register $\cH_A$ is equivalent to evaluating an observable that acts
as the identity on $\cH_A$: with $\ket{\psi(\btheta)}:=U(\btheta)\ket{0^{n+m}}$ and
$\widetilde O:=O\otimes I_A$,
\begin{align}
    \cL_O(\btheta)&=\Tr\!\bigl[O\,\rho(\btheta)\bigr]\nonumber\\
    &=\Tr\!\bigl[(O\otimes I_A)\,\ket{\psi(\btheta)}\!\bra{\psi(\btheta)}\bigr]\nonumber\\
    &=\bra{\psi(\btheta)}\widetilde O\ket{\psi(\btheta)},
\end{align}
with $\norm{\widetilde O}=\norm{O}\le 1$. The partial trace therefore changes neither the
observable norm nor the bounds below, and we may work with
$\cL_O(\btheta)=\bra{0^{n+m}}U(\btheta)^\dagger\,\widetilde O\,U(\btheta)\ket{0^{n+m}}$.

\emph{Single-parameter form.}
Fix an index $j$ and hold all other coordinates fixed. Splitting the circuit immediately
around the $j$-th rotation, $U=U_{>j}\,e^{-i\theta_j G_j}\,U_{<j}$, set
\begin{equation*}
    \ket{\varphi}:=U_{<j}\ket{0^{n+m}},\qquad B:=U_{>j}^\dagger\,\widetilde O\,U_{>j}.
\end{equation*}
Then $\ket{\varphi}$ is a unit vector and $B$ is Hermitian with $\norm{B}=\norm{\widetilde O}\le 1$,
since conjugation by a unitary preserves the operator norm. As a function of $\theta_j$ alone,
\begin{equation}
    \cL_O=g(\theta_j):=\bra{\varphi}\,B(\theta_j)\,\ket{\varphi},
    \qquad B(\theta_j):=e^{i\theta_j G_j}\,B\,e^{-i\theta_j G_j},
\end{equation}
and differentiating the conjugation gives
$\partial_{\theta_j}B(\theta_j)=e^{i\theta_j G_j}\,i[G_j,B]\,e^{-i\theta_j G_j}$.

\emph{First derivatives.}
We express
\begin{equation}
    \partial_j\cL_O=\bra{\varphi}\,e^{i\theta_j G_j}\,i[G_j,B]\,e^{-i\theta_j G_j}\,\ket{\varphi}.
\end{equation}
The operator $i[G_j,B]$ is Hermitian, so its expectation in a unit vector is bounded by its
operator norm (by \eqref{eq:commutator-norm} and $\norm{G_j}\le 1$),
\begin{equation}
    |\partial_j\cL_O|\le \norm{[G_j,B]}\le 2\norm{G_j}\,\norm{B}\le 2.
\end{equation}

\emph{Second derivatives.}
For the diagonal entry, differentiating $B(\theta_j)$ twice inserts the commutator twice,
\begin{equation}
    \partial_j^2\cL_O
    =\bra{\varphi}\,e^{i\theta_j G_j}\,i^2[G_j,[G_j,B]]\,e^{-i\theta_j G_j}\,\ket{\varphi},
\end{equation}
with $\norm{[G_j,[G_j,B]]}\le 4\,\norm{G_j}^2\norm{B}\le 4$ so $|\partial_j^2\cL_O|\le 4$. For an off-diagonal entry $k\ne j$, write
$\partial_j\cL_O=\bra{\varphi}M\ket{\varphi}$ with the Hermitian operator
$M:=e^{i\theta_j G_j}\,i[G_j,B]\,e^{-i\theta_j G_j}$, $\norm{M}\le 2$. The coordinate $\theta_k$
enters exactly one of $\ket{\varphi}$ or $B$ (never both, as $k\ne j$):
\begin{itemize}
    \item If $k>j$, then $\theta_k$ enters through $B$; with $\widehat G_k$ denoting $G_k$
    conjugated by the gates that follow the $k$-th rotation (a unitary, so $\norm{\widehat G_k}\le 1$),
    one has $\partial_k B=U_{>j}^\dagger\,i[\widehat G_k,\widetilde O]\,U_{>j}$, whence
    $\norm{\partial_k B}\le 2$ and
    \begin{align}
        |\partial_k\partial_j\cL_O|
        &=\bigl|\bra{\varphi}\,e^{i\theta_j G_j}\,i[G_j,\partial_k B]\,e^{-i\theta_j G_j}\,\ket{\varphi}\bigr|\nonumber\\
        &\le \norm{[G_j,\partial_k B]}\le 2\norm{\partial_k B}\le 4
    \end{align}
    \item If $k<j$, then $\theta_k$ enters through $\ket{\varphi}$; with $\overline G_k$ denoting
    $G_k$ conjugated by the gates between the $k$-th and $j$-th rotations ($\norm{\overline G_k}\le 1$),
    one has $\partial_k\ket{\varphi}=-i\,\overline G_k\ket{\varphi}$, so
    \begin{align}
        &|\partial_k\partial_j\cL_O| =| \bra{\varphi}\,i[\overline G_k,M]\,\ket{\varphi} | \nonumber\\
        &\le |\partial_k\partial_j\cL_O|\le \norm{[\overline G_k,M]}\le 2\norm{\overline G_k}\norm{M}\le 4.
    \end{align}
\end{itemize}
In all cases $|\partial_j\partial_k\cL_O|\le 4$. (The same insertion argument gives, more generally,
$|\partial_{j_1}\!\cdots\partial_{j_m}\cL_O|\le 2^{m}\norm{\widetilde O}$.)

\emph{Hessian spectral norm.}
The Hessian $\nabla^2\cL_O$ is a real symmetric $K\times K$ matrix with
$|(\nabla^2\cL_O)_{jk}|=|\partial_j\partial_k\cL_O|\le 4$. By Gershgorin's circle theorem, every
eigenvalue $\lambda$ satisfies
$|\lambda-(\nabla^2\cL_O)_{jj}|\le\sum_{k\ne j}|(\nabla^2\cL_O)_{jk}|$ for some $j$, hence
\begin{equation}
    |\lambda|\le \sum_{k=1}^{K}|\partial_j\partial_k\cL_O|\le 4K.
\end{equation}
Therefore, $\norm{\nabla^2\cL_O(\btheta)}_2\le 4K=:\Lambda$ for every $\btheta$ (equivalently, the spectral norm of a symmetric matrix is at most its largest absolute row sum).

\emph{Lipschitz gradient.}
The uniform Hessian bound yields a Lipschitz gradient by the fundamental theorem of
calculus along the segment $\btheta_t:=\btheta'+t(\btheta-\btheta')$,
\begin{equation*}
    \nabla\cL_O(\btheta)-\nabla\cL_O(\btheta')
    =\int_0^1 \nabla^2\cL_O(\btheta_t)\,(\btheta-\btheta')\,dt,
\end{equation*}
so that
$\norm{\nabla\cL_O(\btheta)-\nabla\cL_O(\btheta')}
\le\bigl(\sup_{t\in[0,1]}\norm{\nabla^2\cL_O(\btheta_t)}_2\bigr)\,\norm{\btheta-\btheta'}
\le \Lambda\,\norm{\btheta-\btheta'}$ with $\Lambda\le 4K$.
\end{proof}

\begin{remark}
The training loss used in this work is the optimal-transport objective with superfidelity kernel. For pure output states ($m=0$, as in the QM9 experiments), the superfidelity reduces to $\kappa(\rho,\sigma) = \Tr(\rho\sigma)$, an expectation-value functional, and Lemma~\ref{lem:hessian} applies directly to each cost entry; the linear-program value is a $1$-Lipschitz, piecewise-linear function of the cost entries, so the composite loss inherits a Lipschitz gradient bound $\Lambda = O(K)$ wherever the optimal plan is locally constant. For $m>0$, the term $\sqrt{[1-\Tr\rho^2][1-\Tr\sigma^2]}$ has unbounded derivative as $\Tr\rho^2 \to 1$; the bound then holds with $\Lambda$ depending additionally on a lower bound on $1-\Tr\rho(\btheta)^2$ over the relevant parameter region. For clarity, the results below are stated for expectation-value losses.
\end{remark}

\subsection{Gradient-rich submanifolds are realizable by small generators}

The next proposition formalizes the key structural point: once a single gradient-rich parameter point exists, the LPQC family contains low-dimensional pushforwards---realized by generator networks of size $O(dK)$---on which the expected squared gradient norm is uniformly bounded below. This answers (i) and (ii) above, conditionally on the existence of such a point.

\begin{proposition}[Gradient preservation under concentrated pushforwards]\label{prop:concentrated}
Let $\cL:\mathbb{R}^K\to\mathbb{R}$ have $\Lambda$-Lipschitz gradient and suppose there exists $\btheta^\ast$ with $\norm{\nabla\cL(\btheta^\ast)} \ge g > 0$. Consider any generator class that contains, for every $\btheta_0\in\mathbb{R}^K$ and $\gamma\ge 0$, a map of the form $f_\bw(\bz) = \btheta_0 + \gamma\, s(\bz)$ with $\sup_{\bz\in\cZ}\norm{s(\bz)}\le 1$; e.g., a one-hidden-layer MLP with bounded activation, output bias $\btheta_0$, and output-weight norm $\gamma$, of parameter count $O(dK)$. Then, choosing $\btheta_0 = \btheta^\ast$ and any $\gamma \le g/(2\Lambda)$,
\begin{equation}\label{eq:concentrated-bound}
    \mathbb{E}_{\bz\sim r}\bigl[\norm{\nabla_\btheta\cL(f_\bw(\bz))}^2\bigr] \;\ge\; (g - \Lambda\gamma)^2 \;\ge\; \frac{g^2}{4}
\end{equation}
for every prior $r$ on $\cZ$. In particular, the image manifold $\cM_\bw = f_\bw(\cZ) \subseteq B(\btheta^\ast,\gamma)$ is an (at most) $d$-dimensional set on which $\norm{\nabla\cL}$ is uniformly lower bounded by $g/2$.
\end{proposition}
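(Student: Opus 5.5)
The plan is to derive a deterministic, pointwise lower bound on $\norm{\nabla\cL(f_\bw(\bz))}$ that holds for every $\bz\in\cZ$; the expectation bound then follows for any prior $r$. With $\btheta_0=\btheta^\ast$ and $f_\bw(\bz)=\btheta^\ast+\gamma\,s(\bz)$, the constraint $\sup_\bz\norm{s(\bz)}\le 1$ immediately gives $\norm{f_\bw(\bz)-\btheta^\ast}\le\gamma$. This already yields the containment $\cM_\bw=f_\bw(\cZ)\subseteq B(\btheta^\ast,\gamma)$. The dimension claim (at most $d$) comes from $\cM_\bw$ being the continuous image of $\cZ\subset\mathbb{R}^d$; if one wants a Hausdorff-dimension statement, Lipschitz continuity of the MLP $s$ suffices.

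The key step uses the $\Lambda$-Lipschitz gradient hypothesis, which for expectation-value losses is supplied by Lemma~\ref{lem:hessian} with $\Lambda\le 4K$. Combining it with the reverse triangle inequality gives, for every $\bz$,
\begin{equation*}
\norm{\nabla\cL(f_\bw(\bz))}\ge \norm{\nabla\cL(\btheta^\ast)}-\Lambda\norm{f_\bw(\bz)-\btheta^\ast}\ge g-\Lambda\gamma .
\end{equation*}
Since $\gamma\le g/(2\Lambda)$, the right-hand side satisfies $g-\Lambda\gamma\ge g/2>0$. Squaring therefore preserves the inequality and gives $\norm{\nabla\cL(f_\bw(\bz))}^2\ge(g-\Lambda\gamma)^2\ge g^2/4$ pointwise. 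Integrating against any probability density $r$ on $\cZ$ yields Eq.~\eqref{eq:concentrated-bound}. The same pointwise bound shows that $\norm{\nabla\cL}\ge g/2$ uniformly on $\cM_\bw$.

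It remains to check that the example generator class realizes this form. Take a one-hidden-layer MLP $s(\bz)=A\,\sigma(W\bz+b)$ whose activation satisfies $|\sigma|\le 1$ (e.g.\ $\tanh$), with $\sigma(W\bz+b)\in\mathbb{R}^{h}$. Rescale $A$ so that $\norm{A}_{\mathrm{op}}\sqrt{h}\le 1$; then $\sup_\bz\norm{s}\le 1$. Multiplying the output weights by $\gamma$ and setting the output bias to $\btheta^\ast$ produces $f_\bw$. With hidden width $h=O(1)$, or $h$ comparable to $d$, the parameter count is $O(dh+hK)=O(dK)$.

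There is no real obstacle here. The only point needing care is that the reverse triangle step requires $g-\Lambda\gamma\ge 0$ before squaring, which the choice of $\gamma$ guarantees. The substantive difficulty, namely the existence of $\btheta^\ast$ with polynomially large $g$, is assumed in the statement rather than proved, and would have to be supplied separately, e.g.\ by the Gaussian-initialization bounds of Ref.~\cite{zhang:2022:gaussian:init}.
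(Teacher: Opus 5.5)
Your proof is correct and follows the paper's argument exactly. The $\Lambda$-Lipschitz gradient plus the reverse triangle inequality give the pointwise bound $\norm{\nabla\cL(f_\bw(\bz))}\ge g-\Lambda\gamma\ge g/2$, which is then squared and averaged over an arbitrary prior. One minor caveat on your side remark: continuity of $f_\bw$ alone does not make $\cM_\bw$ at most $d$-dimensional, because space-filling curves are continuous. It is the Lipschitz property of the MLP, which you also mention, that justifies that informal claim.
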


\begin{proof}
For every $\bz$, the Lipschitz property gives $\norm{\nabla\cL(f_\bw(\bz))} \ge \norm{\nabla\cL(\btheta^\ast)} - \Lambda\norm{f_\bw(\bz)-\btheta^\ast} \ge g - \Lambda\gamma \ge g/2$. Squaring and taking expectations yields Eq.~\eqref{eq:concentrated-bound}.
\end{proof}

Proposition~\ref{prop:concentrated} reduces questions (i)--(ii) to the existence of gradient-rich points $\btheta^\ast$, which is precisely what the literature on structured initializations provides. The sharpest such guarantee is due to Ref.~\cite{zhang:2022:gaussian:init}, which proves that for hardware-efficient circuits of depth $L$ with parameters drawn i.i.d.\ from $\cN(0,\gamma_0^2)$ with $\gamma_0^2 = O(1/L)$, the expected squared gradient norm of expectation-value losses decays at most polynomially in $n$ and $L$ (for local observables, and for global observables under stated conditions). Combining the two results yields an unconditional statement.

\begin{corollary}[LPQCs admit barren-plateau-free initializations]\label{cor:bp-free}
Consider the HEA of Eq.~\eqref{eqn:HEA} with $K = 2L(n+m)$ parameters and an expectation-value loss $\cL_O$, in a setting where the Gaussian-initialization guarantee of Ref.~\cite{zhang:2022:gaussian:init} applies, i.e., $\mathbb{E}_{\btheta\sim\cN(0,\gamma_0^2 I_K)}\norm{\nabla\cL_O}^2 \ge 1/\mathrm{poly}(n,L)$. Then:
\begin{enumerate}
    \item There exists $\btheta^\ast$ with $g^2 := \norm{\nabla\cL_O(\btheta^\ast)}^2 \ge 1/\mathrm{poly}(n,L)$.
    \item By Lemma~\ref{lem:hessian}, $\Lambda \le 4K = \mathrm{poly}(n,L)$, so the radius $\gamma = g/(2\Lambda)$ in Proposition~\ref{prop:concentrated} is inverse-polynomial, not exponentially small.
    \item Consequently, the LPQC family contains weight settings $\bw$, with a classical generator of $O(dK) = \mathrm{poly}(n,L)$ parameters, such that
    \begin{equation*}
        \mathbb{E}_{\bz\sim r}\bigl[\norm{\nabla_\btheta\cL_O(f_\bw(\bz))}^2\bigr] \ \ge\ \frac{1}{\mathrm{poly}(n,L)}.
    \end{equation*}
\end{enumerate}
In these settings, questions (i) and (ii) are therefore answered affirmatively: gradient-rich $d$-dimensional submanifolds exist and are reachable by polynomial-size generators.
\end{corollary}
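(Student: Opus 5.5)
The plan is to prove the three items of Corollary~\ref{cor:bp-free} in order, each resting on an earlier result, so that the corollary reduces to Proposition~\ref{prop:concentrated} with explicit polynomial parameters.

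\emph{Item 1.} The assumed Gaussian-initialization guarantee states that
\[
\mathbb{E}_{\btheta\sim\cN(0,\gamma_0^2 I_K)}\norm{\nabla\cL_O(\btheta)}^2 \ge 1/\mathrm{poly}(n,L).
\]
An expectation cannot exceed the supremum of the integrand. The map $\btheta\mapsto\norm{\nabla\cL_O(\btheta)}^2$ is bounded: by Lemma~\ref{lem:hessian} each partial derivative has absolute value at most $2$, so the map is at most $4K$. It is also continuous. Hence the supremum is attained or approached, and there is a point $\btheta^\ast$ with $\norm{\nabla\cL_O(\btheta^\ast)}^2$ at least the expectation. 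Equivalently, the set of points achieving at least the mean has positive Gaussian measure. Setting $g:=\norm{\nabla\cL_O(\btheta^\ast)}$ gives $g^2\ge 1/\mathrm{poly}(n,L)$.

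\emph{Item 2.} I will check that the HEA of Eq.~\eqref{eqn:HEA} fits the form required by Lemma~\ref{lem:hessian}:
\begin{itemize}
\item The circuit is a product of single-parameter rotations $e^{-i\theta Y/2}$ and $e^{-i\theta Z/2}$, interleaved with fixed CNOT layers. These fixed layers are absorbed into the $W_k$.
\item Each generator has norm $1/2\le 1$.
\item The loss is $\cL_O(\btheta)=\Tr[O\rho(\btheta)]$ with $\norm{O}\le 1$, after rescaling $O$ if necessary; any such rescaling only changes polynomial factors.
\end{itemize}
Lemma~\ref{lem:hessian} then gives a Lipschitz constant $\Lambda\le 4K=8L(n+m)$. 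Therefore the radius $\gamma=g/(2\Lambda)$ is at least $1/\mathrm{poly}(n,L)$, assuming $m=\mathrm{poly}(n)$.

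\emph{Item 3.} I will exhibit a concrete generator and invoke Proposition~\ref{prop:concentrated}. Take the one-hidden-layer tanh MLP
\[
f_\bw(\bz)=\btheta^\ast+\gamma\,W_2\tanh(W_1\bz+b_1).
\]
Choose $W_1\in\mathbb{R}^{h\times d}$ arbitrary and $W_2\in\mathbb{R}^{K\times h}$ with $\norm{W_2}_{2}\le 1/\sqrt{h}$, for instance the zero matrix. Since $|\tanh|\le 1$, this gives $\sup_\bz\norm{s(\bz)}\le 1$. With $h=O(d)$ hidden units the parameter count is $O(dK)=\mathrm{poly}(n,L)$.

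Proposition~\ref{prop:concentrated} then yields
\[
\mathbb{E}_{\bz\sim r}\norm{\nabla_\btheta\cL_O(f_\bw(\bz))}^2\ge g^2/4\ge 1/\mathrm{poly}(n,L)
\]
for every prior $r$. The image $f_\bw(\cZ)$ is at most $d$-dimensional, which answers questions (i) and (ii).

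The main obstacle is not technical but one of bookkeeping. First, the hypothesis in Item 1 must match the exact form of the guarantee in Ref.~\cite{zhang:2022:gaussian:init}, including the normalization of the gradient norm and the locality conditions on $O$. Second, the generator class must genuinely contain maps with $\sup\norm{s}\le 1$; the degenerate choice $\gamma$ small with bounded activation handles this. The existence step in Item 1 is the only place needing a short measure-theoretic argument, namely that a bounded continuous function is at least its mean somewhere.
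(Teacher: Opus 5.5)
Your proposal is correct and follows the paper's own argument: the Gaussian-average hypothesis yields a point $\btheta^\ast$ whose gradient norm is at least the mean (your positive-measure formulation is the clean way to say this), Lemma~\ref{lem:hessian} gives $\Lambda\le 4K$ (your check that the HEA fits the lemma's form and your $m=\mathrm{poly}(n)$ caveat make explicit what the paper leaves implicit), and Proposition~\ref{prop:concentrated} is then applied with a bounded-activation generator of $O(dK)$ parameters. One small remark: your example $W_2=0$ collapses the image to a single point, so to exhibit a genuinely $d$-dimensional gradient-rich submanifold, take instead $h=d$, $W_1$ invertible, and $W_2$ of full column rank, rescaled so that $\norm{W_2}_2\le 1/\sqrt{h}$.
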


\begin{remark}[Exact correspondence with Gaussian initialization]
When $d \ge K$, the Gauss-linear LPQC variant of Sec.~\ref{sec:demo}, $\btheta = \bW\bz + \bb$ with $\bz\sim\cN(\boldsymbol{0},\mathbf{I}_d)$, realizes the parameter distribution $\cN(\bb, \bW\bW^\top)$ exactly; choosing $\bW\bW^\top = \gamma_0^2\mathbf{I}_K$ reproduces verbatim the initialization analyzed in Ref.~\cite{zhang:2022:gaussian:init}, so its ensemble-average lower bound transfers directly to the LPQC. For $d < K$ (the regime of practical interest), the pushforward is supported on a $d$-dimensional affine subspace and Corollary~\ref{cor:bp-free} provides the corresponding guarantee via the concentration argument instead.
\end{remark}

\subsection{The meta-landscape at initialization}

Question (iii) asks whether optimizing the generator weights $\bw$ reintroduces a plateau. The chain rule gives $\nabla_\bw \mathbb{E}_{\bz}[\cL(f_\bw(\bz))] = \mathbb{E}_{\bz}[J_\bw(\bz)^\top \nabla_\btheta\cL(f_\bw(\bz))]$, where $J_\bw$ is the Jacobian of $f_\bw$ with respect to $\bw$; gradients in $\bw$ can therefore vanish only if $\nabla_\btheta\cL$ vanishes on the image manifold or if the Jacobian directions are orthogonal to it. For the output-bias block of any standard MLP the Jacobian is the identity, which yields the following unconditional lower bound at the initialization of Proposition~\ref{prop:concentrated}.

\begin{proposition}[No barren plateau in $\bw$ at initialization]\label{prop:meta}
In the setting of Proposition~\ref{prop:concentrated}, let the generator's final layer be $\btheta = \bW_{\mathrm{out}}\bh(\bz) + \bb$ with bias $\bb$. Then at the weights constructed there,
\begin{equation*}
    \norm{\nabla_\bw\,\mathbb{E}_{\bz}[\cL(f_\bw(\bz))]} \;\ge\; \norm{\nabla_{\bb}\,\mathbb{E}_{\bz}[\cL(f_\bw(\bz))]} \;\ge\; g - \Lambda\gamma \;\ge\; \frac{g}{2}.
\end{equation*}
\end{proposition}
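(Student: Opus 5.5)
The proof has three pieces: compute the gradient with respect to the output bias $\bb$ explicitly, show it dominates the full gradient's norm, and lower-bound it using the pointwise estimate from the proof of Proposition~\ref{prop:concentrated}.

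\emph{Bias block and the first inequality.} The generator weights split as $\bw = (\bw_{\mathrm{rest}}, \bb)$. Since $\btheta = \bW_{\mathrm{out}}\bh(\bz) + \bb$, we have $\partial\btheta/\partial\bb = I_K$ for every $\bz$. The chain rule, with differentiation under the expectation justified by the uniformly bounded gradients of Lemma~\ref{lem:hessian} and dominated convergence, gives
\begin{equation*}
\nabla_\bb\,\mathbb{E}_\bz[\cL(f_\bw(\bz))] = \mathbb{E}_\bz\bigl[\nabla_\btheta\cL(f_\bw(\bz))\bigr].
\end{equation*}
The full gradient $\nabla_\bw$ is the concatenation of its block components, so its Euclidean norm is at least the norm of the $\bb$-block. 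This is the first inequality.

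\emph{Second inequality.} The difficulty is that a norm of an expectation could cancel even when each $\norm{\nabla\cL}$ is large. We avoid this by comparing with the fixed vector $\bv^\ast := \nabla\cL(\btheta^\ast)$, which has $\norm{\bv^\ast}\ge g$. At the weights of Proposition~\ref{prop:concentrated}, $\bb = \btheta^\ast$ and $\norm{f_\bw(\bz)-\btheta^\ast}\le\gamma$ for all $\bz$. The $\Lambda$-Lipschitz property of the gradient then yields, pointwise,
\begin{equation*}
\norm{\nabla\cL(f_\bw(\bz)) - \bv^\ast}\le\Lambda\gamma.
\end{equation*}
Jensen's inequality for the convex map $\norm{\cdot}$ gives
\begin{equation*}
\norm{\mathbb{E}_\bz[\nabla\cL(f_\bw(\bz))] - \bv^\ast}\le \mathbb{E}_\bz\norm{\nabla\cL(f_\bw(\bz))-\bv^\ast}\le\Lambda\gamma.
\end{equation*}
The reverse triangle inequality then yields
\begin{equation*}
\norm{\mathbb{E}_\bz[\nabla\cL]}\ge \norm{\bv^\ast}-\Lambda\gamma\ge g-\Lambda\gamma.
\end{equation*}
Finally, $\gamma\le g/(2\Lambda)$ gives $g-\Lambda\gamma\ge g/2$.

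\emph{Main obstacle.} The only delicate step is the possible cancellation of the averaged gradient. Centering at the constant vector $\bv^\ast$ removes it, because every sampled gradient lies in the ball of radius $\Lambda\gamma<g$ around $\bv^\ast$. No independence or structure of the prior $r$ is needed, so the bound holds for every prior, as in Proposition~\ref{prop:concentrated}.
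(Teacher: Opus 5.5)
Your proposal is correct and follows essentially the same route as the paper. The identity Jacobian of the output bias gives $\nabla_{\bb}\,\mathbb{E}_{\bz}[\cL(f_\bw(\bz))]=\mathbb{E}_{\bz}[\nabla_\btheta\cL(f_\bw(\bz))]$, and centering at $\nabla\cL(\btheta^\ast)$ with the triangle inequality, Jensen's inequality, and the $\Lambda$-Lipschitz gradient yields $g-\Lambda\gamma\ge g/2$. Your block-norm argument for the first inequality and your dominated-convergence justification only spell out steps the paper leaves implicit. One small correction: for a general $\cL$ as in Proposition~\ref{prop:concentrated}, the domination should come from the Lipschitz gradient being bounded on a neighborhood of $B(\btheta^\ast,\gamma)$, not from Lemma~\ref{lem:hessian}, which covers only expectation-value losses.
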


\begin{proof}
$\nabla_{\bb}\,\mathbb{E}[\cL(f_\bw(\bz))] = \mathbb{E}[\nabla_\btheta\cL(f_\bw(\bz))]$, and
$\norm{\mathbb{E}[\nabla\cL(f_\bw(\bz))]} \ge \norm{\nabla\cL(\btheta^\ast)} - \mathbb{E}\norm{\nabla\cL(f_\bw(\bz)) - \nabla\cL(\btheta^\ast)} \ge g - \Lambda\gamma$
by the triangle inequality, Jensen's inequality, and the Lipschitz bound.
\end{proof}

Under the hypotheses of Corollary~\ref{cor:bp-free}, the right-hand side is $1/\mathrm{poly}(n,L)$, so the meta-gradient is estimable with polynomially many measurement shots: the meta-optimization over $\bw$ does not face a barren plateau at this initialization.

\subsection{Prior-induced coverage in the spread regime}\label{subsec:prior-coverage}
 
The constructions of Proposition~\ref{prop:concentrated} and Corollary~\ref{cor:bp-free} collapse the parameter pushforward into a ball $B(\btheta^\ast,\gamma)$, on which the latent prior plays no role.
Here, the bound of Proposition~\ref{prop:concentrated} holds for every prior $r$, the generator's output bias alone doing the work. The regime of our practical interest is to represent a genuine distribution over states rather than a point mass. We now give a complementary guarantee valid in this spread regime, in which the latent distribution becomes the active design variable. Throughout, fix an expectation-value loss $\cL_O$ as in Lemma~\ref{lem:hessian}, write $p_\btheta:=(f_\bw)_\#\, r$ for the parameter pushforward, and for a threshold $c>0$ define the gradient-rich set
\begin{equation}
    \cS_c := \bigl\{\btheta\in\mathbb{R}^K : \norm{\nabla_\btheta\cL_O(\btheta)}^2 \ge c\bigr\}.
\end{equation}
The relevant quantity is the overlap of $p_\btheta$ with $\cS_c$.
 
\begin{lemma}[Overlap lower bound]\label{lem:overlap}
For every prior $r$, generator $f_\bw$, and threshold $c>0$,
\begin{equation*}
    \mathbb{E}_{\bz\sim r}\bigl[\norm{\nabla_\btheta\cL_O(f_\bw(\bz))}^2\bigr]
    = \mathbb{E}_{\btheta\sim p_\btheta}\bigl[\norm{\nabla\cL_O}^2\bigr]
    \ge c\,\cdot\, p_\btheta(\cS_c).
\end{equation*}
\end{lemma}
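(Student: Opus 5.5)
The plan is a change of variables followed by a Markov-type truncation. There is no quantum content: the only input from the circuit is that $\bz\mapsto\norm{\nabla_\btheta\cL_O(f_\bw(\bz))}^2$ is a nonnegative measurable function.

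\emph{Step 1 (pushforward identity).} By definition $p_\btheta=(f_\bw)_\#\,r$. For any nonnegative measurable $\Phi$ on $\mathbb{R}^K$, the change-of-variables formula for image measures gives $\mathbb{E}_{\bz\sim r}[\Phi(f_\bw(\bz))]=\mathbb{E}_{\btheta\sim p_\btheta}[\Phi(\btheta)]$. We apply this with $\Phi(\btheta)=\norm{\nabla_\btheta\cL_O(\btheta)}^2$. To justify it, we need $\Phi$ to be Borel and $f_\bw$ to be measurable. Lemma~\ref{lem:hessian} shows that $\cL_O$ is smooth with bounded derivatives, so $\Phi$ is continuous. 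The generator $f_\bw$ is a neural network with continuous activations, so it is continuous and hence measurable. Boundedness of $\Phi$ (by $4K$, from $|\partial_j\cL_O|\le 2$) also guarantees that both sides are finite, although finiteness is not needed for nonnegative integrands. This gives the equality in the statement.

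\emph{Step 2 (truncation).} Because $\Phi\ge 0$ everywhere and $\Phi\ge c$ on $\cS_c$, we have the pointwise bound $\Phi\ge c\,\mathbbm{1}_{\cS_c}$. The set $\cS_c=\Phi^{-1}([c,\infty))$ is closed, since it is the preimage of a closed set under a continuous map, and so it is measurable. Integrating the pointwise bound against $p_\btheta$ gives $\mathbb{E}_{p_\btheta}[\Phi]\ge c\,p_\btheta(\cS_c)$. This is exactly Markov's inequality applied to the nonnegative variable $\Phi(\btheta)$.

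\emph{Main obstacle.} The only point needing care is measurability in Step 1. It is dispatched by the continuity of $\Phi$ inherited from Lemma~\ref{lem:hessian} and the continuity of $f_\bw$. If one insists on a general measurable generator, Borel measurability of $f_\bw$ is enough, since the change-of-variables formula holds for any measurable map.
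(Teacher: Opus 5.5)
Your proposal is correct and follows the paper's proof: the paper also takes the equality as the change of variables for the pushforward $p_\btheta=(f_\bw)_\#\,r$, and gets the inequality by discarding the mass outside $\cS_c$ and bounding the integrand below by $c$ on $\cS_c$. Your measurability checks are a harmless refinement that the paper leaves implicit.
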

\begin{proof}
Discard the mass outside $\cS_c$ and bound the integrand below by $c$ on $\cS_c$:
$\mathbb{E}_{p_\btheta}[\norm{\nabla\cL_O}^2] \ge \mathbb{E}_{p_\btheta}[\norm{\nabla\cL_O}^2\,\mathbbm{1}_{\cS_c}] \ge c\,p_\btheta(\cS_c)$.
\end{proof}
 
Barren-plateau avoidance in the spread regime is thus equivalent to placing inverse-polynomial pushforward mass on $\cS_c$ for some $c=1/\mathrm{poly}$, and the prior $r$ is precisely the lever that controls this overlap. For the strategy to be non-vacuous, $\cS_c$ must itself be non-negligible. The next lemma shows that it is: combining the uniform upper bound on the gradient from Lemma~\ref{lem:hessian} with the average lower bound of Ref.~\cite{zhang:2022:gaussian:init} converts the latter from a statement about the mean into one about measure.
 
\begin{lemma}[Coverage of the gradient-rich set under the Gaussian reference]\label{lem:coverage}
Let $\nu=\cN(\boldsymbol 0,\gamma_0^2 I_K)$ be the Gaussian reference of Corollary~\ref{cor:bp-free}, and suppose its guarantee holds in the form $\mu:=\mathbb{E}_{\btheta\sim\nu}\norm{\nabla\cL_O}^2 \ge \mu_0 := 1/\mathrm{poly}(n,L)$. Set $\cS:=\cS_{\mu_0/2}$. Then
\begin{equation*}
    \nu(\cS) \;\ge\; \frac{\mu_0/2}{\,4K-\mu_0/2\,} \;\ge\; \frac{\mu_0}{8K} \;=\; \frac{1}{\mathrm{poly}(n,L)}.
\end{equation*}
\end{lemma}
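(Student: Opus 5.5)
This is a reverse-Markov (Paley--Zygmund-type) argument applied to the nonnegative random variable $X(\btheta):=\norm{\nabla\cL_O(\btheta)}^2$ under $\btheta\sim\nu$. The argument needs two facts: an almost-sure upper bound on $X$ and a lower bound on its mean. The mean bound $\mathbb{E}_\nu X=\mu\ge\mu_0$ is the hypothesis.

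The upper bound comes from Lemma~\ref{lem:hessian}. That lemma gives $|\partial_j\cL_O|\le 2$ for every coordinate, so $X\le 4K$ pointwise. This is the constant $4K$ that appears in the denominator.

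With $t:=\mu_0/2$ and $\cS=\{X\ge t\}$, I split the mean over $\cS$ and its complement:
\begin{equation*}
\mu=\mathbb{E}_\nu[X\mathbbm{1}_{\cS}]+\mathbb{E}_\nu[X\mathbbm{1}_{\cS^c}]\le 4K\,\nu(\cS)+t\,(1-\nu(\cS)).
\end{equation*}
Rearranging gives $\nu(\cS)(4K-t)\ge \mu-t\ge \mu_0-\mu_0/2=\mu_0/2$. Hence $\nu(\cS)\ge (\mu_0/2)/(4K-\mu_0/2)$, provided $4K>t$. This holds whenever $K\ge 1$, because $\mu_0\le\mu\le 4K$ forces $t\le 2K<4K$. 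If $4K=t$ occurred, $X$ would equal $4K$ almost surely and the bound would be trivial anyway.

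For the second inequality, I drop the subtracted term: $4K-\mu_0/2\le 4K$, so the ratio is at least $(\mu_0/2)/(4K)=\mu_0/(8K)$.

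Finally, $K=2L(n+m)$ is polynomial and $\mu_0=1/\mathrm{poly}(n,L)$, so $\mu_0/(8K)=1/\mathrm{poly}(n,L)$.

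There is no serious obstacle. The only care points are using the coordinatewise bound from Lemma~\ref{lem:hessian} correctly, so that $\norm{\nabla\cL_O}^2\le 4K$ rather than $2K$ or $2$, and checking that the denominator $4K-\mu_0/2$ is positive.
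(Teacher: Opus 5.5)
Your proposal is correct and follows the paper's proof: the same reverse-Markov bound, combining the pointwise bound $X\le 4K$ (from $|\partial_j\cL_O|\le 2$) with the mean bound $\mu\ge\mu_0$ at threshold $t=\mu_0/2$. Two small differences: you split over $\{X\ge t\}$ rather than $\{X>t\}$, which yields $\nu(\cS)$ directly, and you explicitly check that $4K-\mu_0/2>0$, which the paper leaves implicit.
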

\begin{proof}
Write $X:=\norm{\nabla\cL_O(\btheta)}^2$ with $\btheta\sim\nu$. By Lemma~\ref{lem:hessian}, $|\partial_j\cL_O|\le 2$, so $0\le X = \sum_{j=1}^K (\partial_j\cL_O)^2 \le 4K =: B$. For any $0\le t<\mu$, splitting the mean over $\{X\le t\}$ and $\{X>t\}$ and bounding $X$ by $t$ and $B$ respectively,
\begin{equation*}
    \mu = \mathbb{E}[X\,\mathbbm{1}_{X\le t}] + \mathbb{E}[X\,\mathbbm{1}_{X>t}] \le t + (B-t)\,\mathbb{P}(X>t),
\end{equation*}
which is the reverse-Markov bound $\mathbb{P}(X>t)\ge (\mu-t)/(B-t)$. Taking $t=\mu_0/2$ and using $\mu\ge\mu_0$ gives $\nu(\cS)\ge\mathbb{P}(X>\mu_0/2)\ge (\mu_0/2)/(B-\mu_0/2)\ge \mu_0/(2B)=\mu_0/(8K)$.
\end{proof}
 
\begin{proposition}[Coverage implies trainability in the spread regime]\label{prop:coverage-bound}
With $\nu$, $\mu_0$, and $\cS$ as in Lemma~\ref{lem:coverage}, suppose a prior $r$ and generator $f_\bw$ produce a pushforward satisfying $p_\btheta(\cS)\ge q$. Then
\begin{equation}
    \mathbb{E}_{\bz\sim r}\bigl[\norm{\nabla_\btheta\cL_O(f_\bw(\bz))}^2\bigr] \;\ge\; \frac{\mu_0}{2}\,q.
\end{equation}
In particular, any LPQC whose pushforward covers $\cS$ with mass $q=1/\mathrm{poly}(n,L)$ is free of barren plateaus, irrespective of how the remaining $1-q$ of the mass is distributed---so the model may remain arbitrarily spread out, and hence expressive, away from $\cS$.
\end{proposition}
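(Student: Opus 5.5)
The statement is a direct combination of Lemma~\ref{lem:overlap} with the specific threshold fixed in Lemma~\ref{lem:coverage}, so the plan is short.

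First, I identify $\cS = \cS_{\mu_0/2}$ as the gradient-rich set $\cS_c$ with threshold $c = \mu_0/2$. Since $\mu_0 = 1/\mathrm{poly}(n,L) > 0$, we have $c>0$ and Lemma~\ref{lem:overlap} applies. It gives
\[
\mathbb{E}_{\bz\sim r}\bigl[\norm{\nabla_\btheta\cL_O(f_\bw(\bz))}^2\bigr] \ge \tfrac{\mu_0}{2}\, p_\btheta(\cS).
\]

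Second, I insert the hypothesis $p_\btheta(\cS)\ge q$. Because $c>0$, the right-hand side is monotone in the overlap, which yields $\tfrac{\mu_0}{2}q$.

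For the ``in particular'' clause, I take $q = 1/\mathrm{poly}(n,L)$ and note that $\mu_0 = 1/\mathrm{poly}(n,L)$. The product of two inverse polynomials is again inverse polynomial, so the expected squared gradient norm is at least $1/\mathrm{poly}(n,L)$. The behaviour of the remaining mass is irrelevant because Lemma~\ref{lem:overlap} simply discards the contribution from outside $\cS$. That contribution is nonnegative, so no assumption on it is needed.

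There is no real obstacle here. The only care needed is measurability of $\cS$, so that $p_\btheta(\cS)$ is defined. This holds because $\btheta\mapsto\norm{\nabla\cL_O(\btheta)}^2$ is continuous: by Lemma~\ref{lem:hessian} it is smooth with a Lipschitz gradient. Hence $\cS$ is closed and therefore Borel. I would also remark that Lemma~\ref{lem:coverage} is used only to guarantee that a nontrivial $q$ is achievable (for instance $q \ge \mu_0/(8K)$ when $p_\btheta=\nu$). The bound itself does not depend on it.
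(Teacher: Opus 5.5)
Your proposal is correct and matches the paper's proof, which likewise just applies Lemma~\ref{lem:overlap} with $c=\mu_0/2$ and then inserts the hypothesis $p_\btheta(\cS)\ge q$. Your remarks on the measurability of $\cS$ and on the product of inverse polynomials are harmless additions that the paper leaves implicit.
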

\begin{proof}
Apply Lemma~\ref{lem:overlap} with $c=\mu_0/2$ and the hypothesis $p_\btheta(\cS)\ge q$.
\end{proof}
 
Unlike Proposition~\ref{prop:concentrated}, this bound is genuinely prior-dependent: it is driven by where the prior places mass, not by collapsing the pushforward. It remains to exhibit a prior that provably attains $q=1/\mathrm{poly}$. The obstruction is that $\cS$ depends on the unknown target through $\cL_O$; we circumvent it by reserving one mode of the prior as a target-agnostic ``warm start'' whose coverage of $\cS$ is inherited directly from Lemma~\ref{lem:coverage}.
 
\begin{corollary}[Warm-start prior mode]\label{cor:warmstart}
Consider a multimodal prior---equivalently, an MoE generator---whose pushforward is a mixture $p_\btheta = \lambda\,\nu + (1-\lambda)\,\tilde p$ with mixing weight $\lambda$, in which one component is the Gauss-linear branch of the Remark following Corollary~\ref{cor:bp-free} that reproduces $\nu$ exactly (latent dimension $\ge K$ for that branch), and $\tilde p$ is an arbitrary, trainable mixture of the remaining components. Then $p_\btheta(\cS)\ge\lambda\,\nu(\cS)\ge \lambda\mu_0/(8K)$, and by Proposition~\ref{prop:coverage-bound},
\begin{align}
    \mathbb{E}_{\bz\sim r}\bigl[\norm{\nabla_\btheta\cL_O}^2\bigr] \;\ge\; \frac{\lambda\,\mu_0^2}{16K} \;=\; \frac{1}{\mathrm{poly}(n,L)} \nonumber\\ 
    \text{for } \lambda=\Omega\bigl(1/\mathrm{poly}(n,L)\bigr).    
\end{align}

The warm-start mode thus secures trainability target-agnostically, while the remaining components are free to specialize for expressivity.
\end{corollary}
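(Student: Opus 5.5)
The plan is to derive Corollary~\ref{cor:warmstart} from Lemma~\ref{lem:coverage} and Proposition~\ref{prop:coverage-bound}. The only new ingredient is a mass lower bound for the mixture pushforward on the gradient-rich set $\cS=\cS_{\mu_0/2}$.

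\textbf{Step 1: the mixture representation.} Using the Remark following Corollary~\ref{cor:bp-free}, I will show that the designated branch realizes $\nu$ exactly. Take latent dimension $d\ge K$ for that branch, a Gaussian prior mode $\cN(\boldsymbol 0,\mathbf I_d)$, and an affine map $\btheta=\bW\bz$ with $\bb=\boldsymbol 0$ and $\bW\bW^\top=\gamma_0^2\mathbf I_K$, for instance $\bW=\gamma_0[\mathbf I_K\;\, \boldsymbol 0]$. The pushforward of that mode is then exactly $\nu=\cN(\boldsymbol 0,\gamma_0^2 I_K)$, because linear images of Gaussians are Gaussian with covariance $\bW\bW^\top$. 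The prior is a mixture $r=\sum_i c_i p_i$, and pushforward is linear in the measure. Hence $(f_\bw)_\#r=\lambda\nu+(1-\lambda)\tilde p$ with $\lambda$ equal to that mode's weight, where $\tilde p$ is the normalized pushforward of the remaining modes. In the MoE reading, the gating weight plays the role of $\lambda$.

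This step needs one caveat. A mixture prior with modes living in latent spaces of different dimensions, or modes sharing a single generator, must be interpreted so that the branch assigned to the warm-start mode acts as the affine map on its support. I would make this precise by letting the branch index be part of the latent variable, or by taking the mode supports disjoint so that one generator can be piecewise affine. This bookkeeping is the main place where care is needed. The hypothesis of the corollary essentially grants it, however, so I would state it as the modelling assumption rather than prove it.

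\textbf{Step 2: the mass bound.} Since $\tilde p\ge 0$,
\[
p_\btheta(\cS)=\lambda\nu(\cS)+(1-\lambda)\tilde p(\cS)\ge\lambda\nu(\cS)\ge\lambda\mu_0/(8K),
\]
where the last inequality is Lemma~\ref{lem:coverage}.

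\textbf{Step 3: conclude.} Apply Proposition~\ref{prop:coverage-bound} with $q=\lambda\mu_0/(8K)$. This gives $\mathbb{E}\norm{\nabla\cL_O}^2\ge(\mu_0/2)\,q=\lambda\mu_0^2/(16K)$. Finally, $\mu_0=1/\mathrm{poly}(n,L)$, $K=2L(n+m)$, and $\lambda=\Omega(1/\mathrm{poly})$, so the product is $1/\mathrm{poly}(n,L)$.

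The expectation over $\bz\sim r$ equals the expectation over $p_\btheta$ by the change of variables already used in Lemma~\ref{lem:overlap}. That equality holds regardless of $\tilde p$, which justifies the claim that the remaining components are unconstrained.
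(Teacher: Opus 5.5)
Your proposal is correct and follows the paper's argument: bound $p_\btheta(\cS)\ge\lambda\,\nu(\cS)$ using $\tilde p(\cS)\ge 0$, apply Lemma~\ref{lem:coverage}, and substitute $q=\lambda\mu_0/(8K)$ into Proposition~\ref{prop:coverage-bound}. Your Step~1 realizability bookkeeping goes beyond the paper, which simply assumes the mixture form of $p_\btheta$; one caution there is that a soft-gated MoE mixes parameter vectors rather than measures, so the remark that ``the gating weight plays the role of $\lambda$'' holds only under hard, branch-indexed gating.
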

 
This furnishes a theoretical rationale for the architectural choices of the main text: a multimodal prior or a mixture-of-experts generator can dedicate a small fraction of its capacity to maintaining gradient signal while the remaining experts model the target distribution---consistent with the empirical advantage of the MoE configuration in Fig.~\ref{fig:QM9:modes-experts}.
 

\subsection{What remains open}

The results above establish that the LPQC hypothesis class provably contains barren-plateau-free configurations realized by polynomial-size classical generators [(i)--(ii)], and that the meta-landscape is non-flat there [(iii) at initialization]. Two gaps separate this from a complete theory of the empirical behavior in Fig.~\ref{fig:gradnorm:qubits}. First, the guarantee is at (suitably chosen) initialization; extending the lower bound along the entire training trajectory would require controlling how gradient descent on $\bw$ moves the image manifold $\cM_\bw$, which is not currently available. Second, the trained LPQCs in our experiments are not concentrated in a small ball---their pushforwards have $O(1)$ spread (cf.\ the Gauss-tanh curves in Fig.~\ref{fig:gradnorm:qubits})---yet they retain gradient norms one to two orders of magnitude above the broad-measure baselines. Explaining this non-perturbative regime presumably requires connecting the geometry of $\cM_\bw$ to the irreducible-block structure of the DLA~\cite{ragone:2024:lie:bp}, for which the geometric tools of Ref.~\cite{yi:2025:geom:opt:Lie} appear to be the natural starting point. We leave these two extensions to future work, while noting that the empirical evidence in Sec.~\ref{sec:demo} is consistent with the manifold mechanism remaining operative well beyond the perturbative ball regime analyzed here.


\section{Additional Results}\label{appx:results}

\begin{figure*}
  \centering
  \includegraphics[width=2\columnwidth]{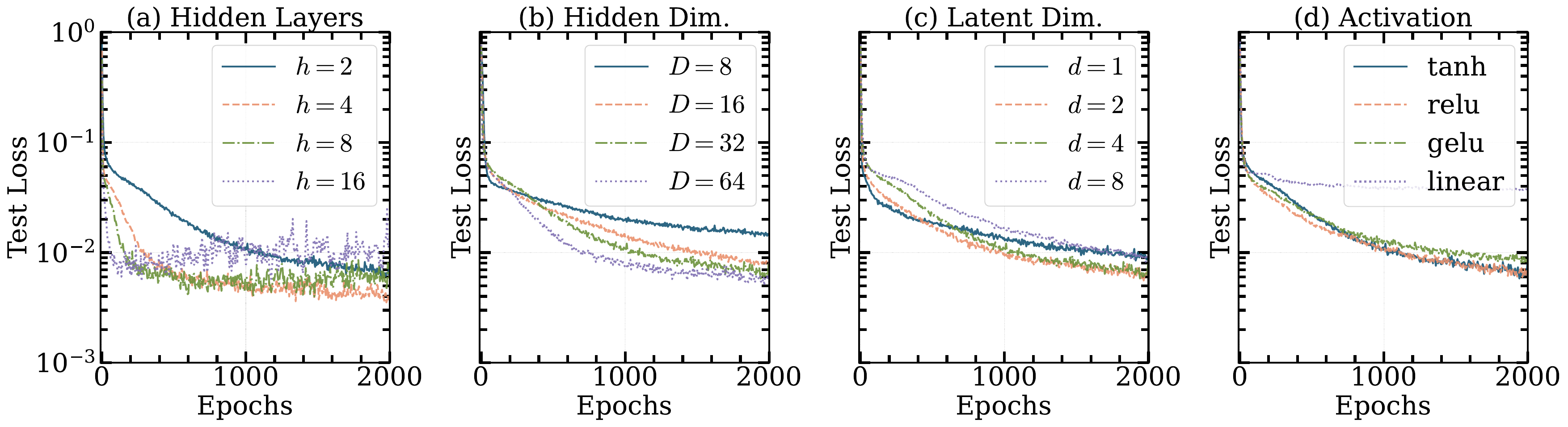}
  \caption{The performance of LPQC models in learning the distribution of multi-clustered density matrices ($n=4$ data qubits) across various configurations of the MLPs $f(\bz; \bw)$ with $h$ hidden layers, $D$ hidden dimension, $d$ latent dimension, and types of the activation function. Results are averaged across 10 experimental trials per configuration. (a) Vary hidden layers $h$ ($D=32$, $d=2$, tanh), (b) Vary hidden dimensions $D$ ($h=2$, $d=4$, tanh),
(c) Vary latent dimensions $d$ ($h=2$, $D=32$, tanh), (d) Vary activation functions ($h=2$, $D=32$, $d=4$). Here, we consider the single-mode Gaussian distribution for $r(\bz)$.}
  \label{fig:compare:MLP}
\end{figure*}

\begin{figure*}
  \centering
  \includegraphics[width=2\columnwidth]{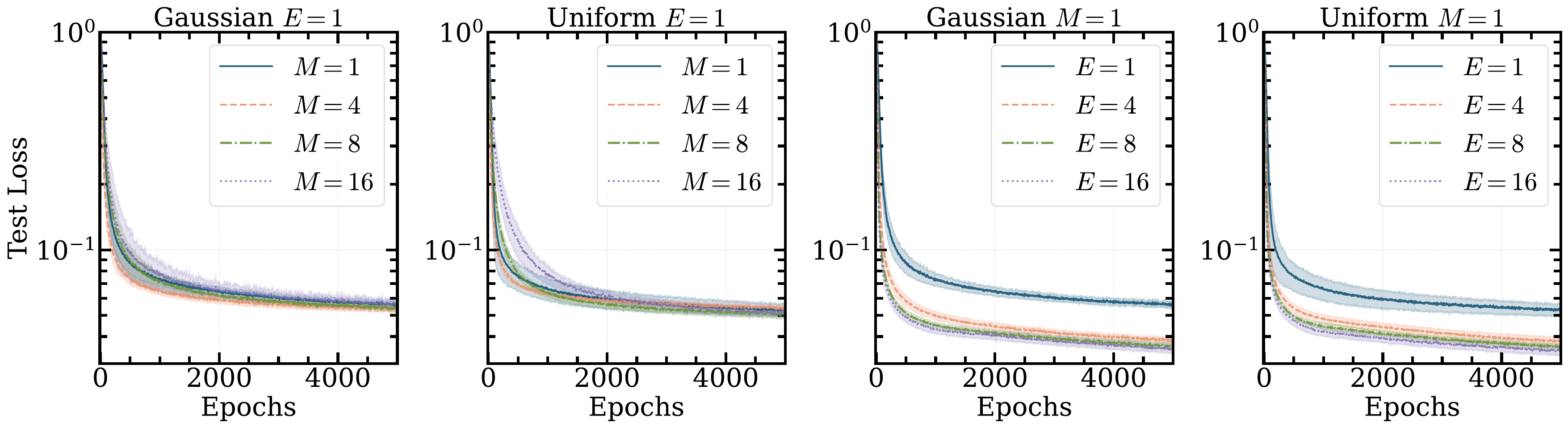}
  \caption{The performance of LPQC in learning the distribution of a QM9-derived distribution (7 heavy atoms and 2 rings in each molecule) across various configurations ($M$ prior modes and $E$ experts)  with $L=10$ PQC layers. Results are averaged across 10 experimental trials per configuration, with solid lines representing the mean and shaded regions indicating the standard deviation.}
  \label{fig:QM9:modes-experts}
\end{figure*}

We present several additional results alongside those in the main text.

Figure~\ref{fig:compare:MLP} illustrates the performance of LPQC models in learning distributions over multi-clustered density matrices, under varying MLP configurations for $f(\bz; \bw)$: number of hidden layers $h$, hidden dimension $D$, latent dimension $d$, and activation functions (tanh, ReLU, and GELU). While activation choices have minimal impact on performance, tuning $h$, $D$, and $d$ plays a pivotal role in enhancing approximation quality. This highlights the importance of an optimal mapping from the prior distribution $r(\bz)$ to the PQC parameter distribution $p(\btheta)$, motivating the use of normalizing flows as a hybrid mechanism to more effectively transform classical latent distributions into quantum ensembles over density matrices.

Figure~\ref{fig:QM9:modes-experts} illustrates the performance of LPQC models in learning the QM9-derived mixed-state distribution (molecules with 7 heavy atoms and 2 rings, as described in the main text) under various configurations of prior modes $M$ and experts $E$, using $L=10$ PQC layers. The results show that increasing the number of fixed modes $M$ in the prior $r(\bz)$ yields only marginal improvements, whereas incorporating a MoE setup enhances performance by approximately 1.5x relative to the single-expert baseline. This disparity arises because the QM9 target distribution lacks a small, well-separated set of modes that align with the fixed, randomly initialized mixture components in the multimodal prior.
By contrast, MoEs dynamically route latent samples $\bz$ to specialized experts (parallel branches in the parameter mapping to $\btheta(\bz)$). This creates an effective, learnable multimodality: even with a unimodal prior, the attention mechanism partitions the latent space into soft regions based on training data, allowing each expert to focus on local structures.

\end{document}